\documentclass[acmsmall,screen,noacm, preprint]{acmart}
\setcopyright{none}                    % Removes ACM copyright box
\renewcommand\footnotetextcopyrightpermission[1]{} % Removes footnote clearance text
\AtBeginDocument{\pagestyle{plain}}
\makeatletter
\let\@runningfoot\@empty
\let\@firstfoot\@empty

\makeatother
\AtBeginDocument{%
  }

\usepackage{amsmath}
\usepackage{dsfont}
\usepackage{mathtools}
\usepackage{xcolor}
\usepackage{xspace}
\usepackage{xfrac}
\usepackage{stmaryrd}
\usepackage{enumerate}
\usepackage{algorithm}
\usepackage[noEnd=true, indLines=true]{algpseudocodex}
\algrenewcommand\algorithmicrequire{\textbf{Input:}}
\algrenewcommand\algorithmicensure{\textbf{Output:}}
\algnewcommand{\IfThen}[2]{% \IfThen{<if>}{<then>}
  \State \algorithmicif\ #1\ \algorithmicthen\ #2}
\usepackage{tikz}
\usetikzlibrary{calc,automata,positioning,arrows.meta,decorations}
\usetikzlibrary{decorations.pathreplacing}
\usepackage{wrapfig}
\usepackage{thmtools,thm-restate}

\newtheorem{assumption}{Assumption}
\newtheorem{remark}{Remark}
\newtheorem{problem}{Problem}

\newif\ifCOMMENTS
\COMMENTStrue   
\newcommand{\set}[1]{\left\lbrace #1\right\rbrace}

\newcommand{\pref}{\mathit{pref}}
\newcommand{\compl}[1]{\overline{#1}}
\newcommand{\bettereq}{\succeq}
\newcommand{\better}{\succ}
\renewcommand{\P}{\mathbb{P}}
\newcommand{\toas}{\xrightarrow{\text{a.s.}}}

\newcommand{\cyl}{\mathrm{Cyl}}

\mathchardef\mhyphen="2D % Define a "math hyphen"

\newcommand{\eve}{\mathsf{E}}
\newcommand{\adam}{\mathsf{A}}

\def\G{\mathcal{G}}

\newcommand{\V}{V}
\renewcommand{\v}{v}
\newcommand{\vinit}{v^0}
\newcommand{\VZ}{V_\eve}
\newcommand{\VO}{V_\adam}
\newcommand{\E}{E}

\newcommand{\pathsfin}{\mathit{Paths}_{\mathrm{fin}}}
\newcommand{\pathsinf}{\mathit{Paths}_{\mathrm{inf}}}

\newcommand{\lang}{\mathcal{L}}
\newcommand{\langZ}{\lang_\eve}
\newcommand{\langO}{\lang_\adam}

\newcommand{\pathZO}{\mathit{path}}
\newcommand{\last}{\mathit{last}}

\newcommand{\recurring}{\mathit{inf}}

\newcommand{\PZ}{Eve\xspace}
\newcommand{\PO}{Adam\xspace}

\newcommand{\pol}{\pi}
\newcommand{\polZ}{\pol_\eve}
\newcommand{\polZadaptive}{\polZ^{\mathrm{adapt}}}
\newcommand{\polO}{\pol_\adam}

\newcommand{\Pol}{\Pi}
\newcommand{\PolZ}{\Pol_\eve}
\newcommand{\PolZM}{\PolZ^{\mathrm{M}}}
\newcommand{\PolZFM}{\PolZ^{\mathrm{FM}}}
\newcommand{\PolO}{\Pol_\adam}

\newcommand{\BR}{\Gamma}
\newcommand{\static}{\mathtt{StrLib}}
\newcommand{\rank}{\mathtt{rank}}
\newcommand{\hittingtime}{\mathtt{HittingTime}}

\newcommand{\spec}{\varphi}
\newcommand{\reach}{\mathit{Reach}}
\newcommand{\safe}{\mathit{Safe}}
\newcommand{\buchi}{\mathit{B\ddot{u}chi}}
\newcommand{\cobuchi}{\mathit{Co\mhyphen B\ddot{u}chi}}
\newcommand{\gbuchi}{\mathit{GB\ddot{u}chi}}
\newcommand{\grabin}{\mathit{GRabin}}
\newcommand{\gr}[2]{\mathit{GR1}(#1\triangleright #2)}

\newcommand{\Aset}{\mathbf{A}}
\newcommand{\Gset}{\mathbf{G}}
\newcommand{\Bset}{\mathbf{B}}

\newcommand{\Asetfin}{\Aset_{\mathsf{fin}}}
\newcommand{\Asetinf}{\Aset_{\mathsf{inf}}}

\newcommand{\Gsetinf}{\Gset_{\mathsf{inf}}}

\newcommand{\M}{\mathcal{M}}

\newcommand{\Synt}{\mathsf{Solve\_Obliging\_Game}} 

\begin{document}

%%
%% The "title" command has an optional parameter,
%% allowing the author to define a "short title" to be used in page headers.
\title{Adaptive Strategies for GR(1) Games}

\author{S.\ Krishna}
\affiliation{%
 \institution{IIT Bombay}
 \city{Mumbai}
 \country{India}}
\email{krishnas@cse.iitb.ac.in}

\author{Kaushik Mallik}
\affiliation{%
 \institution{IMDEA Software Institute}
 \city{Madrid}
 \country{Spain}
}
\email{kaushik.mallik@imdea.org}

\author{Abhilasha Sharma Suman}
\affiliation{%
\institution{IIT Bombay}
\city{Mumbai}
\country{India}}
\email{23b1011@iitb.ac.in}

%%
%% By default, the full list of authors will be used in the page
%% headers. Often, this list is too long, and will overlap
%% other information printed in the page headers. This command allows
%% the author to define a more concise list
%% of authors' names for this purpose.
%\renewcommand{\shortauthors}{Trovato et al.}

%%
%% The abstract is a short summary of the work to be presented in the
%% article.
\begin{abstract}
We consider two-player GR(1) games on graphs, where the system player \PZ must satisfy the specification
\[
\Box\Diamond A_1\land\cdots\land\Box\Diamond A_m
\;\implies\;
\Box\Diamond G_1\land\cdots\land\Box\Diamond G_n
\]
against every strategy of the environment player \PO. Here $A_1,\ldots,A_m$ are \textit{assumptions} on the environment, $G_1,\ldots,G_n$ are \textit{guarantees} the system must provide, and $\Box\Diamond S$ denotes ``always eventually $S$'' in linear temporal logic.
Traditional approaches compute \textit{static} strategies that satisfy the specification against all possible adversarial moves.
Such strategies are overly conservative: they may ``cheat'' by actively
violating assumptions to trivially satisfy the implication, or abandon all
guarantees the moment any assumption is violated.
Existing methods to prevent such behaviors incur a doubly exponential
blowup in time and strategy complexity.

We introduce an \textit{adaptive} framework in which \PO is treated as a non-adversarial agent with unknown objectives that may violate some assumptions.
\PZ gives \PO the opportunity to fulfill all assumptions, but continuously monitors which assumptions \PO actually meets and adapts her strategy at runtime to maximize the number of satisfied guarantees.
Central to our approach is a novel algorithm for monitoring liveness properties of the form $\Box\Diamond S$.
Under a suitable model of non-adversarial behavior, this monitoring algorithm enables \PZ to maintain real-time likelihood estimates of which assumptions will be fulfilled.
Building on these estimates, we develop a probabilistic adaptation procedure: \PZ pre-computes a family of strategies, each optimal for a different subset of assumptions, and deploys a probability distribution over them that is dynamically adjusted based on the monitors' outputs.
We prove that when some assumptions are violated, \PZ's randomized adaptive strategy converges asymptotically to the deterministic strategy that fulfills the maximum number of guarantees.
The convergence proof relies on a detailed analysis of the time-varying probabilities governing the adaptive strategy.
Using a prototype implementation, we demonstrate that our approach scales better than state-of-the-art methods---completing where the baseline times out---while correctly adapting to the assumptions actually satisfied by the environment.
\end{abstract}

%%
%% The code below is generated by the tool at http://dl.acm.org/ccs.cfm.
%% Please copy and paste the code instead of the example below.
%%
\begin{CCSXML}
<ccs2012>
   <concept>
       <concept_id>10003752.10003766.10003770</concept_id>
       <concept_desc>Theory of computation~Automata over infinite objects</concept_desc>
       <concept_significance>300</concept_significance>
       </concept>
   <concept>
       <concept_id>10011007.10011074.10011099.10011692</concept_id>
       <concept_desc>Software and its engineering~Formal software verification</concept_desc>
       <concept_significance>300</concept_significance>
       </concept>
 </ccs2012>
\end{CCSXML}

\ccsdesc[300]{Theory of computation~Automata over infinite objects}
\ccsdesc[300]{Software and its engineering~Formal software verification}

%%
%% Keywords. The author(s) should pick words that accurately describe
%% the work being presented. Separate the keywords with commas.
\keywords{Graph games, GR(1) synthesis, adaptive strategies, liveness monitoring.}

\received{20 February 2007}
\received[revised]{12 March 2009}
\received[accepted]{5 June 2009}

%%
%% This command processes the author and affiliation and title
%% information and builds the first part of the formatted document.
\maketitle
\thispagestyle{empty}   % arxiv

%!TEX root=main.tex

\section{Introduction}
\label{sec:intro}

Graph games are a core algorithmic problem in formal methods with
a wide range of applications, including automated synthesis of reactive
hardware, software, and cyber-physical systems.
In the turn-based variety, two players, called \PZ and \PO, take
turns moving a token along the edges of a given graph.
\PZ wins if the resulting path fulfills a given temporal
specification, typically expressed in linear temporal logic (LTL) or as an
$\omega$-automaton.
The goal is to compute a winning strategy for \PZ that guarantees victory
against every strategy of \PO.
In applications such as reactive program synthesis, \PZ represents the
program, \PO represents the environment, the game graph captures all
possible interactions between them, and every winning strategy of \PZ
corresponds to a provably correct implementation that satisfies the given
requirements under all possible environment behaviors.

We consider games with generalized reactivity~(1)
specifications~\cite{bloem2012synthesis}, better known as GR(1), where \PZ's
specification is drawn from the following fragment of LTL:
\begin{equation}\label{eq:GR1 formula}
\Box\Diamond A_1\land\ldots\land\Box\Diamond A_m \implies 
\Box\Diamond G_1\land\ldots\land\Box\Diamond G_n,
\end{equation}
where $A_1,\ldots,A_m$ are subsets of vertices known as the
\textit{assumptions}, $G_1,\ldots,G_n$ are subsets of vertices known as the
\textit{guarantees}, and ``$\Box\Diamond S$'' for a given set $S$ of vertices
represents the temporal property ``always eventually $S$ is visited.'' 
Typically, the assumptions specify some restrictions on \PO's moves, and the
GR(1) specification requires \PZ to fulfill the guarantees when the
assumptions are met.
We describe GR(1) games using our first running example.

\smallskip
\noindent\textbf{Example I: Vacuum cleaning robot.}
Consider the interaction between a vacuum cleaning robot and its environment,
where the robot must visit all rooms in the apartment infinitely
often, provided all doors open infinitely often.
This can be modeled as a GR(1) game between the robot and the set of doors,
taking the roles of \PZ and \PO respectively, where each vertex of the game
graph represents the current location of the robot and the status of all doors,
the edges represent how these configurations can change, each assumption $A_i$
is the set of vertices where the $i$-th door is open, and each guarantee $G_i$
is the set of vertices where the robot is in the $i$-th room.

\smallskip 
We identify three major limitations of the standard setting of GR(1) games.

Firstly, the usual approach is to model \PO as an adversary, which is a natural
artifact of the problem statement: find a strategy of \PZ that is winning
against \textit{every possible} strategy of \PO.
This is prohibitively conservative in most use cases.
For instance, in Example~I, an adversarial \PO would open a door when the
robot is far away and close it as soon as the robot tries to pass through.
This would make it impossible for \PZ to win, though due to unrealistic reasons.

Secondly, in the standard setting, if even a small subset of assumptions is
violated by \PO, the implication in Eqn.~\eqref{eq:GR1 formula} becomes true, and
\PZ wins automatically without fulfilling any guarantees.
In practice, we would like \PZ to fulfill as many guarantees as possible, when
some of the assumptions are unmet.
For instance, in Example~I, we would like the robot to visit as many rooms as
possible even when some doors remain permanently closed.
While this issue was considered by Bloem et al.~\cite{DBLP:conf/cav/BloemCGHJ10}
with the help of \textit{robust} strategies for \PZ, \PO was still modeled in
the adversarial fashion, and thereby could be overly conservative.
For Example~I, \PO could open a subset of doors \textit{and} use the
aforementioned adversarial strategy to prevent \PZ from fulfilling any guarantees.

Finally, in the standard setting, \PZ can ``cheat'' and win the game by
violating some of the assumptions herself, which is usually undesirable.
For instance, the robot in Example~I might just stand forever next to a
particular door blocking it to ever be opened, thereby violating the assumptions
and winning the GR(1) game.
This issue alone was resolved in the environmentally friendly GR(1)
synthesis framework~\cite{DBLP:conf/tacas/MajumdarPS19} or more generally in the
obliging games framework~\cite{chatterjee2010obliging}.
These works formalized \textit{graceful} strategies of \PZ that ensure
\PO always has a strategy to fulfill all assumptions from any history.
However, these works neither accommodate non-adversarial environments, nor
support robust solutions in the face of partial violations of the assumptions.
The na\"ive combination of the robust strategies with the obliging games
framework turns out to have doubly exponential time and strategy complexity in
the number of assumptions and guarantees, which is prohibitive in practical
applications.  

We present a framework that combines static reactive synthesis with dynamic adaptation.
A \emph{static} phase precomputes a library of optimal strategies, one for each possible subset of assumptions that \PO might fulfill; a \emph{dynamic} phase monitors the environment at runtime and adaptively mixes these strategies based on observed behavior.
This two-phase design rests on a key insight: in many applications---such as multi-agent systems---other players do not behave as pure adversaries; the adversarial assumption is merely an abstraction for unknown behavior.
Although computing a single robust strategy that supports every possible environment is intractable, maintaining a library of strategies and selecting among them online is feasible.
Indeed, our strategy-library construction enjoys exponentially smaller time and
strategy complexity compared to the na\"ive baseline (combining gracefulness
with robustness) described above.

Let us go into more details.
As the first step, we propose a clean, almost axiomatic definition of
non-adversarial behavior by \PO, which assumes:
(A)~the subset $\Asetinf$ of assumptions that \PO will fulfill infinitely often
does not depend on \PZ's strategy, as long as \PZ uses graceful strategies (allowing \PO to fulfill all assumptions); and
(B)~for every fixed strategy of \PZ and every assumption $A\in \Asetinf$, the
time between successive visits to $A$ is bounded.
In Example~I, this means the set $\Asetinf$ of doors that open infinitely often
does not depend on the robot's movement, as long as the robot uses graceful
strategies and does not block any doors. Moreover, for every fixed strategy of
the robot, the doors in $\Asetinf$ open within bounded intervals.
These restrictions rule out the adversarial behavior described earlier: \PO cannot strategically time door openings based on the robot's position.

%!TEX root=../main.tex
% Policy Pipeline Schematic Diagram

\begin{wrapfigure}{r}{0.6\textwidth}
\begin{tikzpicture}[
    scale=0.9,
    transform shape,
    % Styles
    block/.style={draw, rectangle, rounded corners=2pt, minimum width=1.1cm, minimum height=0.35cm, align=center, font=\scriptsize},
    monitor/.style={draw, rectangle, rounded corners=1pt, minimum width=0.6cm, minimum height=0.28cm, align=center, font=\tiny, fill=blue!10},
    library/.style={draw, rectangle, rounded corners=2pt, minimum width=1.2cm, minimum height=0.8cm, align=center, font=\scriptsize, fill=green!10},
    historynode/.style={circle, draw, minimum size=0.18cm, inner sep=0pt},
    arrow/.style={->, >=stealth, thin},
    phaselabel/.style={font=\footnotesize\bfseries},
]

% Dashed vertical line separating phases
\draw[dashed, thick, black!50] (-2, -2.6) -- (-2, 1.5);

% LEFT HALF: STATIC DESIGN PHASE
\node[phaselabel,align=center] at (-3.5, 1.55) {Static Design\\ (Section~\ref{sec:static design phase})};

% GR(1) spec input
\node[block, fill=gray!15] (spec) at (-3.5, 0.8) {$\gr{\Aset}{\Gset}$};

% Synthesis for all subsets
\node[block, fill=purple!15] (synthesis) at (-3.5, -0.2) {Synthesis\\[-0.5ex]$\forall \Bset\subseteq\Aset$};

% Policy library
\node[library] (library) at (-3.5, -1.2) {Optimal strategy library\\[0.05em]\tiny$\Bset\mapsto\static(\Bset)$};

% Arrows in static phase
\draw[arrow] (spec) -- (synthesis);
\draw[arrow] (synthesis) -- (library);

% RIGHT HALF: DYNAMIC ADAPTATION PHASE
\node[phaselabel,align=center] at (1.2, 1.55) {Dynamic Adaptation\\ (Section~\ref{sec:adaptive strategy})};

% HISTORY SEQUENCE AT BOTTOM
\node[historynode, fill=gray!10] (h1) at (-1.2, -1.9) {};
\node[historynode, fill=gray!20] (h2) at (-0.9, -1.9) {};
\node[historynode, fill=gray!35] (h3) at (-0.6, -1.9) {};
\node[historynode, fill=gray!50] (h4) at (-0.3, -1.9) {};
\node[historynode, fill=gray!70] (h5) at (0.0, -1.9) {};
\node[historynode, fill=gray!90, thick] (hcurr) at (0.3, -1.9) {};
\node[font=\tiny] at (0.3, -2.15) {$v_t$};
% \node[font=\tiny] at (0.0, -2.15) {$v_{t-1}$};
% \node[font=\tiny] at (-0.3, -2.15) {$v_{t-2}$};
% \node[font=\tiny] at (-0.6, -2.15) {$v_{t-3}$};
% \node[font=\tiny] at (-0.9, -2.15) {$v_{t-4}$};
\node[font=\tiny] at (-1.2, -2.15) {$v_{t-5}$};
\node[font=\tiny] at (-1.5, -1.9) {$\ldots$};

% Label for history
\node[font=\tiny, anchor=east] at (3, -2.6) {history $h_t$ (with $v_t\in \VZ$)};

% Brace under history
\draw[decorate, decoration={brace, amplitude=2pt, mirror}]
    (-1.6, -2.3) -- (0.4, -2.3);

% Dotted placeholder for next vertex
\node[historynode, draw=gray!50, fill=none] (hnext) at (0.6, -1.9) {};
\node[font=\tiny, gray] at (0.7, -2.15) {$v_{t+1}$};

% MONITORS - centered under aggregation
\node[monitor] (mon1) at (-1.15, -1.0) {$\M_1$};
\node[monitor] (mon2) at (-0.45, -1.0) {$\M_2$};
\node[font=\tiny] at (0.05, -1.0) {$\cdots$};
\node[monitor] (monm) at (0.55, -1.0) {$\M_m$};
\node[monitor, fill=purple!15] (monunion) at (1.35, -1.0) {$\M_{\cup}$};
\node[font=\tiny,align=center] at    (-1.6,-0.5)   {liveness\\ monitors};

% Arrow from current vertex to monitors
\draw[thin] (hcurr.north) -- ++(0, 0.25) coordinate (vbranch);
\draw[arrow] (vbranch) -| (mon1.south);
\draw[arrow] (vbranch) -| (mon2.south);
\draw[arrow] (vbranch) -| (monm.south);
\draw[arrow] (vbranch) -| (monunion.south);

% AGGREGATION - centered above monitors
\node[block, fill=orange!15, minimum width=1.3cm] (aggregation) at (-0.15, -0.1) {Aggregation};

% Arrows from monitors to aggregation - straight up
\draw[arrow] (mon1.north) -- ++(0, 0.15) -| ([xshift=-0.35cm]aggregation.south);
\draw[arrow] (mon2.north) -- ++(0, 0.1) -| ([xshift=-0.12cm]aggregation.south);
\draw[arrow] (monm.north) -- ++(0, 0.1) -| ([xshift=0.12cm]aggregation.south);
\draw[arrow] (monunion.north) -- ++(0, 0.15) -| ([xshift=0.35cm]aggregation.south);

% Probability distribution
\node[block, fill=yellow!20, minimum width=1.3cm,align=center] (probdist) at
(-0.15, 0.8) {Probability distribution over $2^{\Aset}$\\ $\forall \Bset\subseteq
\Aset\;.\;\text{calculate }p_{\Bset}$};
\draw[arrow] (aggregation) -- (probdist);

% Sample
\node[block, fill=red!10, minimum width=1.3cm,align=center] (sampler) at (3.3,
0.4) {Sample policy
$\polZ$ with\\ $\P[\polZ=\static(\Bset)]=p_{\Bset}$\\ for all $\Bset\subseteq \Aset$ };
\draw[arrow] (probdist) -- (sampler);

% % CONNECTION: Sample selects from library
% \draw[->, >=stealth, dashed, blue!70!black] (sampler.west) -- (library.north east);

% OUTPUT: Action block on the right
\node[block, fill=green!20] (action) at (3.3, -1) {Sampled policy $\polZ^*$};

% % Library to action
% \draw[arrow] (library.east) -- (action.west)
%     node[midway, above, font=\tiny] {$\static(\Bset)(h_t)$};

% Sampler to action
\draw[arrow]    (sampler)   --  (action);

% History feeds into sampled policy
\draw[arrow] (-0.45, -2.4) -- (-0.45, -2.75) -| ($(action.east)+(0.3,0)$)   -- (action.east);

% Action to next history node
\draw[arrow, gray] (action.south) |- (hnext.east);

\end{tikzpicture}
\caption{Adaptive strategy pipeline. \textbf{Left:} Static design synthesizes $\static(\Bset)$ for each $\Bset\subseteq\Aset$.
\textbf{Right:} At runtime, monitors estimate the likelihood of satisfaction of
each individual assumption, which are aggregated into the likelihood of
satisfaction of subsets of assumptions, which is finally turned into a
probability value $p_{\Bset}$ for each $\Bset\subseteq \Aset$; a sampled
$\polZ^*$ then observes the history to produce the next vertex $v_{t+1}$.}
\label{fig:adaptation schematic}
\end{wrapfigure}
Adaptive strategies for \PZ comprise several components spread across
offline and online phases; see Figure~\ref{fig:adaptation schematic}.
In the offline design phase, we create a \textit{static strategy library}
$\static$ for \PZ, which, for every subset $\Aset^*$ of assumptions $\Aset$,
stores a graceful strategy $\polZ^*=\static(\Aset^*)$ of \PZ that fulfills as
many guarantees as possible when only $\Aset^*$ is fulfilled.
For Example~I, for every subset of doors, $\static$ would store one graceful
strategy that visits every room accessible through those particular doors infinitely often.
% Simple examples show that the number of guarantees actually fulfilled by
% $\polZ^*$ against $\Aset^*$ depends on the current game state,
% but regardless, this number is maximal.
If \PZ knew which assumptions (a non-adversarial) \PO will fulfill, i.e., the
identity of $\Asetinf$, she could simply deploy the corresponding
optimal strategy $\static(\Asetinf)$ from the library. 

In the online deployment phase, the identity of $\Asetinf$ is \textit{learned}
from past observations, and the statically computed optimal
strategies are accordingly mixed using a time-varying probability distribution.
Assumption learning is facilitated by a novel monitoring algorithm for
liveness (B\"uchi) properties of the form ``$\Box\Diamond S$,'' where $S$ is a set of states.
It is well-known that liveness properties like these are not
monitorable~\cite{diekert2015note}, because given any history of observations,
there would be some continuations fulfilling the property and some violating it,
making it impossible to ever reach a conclusive verdict.
Our monitor emits a quantitative verdict in $(0,1)$ indicating the
likelihood that $\Box\Diamond S$ is satisfied.
The \textit{hitting time} of $S$ is the time between successive visits to $S$.
Under the assumption of a constant (unknown) bound on the hitting time,%
\footnote{A constant bound on the hitting time does not mean that $S$ needs to
be visited after a fixed (constant) interval. Rather, it means that there is a
constant $k$ such that every two successive visits to $S$ are \textit{at most} $k$ time steps apart.}
as time goes to $\infty$, the monitor's verdict converges to $1$ when the trace
satisfies $\Box\Diamond S$, and to $0$ otherwise.%
\footnote{In Appendix~\ref{sec:appendix:monitor generalization}, we show that
the claim holds for a significantly more general case, namely hitting times
bounded by any sub-exponential function.
However, for the main technical development, the constant hitting time assumption suffices.}
This is why we call our monitor \textit{asymptotic}; it is of independent
interest from a runtime verification perspective.

We use one asymptotic monitor $\M_i$ for each assumption $\Box\Diamond A_i$ in
$\Aset = \{A_1,\ldots,A_m\}$.
These monitors output the likelihood of the fulfillment of each assumption
separately, and these individual likelihood estimates get combined to form an
aggregated verdict $w_{\Aset^*}$ about how likely it is that a given
\textit{nonempty set} $\Aset^*$ of assumptions is actually the set $\Asetinf$ of
infinitely visited assumptions.
We also track the likelihood $w_\emptyset$ that $\Asetinf=\emptyset$, but for
that we need to use a separate monitor $\M_\cup$ for $\Box\Diamond \cup_{i\in
[1;m]} A_i$.
Next, these likelihood estimates for sets of assumptions are turned into a
probability distribution over the subsets of $\Aset$, where the relative
strengths of the estimates are proportional to the relative probability values.
For Example~I, for every given (empty or nonempty) subset of doors, we would
obtain a probability that those are the only doors that will open infinitely
often. 
As the verdicts of the monitors evolve after each step, so does this probability
distribution.

The obtained probability distribution then gives rise to a probabilistic
adaptive strategy for \PZ: at each step, \PZ samples a strategy from the library
according to the current distribution, and uses the sampled strategy to decide
the next action. 
For every arbitrary strategy of \PO, the adaptive strategy of \PZ gives rise to a Markov chain whose states represent the history of vertices and
the edge probabilities represent the sampling probability distribution over the static strategies.
We prove that, under the assumption that \PO is non-adversarial and as time goes
to infinity, the Markov chain almost surely concentrates on the component where
the static strategy $\static(\Asetinf)$ corresponding to the set $\Asetinf$ is
always sampled with probability $1$.

The elegance of our adaptive strategy is in its simplicity: both the monitor and
the probabilistic strategy are surprisingly simple to implement.
Yet, the convergence proof is highly nontrivial.
Part of the difficulty stems from a circular dependency between the monitor's
verdicts and the strategy output: the output of the adaptive strategy influences the
path taken on the game graph, which affects the frequency of the assumptions
visited along the path, which in turn changes the output of the monitor.
We break this circularity by establishing a lower bound on the probability of
selecting any given static strategy for a given number of steps in a row,
regardless of the history of vertices seen so far.
This result is at the heart of our technical proofs of convergence.

We implemented our algorithm in a prototype tool called \textit{Grace}, written in C++.
As baseline, we implement the straightforward combination of robust strategies~\cite{DBLP:conf/cav/BloemCGHJ10} with obliging games~\cite{chatterjee2010obliging}.
On a suite of benchmarks drawn from the literature, \textit{Grace} outperforms the baseline as soon as the number of assumptions and guarantees grows, consistent with the exponential savings predicted by our complexity analysis.
Moreover, while the adaptive strategy requires exponential time in theory to identify the true set $\Asetinf$ of fulfilled assumptions, in practice it converges within a few hundred time steps across all benchmarks.

We now introduce a second motivating example, which is slightly more technical
than Example~I, but will be used throughout the paper to precisely illustrate
different concepts.

\smallskip
\noindent\textbf{Example II: Repeated matching pennies.}
In this game, \PZ and \PO each have a penny, \PZ's heads and tails are called
$h_\eve$ and $t_\eve$, and the same for \PO are called $h_\adam$ and $t_\adam$,
respectively; throughout the paper, $\eve$ will indicate Eve and $\adam$ will
indicate Adam.
In each round of the game, first \PZ picks a side of her penny, which \PO
observes, and then either picks a side of his own penny, or withdraws from that
round, which is represented as the action $\bot$.
There is no restriction on how the players pick their sides---they may do so deterministically
or randomly.
The outcome of a round is a \textit{match} if \PO decides to play and both players
choose the same side; otherwise, it is not a match.
This finishes one round, and the game repeats indefinitely.
\PZ wins if she can \textit{guarantee} infinitely many matches, under the
\textit{assumption} that \PO picks both sides of his coin infinitely often.
The game graph is depicted in Figure~\ref{fig:matching pennies game graph}, and
\PZ's GR(1) specification is given as:
\begin{equation}\label{eq:matching pennies:GR(1) specification}
\Box\Diamond\, H_\adam \land \Box\Diamond\, T_\adam \implies \Box\Diamond\, \set{h_\eve h_\adam} \land \Box\Diamond\, \set{t_\eve t_\adam},
\end{equation}
where the assumptions $H_\adam \coloneqq \set{h_\eve h_\adam,t_\eve h_\adam}$ and $T_\adam \coloneqq
\set{h_\eve t_\adam,t_\eve t_\adam}$ represent the states where \PO has picked $h_\adam$ and $t_\adam$,
respectively, and guarantees $h_\eve h_\adam$ and $t_\eve t_\adam$ represent
matches.
\begin{wrapfigure}{r}{0.5\textwidth}
\centering

\centering
\begin{tikzpicture}[
    scale=0.8,
    transform shape,
    state/.style={circle, draw, minimum size=0.5cm, inner sep=1pt, font=\scriptsize},
    rectstate/.style={rectangle, draw, minimum size=0.5cm, inner sep=1pt, font=\scriptsize},
    every edge/.style={draw, ->, >=stealth, thin},
    node distance=0.8cm
]
    \node[state,initial above] (s) at (0,0) {$s$};
    \node[rectstate] (t0) [right=of s] {$t_\eve$};
    \node[rectstate] (h0) [left=of s] {$h_\eve$};
    \node[state] (t0h1) [above right=0.5cm and 0.6cm of t0] {$t_\eve h_\adam$};
    \node[state] (t0t1) [below right=0.5cm and 0.6cm of t0] {$t_\eve t_\adam$};
    \node[state] (h0h1) [above left=0.5cm and 0.6cm of h0] {$h_\eve h_\adam$};
    \node[state] (h0t1) [below left=0.5cm and 0.6cm of h0] {$h_\eve t_\adam$};
    \node[state] (bot) [below=of s] {$\bot$};

    \path[->]
        (s) edge (h0)
            edge (t0)
        (h0) edge (h0h1)
             edge (h0t1)
             edge (bot)
        (t0) edge (t0h1)
             edge (t0t1)
             edge (bot)
        (h0h1) edge (s)
        (h0t1) edge[bend right=15] (s)
        (t0h1) edge (s)
        (t0t1) edge[bend left=15] (s)
        (bot) edge (s);
\end{tikzpicture}
\caption{Game graph for repeated matching pennies. Circular vertices are controlled by \PZ; square vertices by \PO.}
\label{fig:matching pennies game graph}
\end{wrapfigure}

It is worthwhile to note that every strategy of \PZ in this game is graceful,
since she cannot prevent \PO from fulfilling the assumptions; this will allow us
to focus on the core idea of strategy adaptation.
Furthermore, the game is winning for \PZ, and every winning strategy of \PZ must
\textit{eventually always} use the following strategy:
keep picking $h_\eve $ until \PO picks $h_\adam$, then switch to $t_\eve $ and keep picking
$t_\eve $ until \PO picks $t_\adam$, then switch back to $h_\eve $ and keep picking $h_\eve $
until \PO picks $h_\adam$, and so on; let $\PolZ^{(1)}$ be the set of all \PZ
strategies that use this tactic.
Clearly, if \PO eventually stops picking one of the two sides, \PZ wins due to the violation of at least one of the assumptions.
On the other hand, if \PO picks both sides always eventually, \PZ will fulfill
both guarantees $(h_\eve ,h_\adam)$ and $(t_\eve ,t_\adam)$.
In other words, regardless of \PO's actions, the GR(1) specification will be
fulfilled.
It can be shown that strategies of \PZ that do not use this tactic will lose
the GR(1) game, and therefore are not optimal.
To see this, take any strategy of \PZ that always eventually switches from
$t_\eve $ to $h_\eve $ before \PO produces a $t_\adam$, and then switches back
before \PO produces $h_\adam$; let $\PolZ^{(2)}$ be the set of strategies of \PZ
implementing this tactic, and note that $\PolZ^{(2)}$ is the complement of $\PolZ^{(1)}$.
When \PZ uses a strategy in $\PolZ^{(2)}$, \PO can punish \PZ as follows: \PO
selects $t_\adam$ as soon as \PZ switches from $t_\eve$ to $h_\eve $, selects
$h_\adam$ as soon as \PZ switches from $h_\eve$ to $t_\eve $, and selects
$\bot$ if no change happened in the last two moves of \PZ.
Since policies in $\PolZ^{(2)}$ will always eventually make the switch
regardless of \PO's moves, the assumptions will be fulfilled but the guarantees
will be violated, spoiling the GR(1) specification.

A non-adversarial \PO must have decided a partition $\Asetinf$ and $\Asetfin$ of
the set $\Aset$ of all assumptions---regardless of \PZ's actions, and must make
sure that against every fixed strategy of \PZ, each assumption  $A\in\Asetinf$
is visited within a bounded delay from the previous visit.
For instance, a valid non-adversarial strategy of \PO eventually always
alternates between $\bot$ and $t_\adam$; let $\PolO^{(1)}$ denote strategies of \PO fulfilling this pattern.
Here, $\Asetinf = \set{T_\adam}$ and $\Asetfin=\set{H_\adam}$, and $T_\adam\in \Asetinf$ is
visited within every two steps---regardless of what \PZ picks.
On the other hand, the set of strategies of \PO that eventually always picks the
opposite of \PZ's choices---call it $\PolO^{(2)}$---does not meet our non-adversarial criterion:
$\Asetinf$ and $\Asetfin$ will depend on \PZ's strategies, namely
$\Asetinf=\set{H_\adam,T_\adam}$ and $\Asetfin=\emptyset$ when \PZ uses a strategy from
$\PolZ^{(2)}$, and either $\Asetinf=\set{T_\adam}$ and $\Asetfin=\set{H_\adam}$ or
$\Asetinf=\set{H_\adam}$ and $\Asetfin=\set{T_\adam}$ if \PZ uses a strategy from $\PolZ^{(1)}$. 
Furthermore, if \PZ uses a strategy $\polZ\in \PolZ^{(2)}$ such that the $i$-th
switch---either from $h_\eve $ to $t_\eve $ or from $t_\eve $ to $h_\eve $---happens at time
$2^i$, then \PO's response strategy will fulfill the assumptions after increasing
amounts of delay, which is not permitted according to our bounded delay
criterion of non-adversarial strategies of \PO.

Assuming \PO uses only non-adversarial strategies, our goal is to obtain a
strategy of \PZ that \textit{learns} which assumptions \PO will fulfill
via runtime monitoring, and accordingly \textit{adapts} her strategy to maximize the number of guarantees.
For example, if \PO uses a non-adversarial strategy $\polO^{*}$ fulfilling only $T_\adam$ but
violating $H_\adam$, i.e., $\Asetinf=\set{T_\adam}$ and $\Asetfin=\set{H_\adam}$, then \PZ
must learn $\Asetinf$ and $\Asetfin$, and adapt her strategy so that she only
picks $t_\eve $ in the long run.
(Note that, even if \PO fulfills $T_\adam$, he need not constantly pick
$t_\adam$ as he can also choose $\bot$ occasionally.)

\subsubsection*{Contributions}
The highlights of our contributions are as follows:
\begin{itemize}
\item A formal definition of non-adversarial environment behavior, and a
formalization of the problem of computing the optimal graceful strategies of \PZ
(Section~\ref{sec:practical relaxations}).

\item A novel \textit{asymptotic liveness monitor} for $\Box\Diamond S$ properties whose verdict converges to the true satisfaction value under mild hitting time assumptions (Section~\ref{sec:liveness monitors}).

\item An \textit{adaptive strategy framework} combining offline computation of a
graceful strategy library (Section~\ref{sec:static design phase}) with online
probabilistic mixing guided by monitor outputs (Section~\ref{sec:randomized mixing}).

\item A convergence proof showing that our adaptive strategy asymptotically
concentrates on the optimal deterministic strategy (Theorem~\ref{sec:proof of convergence}).

\item A prototype implementation \emph{Grace} showing exponential speedups over the baseline and fast convergence in practice (Section~\ref{sec:experiments}).
\end{itemize}

\subsection{Related Work}

GR(1) synthesis~\cite{bloem2012synthesis} provides an exponential improvement over classical LTL synthesis~\cite{pnueli1989synthesis}, and has been extended to richer specifications~\cite{ehlers2024fully,almagor2017quantitative}, infinite-state systems~\cite{maderbacher2026parameterized}, and asynchronous systems~\cite{klein2012effective}.
We build on the robust synthesis problem of Bloem et al.~\cite{DBLP:conf/cav/BloemCGHJ10}, but differ fundamentally: their work, along with others~\cite{bloem2009better,topcu2012synthesizing,ehlers2014resilience}, computes static strategies that make worst-case assumptions about the environment; in contrast, we introduce adaptive strategies that start optimistically and adapt based on observed runtime behavior, while being exponentially more efficient.
Works on environment-friendly strategies~\cite{bloem2015cooperative,DBLP:conf/tacas/MajumdarPS19,chatterjee2010obliging} do not support the robustness criterion.

We propose a novel framework for adaptive GR(1) strategies that adjust at runtime based on observed environment behavior.
While adaptive design is common in reinforcement learning~\cite{khetarpal2022towards} and adaptive control~\cite{aastrom1995adaptive}, its use in formal methods is rare.
Casting our problem as a learning problem would not provide the strong formal guarantees we achieve through verified static strategies and their adaptive mixing.

We also introduce a novel class of liveness monitors, which is of independent interest.
While monitors are integral to runtime verification~\cite{bartocci2018introduction,medhat2015runtime,grieser2020assuring,henzinger2023monitoring}, liveness properties are not monitorable~\cite{diekert2015note}.
Previous approaches approximated liveness via safety properties using domain-specific tricks~\cite{yao2024mostly}.
In contrast, our monitors' verdicts converge to the true satisfaction value under mild frequency assumptions that cannot be encoded by any safety relaxation.

Few works in formal methods consider strategies that adapt to non-adversarial environments.
Nayak et al.~\cite{nayak2022robustness} proposed adaptive strategies for rLTL objectives that adapt against sub-optimal environment actions, which differs from our adaptation against assumption violations.
Multi-environment MDPs~\cite{raskin2014multiple} learn the true environment model from a candidate set while fulfilling temporal objectives; our setting similarly learns which assumptions hold, but deploys corresponding optimal strategies.

Our technique of mixing strategies resembles other works, though context and analysis differ.
Anand et al.~\cite{anand2026follow} use strategy templates for liveness shields that probabilistically nudge systems towards target states.
Avni et al.~\cite{avni2026decoupled} blend single-objective strategies to achieve multiple objectives.
While these works also mix strategies, our setting differs and poses unique challenges like the circular dependency between the probability distribution over strategies and the frequency of assumption occurrences.

%!TEX root=main.tex

\section{Preliminaries}
\label{sec:prelims}

\noindent\textbf{Notation.}
Given an alphabet $\Sigma$, we will write $\Sigma^*$ and $\Sigma^\omega$ to
denote the set of all finite and infinite words over $\Sigma$, respectively, and
will write $\Sigma^\infty$ to denote the set $\Sigma^*\cup \Sigma^\omega$.

\smallskip
\noindent\textbf{Game graphs.}
A game graph $\G$ is a tuple $(\V,\VZ,\VO,\E)$, where 
$\V$ is a finite set of vertices,
$\VZ$ and $\VO$ form a partition over $\V$, i.e., $\VZ\cup \VO = \V$ and $\VZ\cap \VO = \emptyset$,
and $\E \subseteq \V\times \V$ is a set of (directed) edges.
We will interchangeably interpret $\E$ as follows: for every $\v\in \V$, $\E(\v) \coloneqq \set{\v'\in \V\mid (\v,\v')\in \E}$.
Moreover, we will assume that game graphs do not have dead-ends, i.e., $\E(\v)\neq \emptyset$ for every $\v$.

The game is played between two players, called \PZ and \PO, who \emph{own} the vertices in $\VZ$ and $\VO$, respectively.
Initially, a token is placed on one of the vertices, and at each subsequent
step, the owner of the current vertex moves the token to one of the successor vertices.
This gives rise to an infinite sequence of vertices, called a \textit{path}, and \PZ wins the game if this path belongs to a specified set of ``good'' paths.
Below, we formalize this mechanics of the game.

\smallskip
\noindent\textbf{Paths and strategies.}
A \emph{path} in the game graph $\G$ starting at a given initial vertex $\v^0\in
\V$ is a sequence $\v^0\ldots \v^k\in \V^\infty$, where for every $i\geq 0$,
$\v^{i+1}\in \E(\v^i)$.%
\footnote{For a variable $x$ with time-varying valuations, we write $x^t$ to
denote the value of $x$ at the time $t$, and write $(x^t)^k$ to denote the value
of $x^t$ raised to the power $k$. For a constant $c$, we simply write $c^k$ to
denote the value of $c$ raised to the power $k$. These usages will always be
clear from the context. The subscripts are reserved for associating some
identities to the symbols, like in $\VZ$, $\VO$, and different assumptions $A_1,\ldots,A_m$ in Eqn.~\eqref{eq:GR1 formula}.}
A path can be either finite or infinite, and we will write $\pathsfin(\G)$ and $\pathsinf(\G)$ to denote, respectively, the sets of all finite and infinite paths in $\G$.
The last vertex of every finite path $\rho = \v^0\ldots \v^k$ will be denoted as $\last(\rho)\coloneqq\v^k$.

A \emph{strategy} $\polZ$ of \PZ in the game graph $\G$ maps every finite path
ending at an \PZ-owned vertex $\v\in \VZ$ to one of the successors of $\v$;
formally, $\polZ\colon\V^*\VZ\cap \pathsfin(\G)\to \V$ with the constraint $\polZ(\v^0\ldots \v^k)\in \E(\v^k)$.
The strategy $\polZ$ is called \textit{memoryless} if it only depends on the
current vertex instead of the path seen so far, i.e., for every pair of finite
paths $\pi v,\pi' v\in \pathsfin(\G)$ ending at the same vertex $v$, $\polZ(\rho
v)=\polZ(\rho'v)$. 
%; in such cases we simply write $\polZ(v)$ to denote the proposed action at $v$.
The strategy $\polZ$ is called a \textit{finite}-memory strategy if it uses a
finite automaton to store $\polZ$.
Formally, for a given game graph $\G=(\V,\VZ,\VO,\E)$, a finite-memory strategy
for \PZ can be represented as a tuple $\Gamma=(Q,q^0,F,H)$, where $Q$ is a
finite set of memory states, $q^0$ is the initial state, $F\colon Q\times \V\to
Q$ is the memory update function, and $H\colon Q\times \VZ\to V$ is the
output function.
Every finite path $\rho=\v^0\ldots\v^t\in\pathsfin(\G)$ in $\G$ induces a
sequence of memory states $q^0\ldots q^t$ as follows: for every $i\in [0;t)$,
$q^{i+1} = F(q^i,\v^i)$.
Then, given $\rho$ and if $\v^t\in \VZ$, we obtain $\polZ(\rho) = H(q^t,v^t)$.
We will say \textit{the memory size of $\polZ$ is $|Q|$}.
An \textit{infinite}-memory strategy is one whose memory cannot be encoded using
a finite automaton as described above.
We will write $\PolZ$, $\PolZM$, and $\PolZFM$ to represent the sets of all,
memoryless, and finite-memory strategies of \PZ; clearly, $\PolZM\subset
\PolZFM\subset \PolZ$, and the set of infinite-memory strategies of \PZ is $\PolZ\setminus \PolZFM$.
Analogously, the strategy of \PO in $\G$ is a function $\polO\colon \V^*\VO\cap
\pathsfin(\G)\to \V$; the notions of memoryless, finite-memory, and
infinite-memory strategies extend naturally.

A path $\v^0\v^1\ldots\in \V^\infty$ is said to be \emph{compliant} with a given
strategy $\polZ$ of \PZ if for every $i\geq 0$ with $\v^i\in \VZ$, $\v^{i+1} = \polZ(\v^0\ldots \v^i)$; compliance of paths with strategies of \PO is defined analogously. 
We will write $\langZ(\G,\polZ,\v^0)$ and $\langO(\G,\polO,\v^0)$ to denote the sets of infinite paths in the game graph $\G$ that start at the vertex $\v^0$ and are compliant with the strategy $\polZ$ of \PZ and the strategy $\polO$ of \PO, respectively.
On the other hand, $\pathZO(\G,\polZ,\polO,\v^0)$ will denote the unique
infinite path compliant with both $\polZ$ and $\polO$ and starting at $\v^0$.
We will drop $\G$ as an argument of $\langZ,\langO$, and $\pathZO$ if it is
clear from the context.

Given a finite or infinite path $\rho$, we will say $u$ is a prefix of $\rho$ if
there exists a path $w$ such that $\rho = uw$, and $\pref(\rho)$ denotes the
set of all prefixes of $\rho$.
This is extended to sets of paths in the natural manner: for a given set $L$ of
paths, $\pref(L) \coloneqq \cup_{\rho\in L} \pref(\rho)$.

\smallskip
\noindent\textbf{Specifications.}
A \emph{specification} $\spec$ is a set of infinite paths in $\G$, and we will write $\compl{\spec}$ to denote the complement of $\spec$ in $\G$, i.e., $\compl{\spec} = \pathsinf(\G)\setminus \spec$.
We will consider the following types of specifications:
\begin{description}
	\item[Reachability.] Given $T\subseteq \V$, called the \emph{target vertices}, the reachability specification $\reach(T)$ is the set of all infinite paths that visit $T$ at least once;\\ formally, $\reach(T)\coloneqq \set{\v^0\v^1\ldots\in\pathsinf(\G)\mid \exists i \geq 0\;.\;\v^i\in T}$.
	\item[Safety.] Given $S\subseteq \V$, called the \emph{safe vertices}, the safety specification $\safe(S)$ is the set of all infinite paths that stay in $S$ forever; formally, $\safe(S)\coloneqq\set{\v^0\v^1\ldots\in\pathsinf(\G)\mid \forall i \geq 0\;.\;\v^i\in S}$.
	\item[B\"uchi.] Given $B\subseteq \V$, called the \emph{B\"uchi vertices}, the B\"uchi specification $\buchi(B)$ is the set of all infinite paths that visit $B$ infinitely many times;\\ formally, $\buchi(B)\coloneqq \set{\v^0\v^1\ldots\in\pathsinf(\G)\mid \forall i\geq 0\;.\;\exists j \geq i\;.\; \v^j\in B}$.
	\item[Co-B\"uchi.] Given $B\subseteq \V$, called the \emph{co-B\"uchi
	vertices}, the co-B\"uchi specification $\cobuchi(B)$ is the set of all
	infinite paths that visit $B$ only finitely many times; formally,
	$\cobuchi(B)\coloneqq \set{\v^0\v^1\ldots\in\pathsinf(\G)\mid \exists i\geq
	0\;.\;\forall j \geq i\;.\; \v^j\notin B}$.
	\item[Generalized B\"uchi.] Given $\Bset = \set{B_1,\ldots,B_p}$ with $B_i\subseteq \V$ for each $i\in [1;p]$, the generalized B\"uchi specification $\gbuchi(\Bset)$ is the set of all infinite paths that visit each $B_i$ set infinitely many times, i.e., $\gbuchi(\Bset)\coloneqq \bigcap_{i\in [1;p]} \buchi(B_i)$.
	\item[Generalized Rabin (single pair).] Given a pair $\left( \Bset =
	\set{B_1,\ldots,B_p},\mathbf{C}=\set{C_1,\ldots,C_q} \right)$, where
	$B_i\subseteq \V$ for each $i\in [1;p]$ and $C_j\subseteq \V$ for each $j\in
	[1;q]$, the (single pair) generalized Rabin specification
	$\grabin(\Bset,\mathbf{C})$ is the set of all infinite paths that visit
	every $B_i\in \Bset$ infinitely often and every $C_j$ only finitely often,
	i.e., $\grabin(\Bset,\mathbf{C})\coloneqq\gbuchi(\Bset)\cap \cap_{j\in [1;q]}\cobuchi(C_j)$.%
	\footnote{Our (single pair) generalized Rabin specification is actually a
	\textit{single generalized Rabin pair} like in the work of Chatterjee et
	al.~\cite{chatterjee2013automata}, which considered specifications
	consisting of disjunctions of multiple such generalized Rabin pairs.
	This should not be confused with the generalized Rabin condition of
	Ehlers~\cite{ehlers2011generalized}, which considers a set of assumptions
	and guarantees where each assumption and guarantee consists of a single
	\textit{regular} Rabin pair.}
	Since $\cap_{j\in [1;q]}\cobuchi(C_j) = \cobuchi(\cup_{j\in [1;q]} C_j)$, we
	could instead define $\grabin(\cdot,\cdot)$ to contain a single set of
	co-B\"uchi vertices in the second component. 
	While this would be semantically the same, it would slightly complicate the notation afterwards.
	\item[GR(1).] Given $\Aset = \set{A_1,\ldots,A_m}$ with $A_i\subseteq\V$ for each $i\in [1;m]$, called the (environment) \emph{assumptions}, and $\Gset = \set{G_1,\ldots,G_n}$ with $G_j\subseteq \V$ for each $j\in [1;n]$, called the (system) \emph{guarantees}, the GR(1) specification $\gr{\Aset}{\Gset}$ is the set of all infinite paths that either visits \emph{some} $A_i$ only \emph{finitely} often, or visits \emph{all} $G_j$ \emph{infinitely} often. 
	Formally, $
		\gr{\Aset}{\Gset}\coloneqq  \compl{\gbuchi(\Aset)} \cup \gbuchi(\Gset)
	$.
	By ``\PO meeting the assumptions'' or ``\PZ fulfilling the guarantees,''
	we mean $\gbuchi(\Aset)$ or $\gbuchi(\Gset)$ being satisfied, respectively.
	Simply put, GR(1) requires \PZ to fulfill all
	guarantees if \PO meets all assumptions.
\end{description}
Generally, a specification $\spec$ in $\G$ is called a \textit{liveness} specification if
every finite path in $\G$ can be extended to an infinite path belonging to
$\spec$; formally,  $\pref(\spec) = \pathsfin(\G)$.

Of special interest to us are paths that fulfill a partial set of assumptions and a partial set of guarantees, for a given GR(1) specification $\gr{\Aset}{\Gset}$.
As a convention, we will write $\Asetinf$ to denote the largest set of assumptions that are fulfilled by a given path $\rho$, while the complement $\Aset\setminus\Asetinf$ contains those that are violated; in other words, $\rho\in \grabin(\Asetinf,\Aset\setminus\Asetinf)$.
Similarly, we will write $\Gsetinf\subseteq \Gset$ to denote the largest set of guarantees that are fulfilled.

\smallskip
\noindent\textbf{Two-player games.}
A \textit{two-player game} is a tuple $(\G,\spec)$, where $\G$ is a game graph,
and $\spec$ is a specification.
In this game, a strategy $\polZ$ of \PZ is called a \emph{winning strategy} of
\PZ from a given initial vertex $\v\in \V$, written as $(\polZ,\v)\models
\spec$, if for every strategy of \PO, the generated infinite path belongs to $\spec$; in
other words, $\langZ(\polZ,\v)\subseteq \spec$.
The \emph{winning region} of \PZ is the set of all vertices from which she has a
winning strategy.
It is known that \PZ has a uniform winning strategy that is
winning from all vertices in the winning
region~\cite{zielonka1998infinite,piterman2006faster}; unless otherwise
mentioned, whenever we use winning strategies, they will refer to uniform
winning strategies.

\smallskip
\noindent\textbf{The obliging variant of two-player games.}
We consider a second variant of games, known as obliging games.
An \textit{obliging game}~\cite{chatterjee2010obliging} is a triple
$(\G,\spec,\psi)$, where $\G$ is a game graph, and $\spec$ and $\psi$ are two specifications, referred to as the \text{strong} and \text{weak} specifications, respectively.
\PZ's objective is to guarantee the fulfillment of $\spec$, while not
``obstructing'' \PO from fulfilling $\psi$.
Formally, in the obliging game, a strategy $\polZ$ of \PZ is called a \textit{winning
strategy} from the initial vertex $v^0\in \V$, if 
\begin{enumerate}[(i)]
	\item $\polZ$ is a winning strategy of \PZ in the ordinary game $(\G,\spec)$
	from the initial vertex $v^0$, and
	\item for every finite path $v^0\ldots v^k$ compliant with $\polZ$, there
	exists a strategy $\polO$ of \PO such that the infinite path starting at $v^k$
	and compliant with $\polZ$ and $\polO$ is in $\psi$, i.e., $\pref\left(\langZ(\polZ,\v^0)\right)\subseteq \pref\left( \langZ(\polZ,\v^0) \cap \psi \right)$.
\end{enumerate}
The concept of winning region is analogous to ordinary games. 

We present a basic fact about winning strategies in obliging games (can be
adapted for ordinary games too) that will be used later.
\begin{proposition}[Prefix independence of winning strategies]
	\label{prop:prefix-independence of graceful strategies}
	Let $(\G,\spec,\psi)$ be an obliging game, where $\spec$ and $\psi$ are
	liveness specifications.
	Suppose $\polZ$ is an arbitrary strategy of \PZ, and $\polZ'$ is a winning
	strategy of \PZ in $(\G,\spec,\psi)$.
	Let $t>0$ be a time step.
	Define the composite strategy $\polZ^+$ of \PZ that uses $\polZ$ for
	the first $t$ steps, and then switches to $\polZ'$; formally, for every
	$\rho = v^0\ldots v^k\in \pathsfin(\G)$ with $v^k\in \VZ$, if $k <t$ then
	$v^{k+1} = \polZ(\rho)$, else $v^{k+1} = \polZ'(\rho)$.
	If for every strategy of \PO, the resulting vertex $v^t$ reached at time $t$
	is in the winning region, then $\polZ^+$ is also winning.
\end{proposition}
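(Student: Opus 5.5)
The plan is to verify the two clauses of the obliging-game winning condition for $\polZ^+$ separately. In both, the idea is to split every compliant path at time $t$ into a finite prefix $v^0\ldots v^t$ and an infinite suffix that starts in the winning region and is compliant with $\polZ'$. Throughout, I will use that $\polZ'$ is a \emph{uniform} winning strategy, as fixed in the preliminaries. I read this as: for every history $\rho$ whose last vertex lies in the winning region, the continuation $\polZ'$ induces from $\rho$ is winning for the remaining play.

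\textbf{Key property of the specifications.} I will also use that, for the specifications we care about (B\"uchi, generalized B\"uchi, GR(1), and their combinations), membership of an infinite path in $\spec$ or $\psi$ depends only on its suffixes. So $\rho\in\spec$ iff $v^t v^{t+1}\ldots\in\spec$, and likewise for $\psi$. The statement phrases the hypothesis as ``liveness''. I expect the main obstacle to be making precise that this hypothesis is meant in the prefix-independent sense, i.e.\ that prepending a finite path never changes membership. If only $\pref(\spec)=\pathsfin(\G)$ were available, the suffix argument would not transfer, so I would state this reading explicitly at the start of the proof.

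\textbf{Clause (i).} Fix any strategy $\polO$ of \PO and let $\rho=\pathZO(\polZ^+,\polO,v^0)$. By hypothesis, $v^t$ lies in the winning region. From time $t$ on, every \PZ move of $\rho$ is chosen by $\polZ'$ applied to the full history. By uniformity of $\polZ'$, the suffix $v^tv^{t+1}\ldots$ is an outcome of a winning continuation from $v^t$ against the shifted \PO strategy $u\mapsto\polO(v^0\ldots v^{t-1}u)$, and hence lies in $\spec$. Prefix independence then gives $\rho\in\spec$. Since $\polO$ was arbitrary, $\langZ(\polZ^+,v^0)\subseteq\spec$.

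\textbf{Clause (ii).} Take a finite path $v^0\ldots v^k$ compliant with $\polZ^+$, and split into two cases.
\begin{itemize}
\item If $k\ge t$, the history has already switched to $\polZ'$. By the obliging property of $\polZ'$, applied to the compliant history in uniform form, there is a \PO strategy whose joint continuation with $\polZ'$ (equivalently with $\polZ^+$) lies in $\psi$.
\item If $k<t$, extend the history by an arbitrary \PO strategy up to time $t$. This yields a compliant history $v^0\ldots v^t$ with $v^t$ in the winning region, and the previous case applies to this extension.
\end{itemize}
In both cases, prefix independence of $\psi$ lets us discard the prefix. This shows $\pref(\langZ(\polZ^+,v^0))\subseteq\pref(\langZ(\polZ^+,v^0)\cap\psi)$, which completes the argument.
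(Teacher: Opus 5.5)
Your proposal is correct under the two hypotheses you make explicit, and it has the same skeleton as the paper's proof. You cut each play at time $t$, observe that from $v^t$ on the play lies in the winning region and is driven by $\polZ'$, and conclude that the first $t$ steps are irrelevant. The difference is what justifies that last step.

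The paper invokes liveness, $\pref(\spec)=\pathsfin(\G)$, and then appeals to ``the definition of winning region.'' Liveness only says that every finite path has \emph{some} extension in $\spec$ (or $\psi$). It does not say that the particular $\polZ^+$-compliant continuation lands there, nor that prepending $v^0\ldots v^{t-1}$ preserves membership. Your suspicion is well founded. Take a specification that demands $\buchi(\set{y})$ only of plays starting at a vertex $x$, with $\psi$ the set of all paths. It is live as soon as $y$ can always be revisited. Yet if \PO can avoid $y$ after leaving $x$, then $x$ is losing while all its successors are trivially winning, and the proposition fails with $t=1$.

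Replacing liveness by prefix independence, which every B\"uchi/GR(1)-type objective in the paper has, is exactly what makes clause~(i) work. Your padding argument for clause~(ii) when $k<t$ also covers a case the paper's proof does not mention. So your route yields an actual proof at the cost of restating the hypothesis, while the paper's gains only brevity.

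The one step to tighten is your reading of uniformity. The preliminaries only provide a strategy that wins from every \emph{vertex} of the winning region, i.e.\ on plays compliant with $\polZ'$ from their first vertex on. As $\polZ^+$ is formally defined, however, $\polZ'$ is applied to the full history $v^0\ldots v^k$, whose first $t$ moves were made by $\polZ$. A uniform winning strategy may act arbitrarily on histories that are not compliant with it. For example, a B\"uchi strategy can play correctly on its own plays but stop visiting the target once it sees an earlier deviation. That strategy is still uniformly winning, yet the composite strategy built from it loses.

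So ``the continuation from any history ending in the winning region is winning'' is an extra hypothesis, not a consequence of the preliminaries. It is a necessary one, since without it the statement fails even for prefix-independent objectives, and the paper's proof passes over this silently as well. Either state it as an assumption, or let $\polZ^+$ feed $\polZ'$ only the suffix $v^t\ldots v^k$. That ``reset the history'' version is the one the paper actually uses in the proof of Theorem~\ref{thm:static computation soundness}. With that change, ordinary uniformity plus prefix independence suffice, and your argument goes through unchanged.
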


\begin{proof}
	As $\spec$ and $\psi$ are both liveness specifications, every finite
	path in $\G$, including paths after $t$ time steps can be extended into
	paths that are in $\spec$ and $\psi$.
	Moreover, since the vertices $v^t$ obtained by applying arbitrary \PO
	strategies are in the winning region, it immediately follows (from
	the definition of winning region) that $\polZ^+$ is winning from the starting vertex $v^0$.
\end{proof}

% Although Proposition~\ref{prop:prefix-independence of graceful strategies} is a
% simple observation, it serves as the enabler of our adaptive strategy framework,
% where multiple different graceful strategies will be combined via random
% interleaving, and Proposition~\ref{}
%!TEX root=main.tex

\section{Gracefully Robust Strategies for GR(1) Specifications}
\label{sec:gracefully robust strategies}

Standard winning strategies for GR(1) games suffer from two limitations:
(A)~\PZ can actively violate assumptions, and
(B)~\PZ may abandon all guarantees if any assumption is violated.
Prior work addresses these separately: graceful strategies~\cite{DBLP:conf/tacas/MajumdarPS19,chatterjee2010obliging} address~(A),
robust strategies~\cite{DBLP:conf/cav/BloemCGHJ10} address~(B), but neither addresses both.

We formalize \emph{gracefully robust} strategies that address both limitations (Section~\ref{sec:formalizing graceful robustness}), show that their static computation is hard (Section~\ref{sec:hardness of graceful robustness}), and propose practical relaxations enabling the dynamic computation developed in the rest of the paper (Section~\ref{sec:practical relaxations}).

\subsection{Formalizing Gracefully Robust Strategies}
\label{sec:formalizing graceful robustness}

A strategy of \PZ is \emph{graceful} if it wins the GR(1) specification while allowing \PO to visit all assumptions infinitely often.
We formalize this using obliging games~\cite{chatterjee2010obliging} (introduced in Section~\ref{sec:prelims}), presenting a general definition where \PZ's objective is a fragment of the original specification.

\begin{definition}[Graceful strategies in GR(1) games]\label{def:graceful strategies}
	Consider the GR(1) game $(\G,\gr{\Aset}{\Gset})$ where $\G=(\V,\VZ,\VO,\E)$,
	let the initial vertex be $\v\in \V$, and let $\gr{\Aset'}{\Gset'}$ be a
	sub-specification of the original specification $\gr{\Aset}{\Gset}$ such
	that $\Aset'\subseteq \Aset$ and $\Gset'\subseteq \Gset$.
	A strategy $\polZ$ of \PZ that is winning for the sub-specification
	$\gr{\Aset'}{\Gset'}$ is called \emph{graceful} from $\v$, with respect to 
	$\Aset$, if it is winning for the obliging game $(\G,\gr{\Aset'}{\Gset'},\gbuchi(\Aset))$ from the initial vertex $\v$.
\end{definition}
When $\Aset'=\Aset$ and $\Gset'=\Gset$, our graceful strategies coincide with the \textit{environmentally friendly} strategies of Majumdar et al.~\cite{DBLP:conf/tacas/MajumdarPS19}.
The more general setting is needed for our adaptive strategies.

\begin{example}[Graceful strategies of \PZ]\label{ex:gracefulness}
	Consider the game in Figure~\ref{fig:GR(1) game:illustrating best-effort and
	limit-best-effort} with $\Aset'=\Aset=\set{A_1,A_2}$ and $\Gset'=\Gset=\set{G_1,G_2}$.
	The strategy $\pol_{\eve,a}$ choosing $0\to 6$ and eventually $8\to 10$ is
	\textit{graceless}: once vertex $10$ is reached, \PO can no longer fulfill
	$\buchi(A_1)$.

	The strategy $\pol_{\eve,b}$ choosing $0\to 6$ and $8\to 9$ is graceful: \PO can fulfill $\gbuchi(\set{A_1,A_2})$ by visiting vertices $7$ and $9$ infinitely often.

	Both $\pol_{\eve,a}$ and $\pol_{\eve,b}$ are GR(1) winning:
	for $\pol_{\eve,a}$, the game either stays at vertex $7$ or reaches $10$, violating an assumption in both cases;
	for $\pol_{\eve,b}$, if \PO visits $7\to 8$ and $9\to 6$ infinitely often, both assumptions and guarantees are met, otherwise some assumption is violated.

	Finally, a strategy eventually choosing $8\to 12$ is not GR(1) winning (both assumptions met but guarantees violated), hence not graceful (according to Definition~\ref{def:graceful strategies}) despite enabling \PO to fulfill all assumptions.
	\qed
\end{example}
	%!TEX root=../main.tex
% Terminally gracefully robust schematic diagram

\begin{wrapfigure}{r}{0.5\textwidth}
	\centering
	\tikzset{every state/.style={minimum size=20pt}}
	\begin{tikzpicture}[
		scale=0.7,
		transform shape,
		state/.style={circle, draw, minimum size=0.5cm, inner sep=1pt, font=\scriptsize},
		rectstate/.style={rectangle, draw, minimum size=0.5cm, inner sep=1pt, font=\scriptsize},
		every edge/.style={draw, ->, >=stealth, thin},
		node distance=0.8cm]
		\node[state,initial above]	(a)	at	(0,0)	{$0$};
		\node[state,rectangle]	(b)	[right=of a]	{$1$};
		\node[state,rectangle,label={left:{\color{red}$A_1$}}]	(c)	[above=of b]	{$2$};
		\node[state,rectangle,label={left:{\color{red}$A_2$}}]	(d)	[below=of b]	{$4$};
		\node[state,rectangle,label={right:{\color{blue}$G_1$}}]	(e)	[right=of c]	{$3$};
		\node[state,rectangle,label={right:{\color{blue}$G_2$}}]	(f)	[right=of d]	{$5$};
		
		\node[state,rectangle,label={above:{\color{blue}$G_1$}}]	(g)	[left=of a]		{$6$};
		\node[state,label={below:{\color{blue}$G_2$}}] (g1) [left=of g] {$8$};
         
		% \node[state,rectangle,label={right:{\color{blue}$G_1$}}]	(h)	[above=of g]	{$7$};
		\node[state,rectangle,label={above:{\color{red}$A_1$}}]	(i)	[above=of g1]		{$7$};
		\node[state,rectangle,label={above right:{\color{red}$A_2$}}]	(j)	[below=of g]	{$9$};
		% \node[state,rectangle, label={left:{\color{blue}$G_2$}}]	(k)	[left=of j]		{$9$};
		\node[state,rectangle,label={below:{\color{blue}$G_2$}}] (g2) [left=of g1] {$10$};
          \node[state,rectangle, label={above :{\color{red}$A_2$}}] (g3)
          [left=of i] {$11$};
          \node[state, rectangle,label={above:{\color{red}$A_1,A_2$}}] (k)
          [below=of g2] {$12$};

		\path[->]	(a)		edge	(b)
							edge	(g)
					(b)		edge	(c)
							edge	(d)
							% edge[loop right]	()
					(c)		edge	(e)
					(d)		edge	(f)
					(e)		edge	(b)
					(f)		edge	(b)
					(g)		edge	(i)
							% edge[loop left]	()
					% (h)		edge	(i)
					(i)		edge	[bend right=0](g1)
					         edge[loop right]	()
					(g1)		edge	(j)
					         edge (g2)
							 edge	(k)
					  (g2) edge[bend left](g3)
					  (g3) edge[bend left](g2)  
					  		edge[loop right]	()     
							(j)		edge	(g)
							edge[loop right]	()
					 (k)		edge[loop right]	();
	\end{tikzpicture}
	\caption{A game graph with the GR(1) specification $\gr{\set{A_1,A_2}}{\set{G_1,G_2}}$. The circle vertices belong to \PZ and the square ones belong to \PO.}
	\label{fig:GR(1) game:illustrating best-effort and limit-best-effort}
\end{wrapfigure}

We now define robustness in terms of fulfilled guarantees.
For finite sets $S$ and $S'$, write $S \sqsubseteq S'$ if $|S|\leq |S'|$, and use the strict variant ``$\sqsubset$'' and inverse variants ``$\sqsupseteq$'' and ``$\sqsupset$'' accordingly.
This induces a total order over subsets of $\Aset$ and $\Gset$, yielding a preference order over \PZ's strategies.

Given initial vertex $\v^0$, strategies $\pol_{\eve,1}$ and $\pol_{\eve,2}$ of \PZ, and a strategy $\polO$ of \PO:
$\pol_{\eve,1}$ is at least as good as $\pol_{\eve,2}$ if it meets at \textit{most} as many assumptions but at \textit{least} as many guarantees; strictly better if one inequality is strict.

For $i\in \set{1,2}$, let $\rho_i = \pathZO(\pol_{\eve,i},\polO,\vinit)$, and let $\Aset_{\mathrm{inf},i}$ and $\Gset_{\mathrm{inf},i}$ be sets of assumptions and guarantees fulfilled by $\rho_{i}$, i.e., $\rho_{i}\in \grabin(\Aset_{\mathrm{inf},i},\Aset \setminus
\Aset_{\mathrm{inf},i})$ and $\rho_i \in \grabin(\Gset_{\mathrm{inf},i},\Gset\setminus\Gset_{\mathrm{inf},i})$.
\begin{description}
	\item[Weak comparison:] $(\pol_{\eve,1},\polO) \bettereq
	(\pol_{\eve,2},\polO)$ if $\Aset_{\mathrm{inf},1} \sqsubseteq
	\Aset_{\mathrm{inf},2}$ and $\Gset_{\mathrm{inf},1} \sqsupseteq \Gset_{\mathrm{inf},2}$.
	\item[Strong comparison:] $(\pol_{\eve,1},\polO) \better
	(\pol_{\eve,2},\polO)$ if either $\Aset_{\mathrm{inf},1} \sqsubset
	\Aset_{\mathrm{inf},2}$ and $\Gset_{\mathrm{inf},1} \sqsupseteq
	\Gset_{\mathrm{inf},2}$, or $\Aset_{\mathrm{inf},1} \sqsubseteq  \Aset_{\mathrm{inf},2}$ and $\Gset_{\mathrm{inf},1} \sqsupset \Gset_{\mathrm{inf},2}$.
\end{description}
This compares \PZ's strategies for a fixed \PO strategy; we extend this by quantifying over \PO.
\begin{definition}[Dominated strategies]\label{def:dominated strategies}
	Consider the GR(1) game $(\G,\gr{\Aset}{\Gset})$.
	Suppose $\polZ$ and $\polZ'$ are two strategies of \PZ.
	From a given initial vertex $\v^0$, we will say $\polZ$ dominates $\polZ'$ if:
	\begin{itemize}
		\item $\forall$ strategies $\polO$ of \PO, $\polZ$ is at least as good as
		$\polZ'$, i.e., $(\polZ,\polO)\bettereq (\polZ',\polO)$, and
		\item $\exists$ a strategy $\polO$ of \PO, for which $\polZ$ is
		strictly better than $\polZ'$, i.e., $(\polZ,\polO) \better (\polZ',\polO)$.
	\end{itemize}
\end{definition}
Finally, we are ready to introduce \emph{gracefully robust} strategies in GR(1) games:
\begin{definition}[Gracefully robust strategies]\label{def:globally gracefully robust}
	Consider the GR(1) game $(\G,\gr{\Aset}{\Gset})$, and let $\vinit$ be a
	given initial vertex.
	A strategy $\polZ$ of \PZ is a gracefully robust strategy from $\vinit$ if:
	\begin{enumerate}[(A)]
		\item $\polZ$ is a \textit{graceful} strategy from $\vinit$, and
		\label{def:globally gracefully robust:gracefulness}
		\item $\polZ$ is not dominated by any other graceful
		strategy from $\vinit$.\label{def:globally gracefully robust:non-dominance}
	\end{enumerate}
\end{definition}
We illustrate graceful robustness, continuing Example~\ref{ex:gracefulness}.

\begin{example}[Graceful robustness]\label{ex:graceful robustness}
	Consider again Example~\ref{ex:gracefulness}.
	Let $\pol_{\eve,c}$ choose edge $0\to 1$ (the choice at vertex $8$ is irrelevant).
	This strategy is graceful since \PO can fulfill $\gbuchi(\set{A_1,A_2})$ by visiting vertices $2$ and $4$ infinitely often.
	To see that $\pol_{\eve,c}$ dominates $\pol_{\eve,b}$, consider \PO strategy $\polO'$ selecting $7\to 7$ and $1\to 2$.
	Both $\pathZO(0,\pol_{\eve,c},\polO')$ and $\pathZO(0,\pol_{\eve,b},\polO')$ fulfill only $\buchi(A_1)$, but the latter fulfills no guarantees while the former fulfills $\buchi(G_1)$.
	Thus $(\pol_{\eve,c},\polO')\better (\pol_{\eve,b},\polO')$, and one can verify $(\pol_{\eve,c},\polO)\bettereq (\pol_{\eve,b},\polO)$ for all $\polO$, so $\pol_{\eve,c}$ dominates $\pol_{\eve,b}$.
	Since $\pol_{\eve,c}$ is not dominated by any other strategy, it is gracefully robust.
	\qed
\end{example}

Graceful strategies from the literature~\cite{DBLP:conf/tacas/MajumdarPS19,chatterjee2010obliging} implement only~\eqref{def:globally gracefully robust:gracefulness}
from Definition~\ref{def:globally gracefully robust}
\footnote{Majumdar et al.~\cite{DBLP:conf/tacas/MajumdarPS19} use the term \emph{environmentally friendly}, equivalent to our graceful strategies. The term ``graceful'' is from Chatterjee et al.~\cite{chatterjee2010obliging}, who considered gracefulness beyond GR(1).}
while robust strategies~\cite{DBLP:conf/cav/BloemCGHJ10} implement only~\eqref{def:globally gracefully robust:non-dominance}; Definition~\ref{def:globally gracefully robust} combines both.

\begin{problem}[Computing gracefully robust
strategies]\label{prob:globally gracefully robust}
	Given the GR(1) game $(\G,\gr{\Aset}{\Gset})$, compute a gracefully
	robust strategy for \PZ.
\end{problem}

To compute (non-graceful) robust strategies, Bloem et al.~\cite{DBLP:conf/cav/BloemCGHJ10} encode robustness as a single LTL formula $\spec^{\mathsf{robust}}$ such that every winning strategy for $(\G,\spec^{\mathsf{robust}})$ is robust for $(\G,\gr{\Aset}{\Gset})$.
A na\"ive approach to graceful robustness is to solve the obliging game $(\G,\spec^{\mathsf{robust}},\gbuchi(\Aset))$.
We show this has prohibitive complexity.

\subsection{Synthesizing Gracefully Robust Strategies: The Blowup in the Na\"ive Approach}
\label{sec:hardness of graceful robustness}

% The na\"ive approach for computing globally gracefully robust strategies is as
% follows. 
% First, use the technique of Bloem et al.~\cite{DBLP:conf/cav/BloemCGHJ10} to
% transform the given GR(1) specification into a
We first summarize the complexity of computing (non-graceful) robust strategies~\cite{DBLP:conf/cav/BloemCGHJ10}.

\begin{restatable}[Upper complexity bound for finding robust
strategies~\cite{DBLP:conf/cav/BloemCGHJ10}]{theorem}{thmComplexityRobustStrategies}\label{thm:complexity of robust strategies}
	Let $\G = (\V,\VZ,\VO,\E)$ be a game graph, $\v^0$ be the initial vertex,
	$\gr{\Aset}{\Gset}$ be the GR(1) specification that has $m$ assumptions and $n$
	guarantees, and let $k= \min(m,n)$.
	A robust strategy of \PZ can be computed in time
	$\mathcal{O}\big(2^{m+n}\cdot |\V|^k \cdot |\E|\cdot (Y\,Z)^{k(k+1)}\cdot k!\big)$,
	where $Y = 2^{\mathcal{O}(m\,2^m)}$ and $Z = 2^{\mathcal{O}(n\,2^n)}$; in
	particular the bound is \emph{doubly} exponential in $m$ and $n$.
The memory size of the resulting policy is $\big((Y+1)(Z+1)\big)^{k}\cdot k!\cdot |\V|$.
\end{restatable}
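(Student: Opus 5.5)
Since the theorem restates a result of Bloem et al., the plan is to reconstruct their reduction and count its cost. I would not re-derive robustness from scratch.

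The first step is the encoding. Robustness for $\gr{\Aset}{\Gset}$ is captured by an LTL formula $\spec^{\mathsf{robust}}$ over the propositions $\Box\Diamond A_i$ and $\Box\Diamond G_j$. It states that for every threshold $\ell$, if at least $\ell$ assumptions are met then at least the corresponding number of guarantees is met. Each threshold statement is a disjunction over subsets of $\Aset$ and $\Gset$. I would translate the assumption side into a deterministic automaton of size $Y=2^{\mathcal{O}(m2^m)}$. The $2^m$ factor comes from the subsets of $\Aset$ whose $\gbuchi$ conditions are tracked, and the exponent comes from the determinization blowup of a nondeterministic B\"uchi automaton with $\mathcal{O}(m2^m)$ states; for example a Safra or Piterman construction gives $2^{\mathcal{O}(N\log N)}$, which is absorbed into this form. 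The guarantee side is handled the same way and gives $Z=2^{\mathcal{O}(n2^n)}$. Because the numbers of satisfied assumptions and guarantees only need to be compared up to $k=\min(m,n)$ levels, the resulting objective is a conjunction of at most $k$ Streett (or Rabin-chain) pairs on the product automaton.

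The second step is to form the product $\G\times\mathcal{D}$. Here $\mathcal{D}$ tracks, for each of the $k$ comparison levels, a pair of deterministic components of sizes $Y+1$ and $Z+1$, where the extra state marks an inactive level. This gives $((Y+1)(Z+1))^k$ automaton states. The third step is to solve the resulting Streett/Rabin game with $k$ pairs using the standard algorithm of Piterman and Pnueli. Its running time is $\mathcal{O}(|\V'|^{k}\cdot|\E'|\cdot k!)$ on the product graph, and its strategies need memory $k!$ times the product size. I would then instantiate these quantities:
\begin{itemize}
\item $|\V'|\le|\V|\cdot(YZ)^k$;
\item $|\E'|\le|\E|\cdot(YZ)^k$;
\item an outer factor $2^{m+n}$ for enumerating the candidate subsets/levels.
\end{itemize}
Multiplying these yields $\mathcal{O}\big(2^{m+n}|\V|^k|\E|(YZ)^{k(k+1)}k!\big)$. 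The memory bound is the automaton component times $k!$ times $|\V|$.

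The main obstacle is getting the exponents to match exactly. Two points need care. The first is verifying that only $k$ pairs (not $m+n$) are needed. I would justify this by noting that the comparison $\Aset_{\mathrm{inf}}\sqsubseteq$ versus $\Gset_{\mathrm{inf}}\sqsupseteq$ is a monotone threshold relation on cardinalities, so only the first $k$ levels matter. The second is handling the $(YZ)^{k\cdot k}$ term from $|\V'|^k$ consistently with the bound $(YZ)^{k(k+1)}$. I would follow the accounting in~\cite{DBLP:conf/cav/BloemCGHJ10} directly and defer these details to the appendix. The doubly exponential claim then follows immediately from the forms of $Y$ and $Z$.
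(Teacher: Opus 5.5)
Your reconstruction does not produce the quantities in the statement, so the bound does not follow from it. In the theorem, as in Bloem et al., $Y$ is not the size of a deterministic automaton. The threshold ``at least $a_p$ of the $m$ assumptions'' is the DCB $\bigvee_{|\Aset'|=a_p}\bigwedge_{A\in\Aset'}\buchi(A)$ of size $\mathcal{O}(m2^m)$. Their Corollary~1 rewrites it as a generalized B\"uchi condition, i.e.\ a conjunction of $Y=2^{\mathcal{O}(m2^m)}$ B\"uchi objectives: distributing the disjunction gives clauses that are disjunctions of B\"uchi objectives, and each such clause is a single B\"uchi objective on the union of the sets.

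Your determinization route cannot deliver this $Y$. For an NBA with $N=\mathcal{O}(m2^m)$ states, $N\log N=\Theta(m^2 2^m)$, so Safra/Piterman gives $2^{\mathcal{O}(m^2 2^m)}$ states, which is \emph{not} absorbed into $2^{\mathcal{O}(m2^m)}$. Moreover, a determinized automaton carries a Rabin/parity condition whose index grows with $N$, and an implication between two such conditions is not a Streett pair. So your claim that the product game has at most $k$ Streett pairs is unsupported, and that is exactly the hypothesis needed to invoke the $\mathcal{O}(|\V'|^k|\E'|\,k!)$ Streett bound. Likewise, the ``extra state marking an inactive level'' is not where the $(Y+1)(Z+1)$ factor comes from.

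The missing structure is the enumeration of multiple counting requirements. A requirement $s=\{(a_1,b_1),\ldots,(a_l,b_l)\}$ uses distinct $a_p\in[1;m-1]$ and distinct $b_p\in[1;n]$, so $l\le\min(m-1,n)\le k$. There are $\sum_l\binom{m-1}{l}\binom{n}{l}=\binom{m-1+n}{n}=\mathcal{O}(2^{m+n})$ requirements. A robust strategy is found by solving one game per requirement in the given order and keeping the best one that succeeds. That is the actual source of the $2^{m+n}$ factor. It is not an outer loop around a single fixed formula $\spec^{\mathsf{robust}}$ ``for every threshold $\ell$'', which your write-up conflates with it.

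Each requirement is a generalized reactivity game of rank $l$, namely $\bigwedge_{p\le l}\big(\bigwedge_{i\le Y}\buchi(A_{p,i})\Rightarrow\bigwedge_{j\le Z}\buchi(G_{p,j})\big)$. Turning each generalized implication into one Streett pair, using counters over $Y+1$ and $Z+1$ values, yields $l$ Streett pairs on a product with $\mathcal{O}(|\V|Y^lZ^l)$ vertices. This gives the time bound (Bloem et al., Theorem~3) and the memory $((Y+1)(Z+1))^k\,k!\,|\V|$ (Theorem~5). With that in place, the exponent bookkeeping you flagged as the main obstacle is immediate, since $|\V'|^k|\E'|\le|\V|^k(YZ)^{k^2}\cdot|\E|(YZ)^k$. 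The real content is what you deferred to ``the accounting in Bloem et al.''
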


\begin{proof}[Proof sketch, details in Appendix~\ref{appendix:proofs of the problem statement section}]
	Bloem et al.\ encode robustness using ``multiple counting requirements,'' each a set $s = \{(a_1,b_1),\ldots,(a_l,b_l)\}$ of threshold pairs stating: if at least $a_i$ assumptions hold then at least $b_i$ guarantees must hold.
	This conjunction of $l$ implications is exactly a generalized reactivity specification of rank $l$, i.e., a game of the form $\bigwedge_{p\in[1;l]}(\psi_p \Rightarrow \varphi_p)$.
	The antecedent ``at least $a_p$ of $m$ assumptions'' is a DCB (disjunction of conjunctions of Büchi), and converting each DCB to generalized Büchi via~\cite[Corollary~1]{DBLP:conf/cav/BloemCGHJ10} yields $Y = 2^{\mathcal{O}(m\,2^m)}$ conjuncts (symmetrically $Z = 2^{\mathcal{O}(n\,2^n)}$ for consequents).
	To find (non-graceful) robust strategies, they iterate over each of the $\mathcal{O}(2^{m+n})$ requirements~\cite[Theorem~7]{DBLP:conf/cav/BloemCGHJ10}, solve its generalized reactivity game (rank bounded by $l \le k$), and take the best solution; solving each game costs $\mathcal{O}(|\V|^k|\E|(YZ)^{k(k+1)}k!)$ time, giving the stated bound.
\end{proof}

While the lower complexity bound for the problem is open, Bloem et al.\ prove that
games with DCB objectives are coNP-complete and argue that the exponential blowup
in the DCB-to-generalized-B\"uchi translation is ``probably
inevitable''~\cite[Section~4]{DBLP:conf/cav/BloemCGHJ10}.
Recall that non-graceful robust strategies considered in
Theorem~\ref{thm:complexity of robust strategies} above only implements the
requirement~\eqref{def:globally gracefully robust:non-dominance} of
Definition~\ref{def:globally gracefully robust}.
Now we add the overhead of including the requirement~\eqref{def:globally
gracefully robust:gracefulness} in the following corollary.

\begin{restatable}{corollary}{corGracefulComplexity}\label{cor:graceful complexity}
	Under the hypotheses and notation of Theorem~\ref{thm:complexity of robust
	strategies}, a gracefully robust strategy of \PZ can be computed in time
	\[
		\mathcal{O}\big(2^{m+n}\cdot |\E|\cdot |\V|^{\,k+m}\cdot
		Y^{\,k(k+m+1)}\cdot Z^{\,k(k+m+1)}\cdot (k+m)!\big),
	\]
	and the memory size of the resulting strategy will be
	$\mathcal{O}(Y^kZ^k(k+m)!)$, where $Y=2^{\mathcal{O}(m\,2^m)}$ and $Z=2^{\mathcal{O}(n\,2^n)}$; both bounds are
	doubly exponential in $m$ and $n$.
\end{restatable}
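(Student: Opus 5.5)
We follow the na\"ive approach described before Section~\ref{sec:hardness of graceful robustness}: solve the obliging game $(\G,\spec^{\mathsf{robust}},\gbuchi(\Aset))$, reusing the decomposition from the proof of Theorem~\ref{thm:complexity of robust strategies}. We iterate over the $\mathcal{O}(2^{m+n})$ multiple counting requirements. For each one, the robust objective is a generalized reactivity formula $\bigwedge_{p\in[1;l]}(\psi_p\Rightarrow\varphi_p)$ of rank $l\le k$, in which every antecedent and consequent has already been converted to generalized B\"uchi form with at most $Y$ and $Z$ conjuncts respectively. The per-requirement analysis then multiplies by $2^{m+n}$, and taking the best solution, as in Bloem et al., preserves non-dominance.

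For a fixed requirement, we reduce the obliging game to an ordinary game using the reduction of Chatterjee et al.~\cite{chatterjee2010obliging} for generalized B\"uchi weak objectives. Obligingness with weak objective $\gbuchi(\Aset)$ is handled by a product construction: a counter over $[1;m]$ tracks the next assumption \PO should be given the chance to visit, and \PZ's strong objective becomes, roughly, $\spec^{\mathsf{robust}}\wedge(\text{every cooperative obligation is respected infinitely often}\Rightarrow\gbuchi(\Aset))$. The key observation is that this adds $m$ further generalized-B\"uchi-type Streett pairs to the generalized reactivity game, so the rank increases from $k$ to $k+m$.

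We then invoke the same generalized reactivity solving bound used in Theorem~\ref{thm:complexity of robust strategies}, now with rank $k+m$. That bound, $\mathcal{O}(|\V|^{r}|\E|(YZ)^{k(r+1)}r!)$ with $r=k+m$, produces the factors $|\V|^{k+m}$, $(Y Z)^{k(k+m+1)}$ and $(k+m)!$. The memory bound follows in the same way: the index-permutation memory becomes $(k+m)!$, and the $Y^kZ^k$ factor is the cyclic counters over the generalized B\"uchi conjuncts of the $k$ original pairs. Since $Y$ and $Z$ are doubly exponential, both bounds are doubly exponential.

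The main obstacle is the middle step. We must check precisely that the obliging reduction for a generalized B\"uchi weak objective adds only $m$ to the rank, without multiplying the conjunct counts $Y,Z$. We must also check that the vertex blowup from the counter is absorbed into the stated bound, for instance via $m\le|\V|$-type slack or by folding the counter into the $|\V|^{k+m}$ term. We would first try to phrase the augmented objective directly as a rank-$(k+m)$ generalized Streett/GR condition over $\G$ itself, with the auxiliary pairs having singleton conjunct sets, so that their exponent contributions are trivial.
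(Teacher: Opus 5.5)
Your proposal follows the paper's proof: iterate over the $\mathcal{O}(2^{m+n})$ requirements, reduce each to a Streett game with $\mathcal{O}(|\V|Y^lZ^l)$ vertices, $\mathcal{O}(|\E|Y^lZ^l)$ edges and $l\le k$ pairs, use Chatterjee et al.'s obliging reduction to add $m$ pairs, and apply the Streett bound. This gives exactly your time exponents and the $\mathcal{O}(Y^kZ^k(k+m)!)$ memory. The step you flag as the obstacle is settled in the paper by working directly on that explicit Streett product and citing the reduction as having constant vertex/edge blowup (no counter), so the $m$ new pairs only raise the exponent to $l+m$ while the base $|\V|Y^lZ^l$ is unchanged; note that your $(YZ)^{k(r+1)}$ is this refined count, not the bound of Theorem~\ref{thm:complexity of robust strategies} with $k$ replaced by $r$.
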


\begin{proof}[Proof sketch, details in Appendix~\ref{appendix:proofs of the problem statement section}]
	For gracefulness, we solve an obliging game $(\G,\spec,\gbuchi(\Aset))$ for each requirement's specification $\spec$.
	Each requirement's generalized reactivity game first reduces~\cite[proof of Theorem~3]{DBLP:conf/cav/BloemCGHJ10} to a Streett game with $\mathcal{O}(|\V|Y^l Z^l)$ vertices, $\mathcal{O}(|\E|Y^l Z^l)$ edges, and $l$ Streett pairs.
	By Chatterjee et al.~\cite[Lemma~2, Theorem~6]{chatterjee2010obliging}, combining an $l$-pair Streett game with an $m$-generalized Büchi weak objective yields a regular Streett game with $(l+m)$ pairs, with constant blowup in the vertex and edge set; the time and memory complexities then follow from known results on Streett games~\cite{piterman2006faster}.
\end{proof}

\subsection{Practical Relaxations}\label{sec:practical relaxations}

Corollary~\ref{cor:graceful complexity} shows that computing gracefully robust strategies is prohibitively expensive, and Bloem et al.~\cite{DBLP:conf/cav/BloemCGHJ10} provide evidence that improvement is unlikely.
We therefore consider two practical relaxations that form the basis of our efficient \textit{adaptive} strategies in Sections~\ref{sec:static design phase} and~\ref{sec:adaptive strategy}.

\subsubsection{\textit{Terminal} Graceful Robustness in Limit Subgames}
Our first relaxation allows \PZ to be dominated in the transient part of the game (vertices visited only finitely often): ``mistakes'' in the transient part cannot be reverted and are forgiven.
However, we require robustness in every subgame where the game settles
long-term.
We formalize this below.

Given an infinite path $\rho=\v^0\v^1\ldots$, let $\recurring(\rho) \coloneqq \set{u\in \V\mid\forall i\geq 0\;.\;\exists j\geq i\;.\;\v^j = u}$ be the vertices visited infinitely often.
For $U\subseteq \V$, define $\G[U] \coloneqq (\V\cap U,\VZ\cap U,\VO\cap U, \E\cap (U\times U))$; this is a \emph{subgame} if no vertex in $\V\cap U$ is a dead-end.
The \emph{reachable vertices} from $U$ are those $\v\in \V$ reachable by some path from $U$.

\begin{definition}[Limit subgames]
	For a given strategy $\polZ$ of \PZ and a strategy $\polO$ of \PO, the \emph{limit subgame} of $\polZ$ and $\polO$ is the subgame induced by the set of vertices reachable from the set $\recurring(\pathZO(\polZ,\polO,\vinit))$.
\end{definition}

Limit subgames are well-defined: since $\G$ has no dead-ends, neither does any reachable subset.
Vertices outside all limit subgames are transient and unreachable in the long run.
If $\G$ is strongly connected, every limit subgame is the entire graph.

\begin{definition}[Terminally gracefully robust strategies]\label{def:terminally
gracefully robust strategies}
	A strategy $\polZ$ of \PZ is called \emph{terminally} gracefully robust if the
	following hold:
	\begin{enumerate}[(a)]
		\item the strategy $\polZ$ is winning for the original GR(1)
		specification and is graceful with respect to the entire set of assumptions $\Aset$, i.e., it is winning for the obliging game $(\G,\gr{\Aset}{\Gset},\Aset)$; \label{item:def:terminally gracefully robust:gracefulness}
		\item for every strategy $\polO$ of \PO, if $\G[U]$ is the
		limit subgame of $\polZ$ and $\polO$, then within $\G[U]$, $\polZ$ is
		not dominated by any other graceful strategy of \PZ. \label{item:def:terminally gracefully robust:robustness}
	\end{enumerate}
\end{definition}
\begin{remark}\label{rem:dominated within subgame}
	In item~\eqref{item:def:terminally gracefully robust:robustness} above,
	\emph{graceful} and \emph{dominated} refer to the subgame $\G[U]$, using
	the same assumption and guarantee sets $\Aset,\Gset$ (restricted implicitly
	to $U$, since any play confined to $U$ can only ever visit $\Aset\cap U$
	and $\Gset\cap U$ anyway), and \emph{not dominated} means not dominated
	\emph{from any vertex of $U$}: since the transient history determines
	exactly which vertex of $U$ is first entered, we require $\polZ$ to be
	undominated uniformly across all of $U$, not merely from one particular
	entry point.
\end{remark}

In Example~\ref{ex:graceful robustness}, $\pol_{\eve,c}$ dominates $\pol_{\eve,b}$, so $\pol_{\eve,b}$ is not robust.
But if \PZ ``mistakenly'' chooses $0\to 6$, this cannot be undone, and we end up
in the limit subgame that is the set of all vertices to the left of $0$.
We argue that in this limit subgame, $\pol_{\eve,b}$ is optimal, which would
imply that $\pol_{\eve,b}$ is \textit{terminally} gracefully robust, though not
gracefully robust.
To see why $\pol_{\eve,b}$ is optimal in the left half of the game, notice that
the alternative strategies of \PZ in this subgame are to eventually pick either
$8\to 10$ or $8\to 11$. 
We already established (in Example~\ref{ex:gracefulness}) that picking $8\to 10$
is not graceful, since \PO can never fulfill $A_2$.
On the other hand, picking $8\to 11$ would lead to violating the GR(1)
specification, since both $A_1$ and $A_2$ will be met but none of the
guarantees will be fulfilled: therefore, this option is not GR(1) winning and
therefore is also not graceful.
We conclude that $\pol_{\eve,b}$ is not dominated by any other \PZ strategy in
this limit subgame.

The following claim shows that graceful robustness is stronger than terminally gracefully robust.

\begin{restatable}{proposition}{propGlobalImpliesTerminal}\label{prop:global implies terminal}
	Every gracefully robust strategy is terminally gracefully robust.
\label{prop:grace}
\end{restatable}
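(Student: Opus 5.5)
Let $\polZ$ be gracefully robust from $\vinit$. I will check the two clauses of Definition~\ref{def:terminally gracefully robust strategies} separately. Clause~\eqref{item:def:terminally gracefully robust:gracefulness} is immediate. Gracefulness in Definition~\ref{def:globally gracefully robust}, instantiated with $\Aset'=\Aset$ and $\Gset'=\Gset$ as in Definition~\ref{def:graceful strategies}, says exactly that $\polZ$ wins the obliging game $(\G,\gr{\Aset}{\Gset},\gbuchi(\Aset))$.

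The substance is clause~\eqref{item:def:terminally gracefully robust:robustness}, which I will prove by contraposition. Fix $\polO$ and let $\G[U]$ be the limit subgame of $\polZ$ and $\polO$. Suppose some graceful strategy $\sigma$ of $\G[U]$ dominates, from some $u\in U$, the strategy that $\polZ$ induces in $\G[U]$. I will build a graceful strategy $\polZ'$ in $\G$ that dominates $\polZ$ from $\vinit$, contradicting non-dominance. Since $U$ is closed under reachability, no play ever leaves $U$ once it enters. The natural construction is to let $\polZ'$ agree with $\polZ$ until the first time the play is at $u$ after a history that also occurs along $\pathZO(\polZ,\polO,\vinit)$ (such a visit exists because $u$ is reachable from $\recurring$). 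At that point $\polZ'$ switches to $\sigma$, after the shift of time and history that Remark~\ref{rem:dominated within subgame} explicitly allows, since it asks for non-dominance from every entry vertex. On all other histories $\polZ'$ keeps following $\polZ$.

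Gracefulness of $\polZ'$ I will obtain from Proposition~\ref{prop:prefix-independence of graceful strategies}, applied with the liveness objectives $\gr{\Aset}{\Gset}$ and $\gbuchi(\Aset)$. On plays that switch, the switch happens at a vertex of $U$, which lies in the winning region of the obliging game restricted to $U$ because $\sigma$ is graceful there; on all other plays $\polZ'$ coincides with $\polZ$. Per-play switching times are a mild generalisation of the proposition, handled by the same argument.

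For domination I will use prefix independence of B\"uchi conditions. On every play, $\Aset_{\mathrm{inf}}$ and $\Gset_{\mathrm{inf}}$ depend only on the suffix after the switch. The weak comparison $(\polZ',\polO'')\bettereq(\polZ,\polO'')$ then follows for every \PO strategy $\polO''$ from the weak domination of $\sigma$ in $\G[U]$. The strict comparison follows by prefixing the \PO witness for $\sigma$'s strict domination with $\polO$'s moves up to the switch point.

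The main obstacle is the weak-comparison step for \PO strategies $\polO''$ that never lead to the switch history. On those plays $\polZ'$ equals $\polZ$, so they compare equal and cause no harm. The real care is needed on plays that do reach the switch history but whose suffix continuation is not literally $\polZ$'s behaviour from $u$. I would handle this by defining the comparison in $\G[U]$ against $\polZ$ composed with the fixed reaching prefix, so that $\sigma$'s domination is exactly stated relative to that residual strategy. Remark~\ref{rem:dominated within subgame} is what makes this reading legitimate.
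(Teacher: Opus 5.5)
\textbf{Gap: the switch history need not exist.} The step that fails is your justification that the switch history exists: ``such a visit exists because $u$ is reachable from $\recurring$''. The set $U$ contains every vertex graph-reachable from $\recurring(\rho)$. That includes vertices that no $\polZ$-compliant play from $\vinit$ ever visits, whatever \PO does.

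Concretely, suppose $u\neq\vinit$ and every edge into $u$ leaves a \PZ vertex at which $\polZ$ never picks $u$. Then a connecting path exists in $\G$, but $\polZ$ never takes it. For instance, vertices $10$--$12$ of Figure~\ref{fig:GR(1) game:illustrating best-effort and limit-best-effort} lie in the limit subgame of $\pol_{\eve,b}$ against \PO playing $7\to 8$ and $9\to 6$, yet $\pol_{\eve,b}$ never enters them.

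For such a $u$ your switch never fires. Then $\polZ'$ coincides with $\polZ$ on every play from $\vinit$, there is no strict improvement, and hence no contradiction. This cannot be patched by reinterpretation either. At such a $u$, the ``strategy that $\polZ$ induces'' that your last paragraph relies on (the residual after a reaching prefix) does not exist. The raw reading, $\polZ$ applied to histories starting at $u$, would constrain $\polZ$ on histories that never arise from $\vinit$, and graceful robustness from $\vinit$ says nothing about those.

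\textbf{Repair.} Take the reaching history from the hypothesis instead of deriving it. The meaningful negation of clause~(b) supplies three things: a vertex $u\in U$, a $\polZ$-compliant history $h$ from $\vinit$ ending at $u$, and a graceful $\sigma$ on $\G[U]$ that dominates the residual of $\polZ$ after $h$ from $u$. The history $h$ need not be a prefix of $\pathZO(\polZ,\polO,\vinit)$; any $\polZ$-compliant one works. With $h$ in hand, the rest of your plan goes through:
\begin{itemize}
\item Gracefulness: \PO follows a cooperative strategy for $\polZ$, and switches to one for $\sigma$ if the play reaches $h$.
\item Weak comparison: on plays avoiding $h$ it is an equality; on plays through $h$ it follows from $\sigma$'s weak domination of the residual.
\item Strict comparison: the witness is \PO replaying his moves in $h$ and then following $\sigma$'s strict witness.
\end{itemize}

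\textbf{Comparison with the paper.} Your single-history surgery is a genuinely different route. The paper switches to the dominating strategy on every play at its first entry into $U$. That global switch needs domination from every possible entry vertex at once, which the paper gets by reading the hypothesis as domination ``uniformly over every vertex of $U$''. It also treats $\polZ$'s tail as $\polZ$ restarted at the entry vertex. Your version needs only one witness and handles the prefix dependence explicitly. So once the existence of $h$ is fixed as above, yours is the tighter argument.
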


\begin{proof}[Proof sketch, details in Appendix~\ref{appendix:proofs of the problem statement section}]
	Suppose $\polZ$ is gracefully robust but not terminally so: then some graceful $\polZ''$ dominates $\polZ$ within a limit subgame $\G[U]$.
	Since $U$ is closed under edges (once entered, plays cannot leave), we construct a composite strategy $\polZ'$ that follows $\polZ$ until entering $U$, then switches to $\polZ''$.
	The strategy $\polZ'$ is graceful (by case analysis on whether $U$ is entered) and dominates $\polZ$ globally (since the $\bettereq$ comparisons depend only on infinitely visited vertices, not finite prefixes), contradicting that $\polZ$ is gracefully robust.
\end{proof}

Intuitively, terminally gracefully robust strategies start optimistically, assuming \textit{all} assumptions will be fulfilled.
As the game narrows to a specific limit subgame and some assumptions fail, the strategy takes undominated actions in the remainder.
This is more reasonable for fault-tolerant systems than existing approaches~\cite{DBLP:conf/cav/BloemCGHJ10} that assume worst-case environment behavior from the start.
We formalize this non-adversarial view in our second relaxation.

\begin{problem}[Computing terminally gracefully robust strategies]\label{prob:terminally gracefully robust}
	Given the GR(1) game $(\G,\gr{\Aset}{\Gset})$, compute a terminally
	gracefully robust strategy for \PZ.
\end{problem}

\subsubsection{Non-Adversarial Environment}
Our second relaxation assumes \PO is not a pure adversary: \PO's choice of which assumptions to meet does not depend on \PZ's particular strategy (provided \PZ is graceful), and assumptions visited infinitely often appear at bounded intervals.
For instance, in Example~II (Section~\ref{sec:intro}), a \PO strategy from $\PolO^{(1)}$ alternating between $h_1$ and $t_1$ is non-adversarial, whereas strategies from $\PolO^{(2)}$ picking the opposite of \PZ's choice are adversarial (which subset of assumptions will be fulfilled depends on \PZ's actions and may have unbounded frequency).

We formalize bounded visits using hitting times.
For an infinite path $\rho = v^0v^1\ldots$ and vertex set $X$, the \textit{hitting time} of $X$ in $\rho$ is bounded by $g\colon \mathbb{N}\to \mathbb{N}$ if for every $i$ with $v^i \in X$, there exists $j \leq i+g(i)$ with $v^j\in X$.
We say $X$ has \textit{constant} hitting time if $g(\cdot)\equiv n$ for some constant $n$.

\begin{assumption}[Non-adversarial environment]\label{assump:finite-memory environment}
	Fix a graph $\G$.
	\PO uses strategies from a set $\PolO'\subset \PolO$ that fulfills the following.
	There exists $\Asetfin$ and $\Asetinf$ that partition $\Aset$ (i.e., $\Aset
	= \Asetfin\cup \Asetinf$ and $\Asetfin\cap\Asetinf = \emptyset$), and there
	exists a function $\hittingtime\colon \PolZ\times\Asetinf\to \mathbb{N}$,
	such that $\forall$ strategies $\polO\in \PolO'$, $\forall$ graceful strategies
	$\polZ$ of \PZ, and for every initial vertex $v^0$, the following hold:
	\begin{enumerate}[(i)]
		\item $\pathZO(\G,\polZ,\polO,\v^0) \in \grabin(\Asetinf,\Asetfin)$, and 
		\item every $A\in \Asetinf$ has a constant hitting time in
		$\pathZO(\G,\polZ,\polO,\v^0)$ given by $\hittingtime(\polZ,A)$.\label{item:assump:finite-memory
		environment:bounded hitting time}
	\end{enumerate}
\end{assumption}

Since \PZ cannot know in advance which subset of assumptions the environment
(\PO) will actually fulfill, Section~\ref{sec:static design phase} precomputes
an optimal response to \emph{every} possible assumption subset; 
Section~\ref{sec:adaptive strategy} then shows how our adaptive strategy
identifies, and increasingly commits to the right one of these statically
computed optimal responses as evidence accumulates online.
\section{Static Design of a Library of Individual Optimal Strategies}
\label{sec:static design phase}

During the static design phase, we build a \textit{static strategy library}
formalized as the function $\static\colon 2^{\Aset}\to \PolZFM$, mapping every
subset $\Aset'\subseteq \Aset$ of assumptions to the corresponding
(finite-memory) optimal strategy when only the assumptions in $\Aset'$ are met.
% In addition, the optimal strategies must be graceful, as well as always remain
% within the graceful GR(1) winning region.
Later in Section~\ref{sec:adaptive strategy}, our adaptive strategy will use a
randomized mixture of these statically computed strategies.

\begin{definition}[Optimal strategy library]
\label{def:optimal strategy for assumption subset}
	Let $\G = (\V,\VZ,\VO,\E)$ be a given game graph, $\gr{\Aset}{\Gset}$ be
	a given GR(1) specification, and $\Aset'\subseteq \Aset$ be a subset of assumptions.
	Suppose, $W\subseteq \V$ is the winning region for the obliging game $(\G,\gr{\Aset}{\Gset},\gbuchi(\Aset))$.
	The strategy $\polZ$ of \PZ is called \emph{optimal} against $\Aset'$ if for every initial vertex $\vinit\in W$, all of the following hold:
	\begin{enumerate}[(a)]
		\item the strategy $\polZ$ is graceful from $\vinit$ with respect to the
		entire set of assumptions $\Aset$; \label{item:def:optimal strategy:a}
		\item there exists a strategy $\polO$ of \PO such that every $A\in \Aset$ has constant hitting time in $\pathZO(\G,\polZ,\polO,\vinit)$;\label{item:def:optimal strategy:a2}
		\item the strategy $\polZ$ never leaves the set $W$, i.e., 
		$\langZ(\polZ,\vinit)\subseteq \safe(W)$; and \label{item:def:optimal strategy:b}
		\item in the subgame induced by the set of vertices reachable from
		$\vinit$, no other strategy of \PZ that fulfills (a),  (b), and (c)  can
		fulfill a larger number of guarantees against $\Aset'$ than $\polZ$.
		\label{item:def:optimal strategy:c}
	\end{enumerate}	
	The optimal strategy library is the function $\static\colon 2^{\Aset}\to
	\PolZFM$ such that for every $\Aset'\in 2^\Aset$, $\static(\Aset')$ is the optimal strategy against $\Aset'$. 	
\end{definition}
In the above definition, Conditions~\eqref{item:def:optimal strategy:a} and
\eqref{item:def:optimal strategy:a2} allow \PO to visit the assumption vertices
within bounded time intervals, making sure that 
Condition~\eqref{item:assump:finite-memory environment:bounded hitting time} of
Assumption~\ref{assump:finite-memory environment} remains achievable.
Condition~\eqref{item:def:optimal strategy:b} of Definition~\ref{def:optimal strategy for assumption subset} is necessary to prevent leaving
the original graceful GR(1) winning region, which is not automatically
guaranteed because the optimal strategies concern a different GR(1)
specification than the original one.
For instance, in the context of Example~\ref{ex:gracefulness} and
\ref{ex:graceful robustness}, $W = \set{0,\ldots,9}$. 
For the initial vertex $\vinit =8\in W$ and for $\Aset'=A_1$, i.e., when only
$A_1$ is fulfilled and $A_2$ is violated, both the edges $8\to 9$ and $8\to 12$
fulfill Conditions~\eqref{item:def:optimal strategy:a}, \eqref{item:def:optimal
strategy:a2}, and \eqref{item:def:optimal strategy:c} in
Definition~\ref{def:optimal strategy for assumption subset};
Condition~\eqref{item:def:optimal strategy:c} holds because in both cases, all
assumptions can be reached with bounded hitting time and none of the guarantees will be fulfilled for both options.
However, the option $8\to 12$ is clearly detrimental because the original GR(1)
specification will be violated.
Finally, Condition~\eqref{item:def:optimal strategy:c} in Definition~\ref{def:optimal strategy for assumption subset} ensures that the obtained
strategy actually fulfills the largest possible number of guarantees in the
remaining subgame.
In particular, for Example~\ref{ex:gracefulness} and \ref{ex:graceful
robustness}, if the initial vertex is $\v^0=0$ and $\Aset'=\set{A_1}$,
Condition~\eqref{item:def:optimal strategy:c} forces \PZ to select $0\to 1$ over
$0\to 6$, because the former would ensure one guarantee (namely $G_1$) while the
latter would ensure none.

% \todo{Add some discussion about the difference between Assumption 1 and the
% bounded visit requirement in optimal strategies, also mention which bound is used
% where in the proofs of different Lemmas.}
% \new{The straightforward use of the definition of graceful strategies from the obliging
% games framework does not impose the requirement of bounded visits of the
% assumptions by \PO.
% However, since the strategies obtained from obliging games are finite memory, it
% can be shown that this extra requirement is automatically fulfilled.\todo{add a proposition}
% We will need this for the technical parts later in the paper.}

Algorithm~\ref{alg:static computation of strategy library} describes the
computation procedure for the function $\Gamma\colon
\V\times \Aset \to \PolZFM$, which we call the \textit{intermediate} optimal
strategy library.
Intuitively, $\Gamma$ maps every vertex $v$ and every $\Aset'\subseteq \Aset$ to
an optimal policy $\Gamma(v,\Aset')$ against $\Aset'$, which is more
fine-grained than $\static$ since $\static$'s output is not allowed to depend on vertices.
The function $\Gamma(\cdot,\cdot)$ can be turned into the function $\static$ as
follows (Theorem~\ref{thm:static computation soundness} formally proves the
soundness of this step): for every $\Aset'\in 2^\Aset$ and for every $\rho \in
\pathsfin(\G')$,
if $\last(\rho)\in \VZ$, then 
\begin{equation}\label{eq:going from Gamma to static}
\static(\Aset')(\rho) \coloneqq \Gamma(\last(\rho),\Aset')(\rho).
\end{equation}
To understand Algorithm~\ref{alg:static computation of strategy library}, recall that for finite sets $S$ and $S'$, $S \sqsubseteq S' \iff |S| \leq |S'|$.
The algorithm uses the subroutine 
$\Synt(\G,\gr{\Aset'}{\Gset'},\gbuchi(\Aset))$, which returns the winning region and the winning strategy for the obliging game
$(\G,\gr{\Aset'}{\Gset'},\gbuchi(\Aset))$.
% Only for the special case $\Aset = \Aset'$, we could implement
% $\Synt(\G,\gr{\Aset}{\Gset'},\gbuchi(\Aset))$ using the environmentally friendly GR(1)
% synthesis algorithm of Majumdar et al.~\cite{DBLP:conf/tacas/MajumdarPS19}.

The algorithm works as follows.

First, it computes the winning region $W$ of the obliging game with the original
GR(1) specification, which is not to be left by the optimal strategies as per
Condition~\eqref{item:def:optimal strategy:b} of Definition~\ref{def:optimal
strategy for assumption subset}.
% From now on, we will consider only the subgame $\G' = \G[W]$, because the
% vertices outside of $\G'$ will surely not be chosen by \PZ strategies, whereas \PO
% will not have any edges going outside of $\G'$ (follows from the solution of
% obliging games~\cite{chatterjee2010obliging}).

The map $\BR(\cdot,\cdot)$ is initialized as $\BR\equiv \bot$, and is
incrementally built as follows.
We use a nested for loop over the subsets of $\Gset$ and subsets of $\Aset$,
shown in Lines~\ref{line:alg:static:for loop:G} and \ref{line:alg:static:for
loop:A}, respectively.
In the outer loop, we iterate over the subsets of $\Gset$ in the
decreasing order (according to ``$\sqsubseteq$''), and in the inner loop we
iterate over subsets of $\Aset$ in the increasing order (according to
``$\sqsubseteq$'').
For each pair $\tilde{\Gset}\subseteq \Gset$ and $\tilde{\Aset}\subseteq \Aset$,
we solve the obliging game (Line~\ref{line:alg:static:solve game}) where \PZ aims to fulfill the fragment
$\gr{\tilde{\Aset}}{\tilde{\Gset}}$ of the original GR(1) specification
$\gr{\Aset}{\Gset}$ as well as the safety condition $\safe(W)$ (to stay within $W$), while allowing \PO to meet the original set $\Aset$ of
assumptions. 
Solving this obliging game provides us a winning region $Z\subseteq W \subseteq
\V$ as well as a finite-memory winning strategy $\polZ \in \PolZFM$ for \PZ.
Now for every vertex $\v\in Z$ and every $\Aset'\supseteq \tilde{\Aset}$, if we
have not yet found the optimal strategy, i.e., if $\BR(\v,\Aset')=\bot$, then
$\BR(\v,\Aset')$ is set to $\polZ$ (Line~\ref{line:alg:static:populate map
Gamma}).

Intuitively, for each $\v\in Z$, since $\polZ$ fulfills $\tilde{\Gset}$ against
$\tilde{\Aset}$, $\polZ$ will also fulfill $\tilde{\Gset}$ against \textit{every
superset} of $\tilde{\Aset}$ (more assumptions = easier for \PZ); this explains the ``every $\Aset'\supseteq
\tilde{\Aset}$'' part.
Moreover, as we are iterating over $\Gset$ in the \textit{decreasing} order,
$\tilde{\Gset}$ will remain the largest subset of guarantees that can be
fulfilled against every $\Aset'\supseteq \tilde{\Aset}$ from $\v\in Z$.
This means, we should prevent $\Gamma(\v,\Aset')$ to be overwritten in future
iterations; this explains the ``if $\BR(\v,\Aset')=\bot$'' part.

Since the last iteration over guarantees uses $\tilde{\Gset}=\emptyset$, this
effectively provides us a graceful strategy that is not going to achieve any
guarantees; formally, this involves solving the obliging game
$(\G,\top,\gbuchi(\Aset))$, where $\top$ indicates that all paths are in the
specification.
It then follows from the definition of $W$ that for every vertex $\v\in W$ and
every $\tilde{\Aset}\in 2^{\Aset}$, the set $\BR(\v,\tilde{\Aset})$ is nonempty.

The final step to obtain $\static$ from $\BR(\cdot,\cdot)$ is described in
Eqn.~\eqref{eq:going from Gamma to static}, where we essentially ``stitch up''
different optimal strategies of \PZ stored for different vertices stored in $\BR(\cdot,\cdot)$.
Intuitively, for every assumption subset $\Aset'\in 2^{\Aset}$, we require the
strategy $\static(\Aset')$ of \PZ to do the following:
for every finite path $\rho$ ending at a \PZ vertex $\v$, use $\BR(\Aset',\v)$
to decide which vertex to select next.
The optimality of such stitched up strategies returned by $\static(\cdot)$ is far
from trivial, because different strategies of \PZ might interfere with each other in
an adverse manner.
In the following theorem, we show that this is not the case, where we use a
ranking argument over the iteration indices of the two for loops to prove that
the ``stitched up'' strategies of \PZ returned by $\static(\cdot)$ are actually
optimal against every subset of assumptions.

\begin{algorithm}
	\caption{Static computation of the optimal strategy library}
	\label{alg:static computation of strategy library}
	\begin{algorithmic}[1]
		\Require Game graph $\G = (\V,\VZ,\VO,\E)$, GR(1) specification $\gr{\Aset}{\Gset}$
		\Ensure The \textit{intermediate} optimal strategy library $\Gamma\colon
		\V\times 2^\Aset \to \PolZFM$
		\State $(W,\cdot)\gets \Synt(\G,\gr{\Aset}{\Gset},\gbuchi(\Aset))$
		% the winning region of environmentally friendly GR(1) synthesis for the original specification $\gr{\Aset}{\Gset}$ in the original game graph $\G$
		% \State $\G'\gets \G[W]$
		\State $\forall v\in \V\;.\;\forall \tilde{\Aset}\subseteq
		\Aset\;.\;\BR(v,\tilde{\Aset}) \gets \bot$ \Comment{initialization of $\Gamma$}
		\For{every $\tilde{\Gset}\in 2^{\Gset}$ in the
		decreasing sequence $\Gset \sqsupseteq \Gset' \sqsupseteq \Gset'' \ldots
		\sqsupseteq \emptyset$}\label{line:alg:static:for loop:G}
		\For{every $\tilde{\Aset}\in 2^{\Aset}$ in the increasing sequence $\emptyset\sqsubseteq \Aset' \sqsubseteq \Aset'' \ldots \sqsubseteq \Aset$}\label{line:alg:static:for loop:A}
		\State $(Z,\polZ)\gets
		\Synt(\G,\gr{\tilde{\Aset}}{\tilde{\Gset}}\cap \safe(W),\gbuchi(\Aset))$
		\label{line:alg:static:solve game}
				\For{every $\v\in Z$ \textit{and} every $\Aset' \supseteq \tilde{\Aset}$}
					\If{$\BR(\v,\Aset') = \bot$} \Comment{to ensure the optimal policy (most guarantees) is stored}
					\State $\BR(v,\Aset') \gets
					\polZ$ \label{line:alg:static:populate map Gamma}
					\EndIf
				\EndFor
			\EndFor
		\EndFor
		% \For{every $\Aset'\in 2^{\Aset}$}
		% 			\State $\forall \rho\in
		% 			\pathsfin(\G')\;.\;(\last(\rho)\in \VZ) \implies
		% 			\static(\Aset')(\rho) \coloneqq
		% 			\BR(\last(\rho),\Aset')$\label{line:alg:static:stitching strategies}
		% \EndFor
	\end{algorithmic}
\end{algorithm}

\begin{restatable}[Soundness of statically computed optimal
strategies]{theorem}{thmStaticComputationSoundness}\label{thm:static computation soundness}
Suppose $\gr{\Aset}{\Gset}$ is the given GR(1) specification.
For every $\Aset^*\in 2^{\Aset}$, and for every initial vertex that is in the
domain of the graceful winning region for $\gr{\Aset}{\Gset}$, the strategy
$\static(\Aset^*)$ defined in Eqn.~\eqref{eq:going from Gamma to static} is optimal against $\Aset^*$.
\end{restatable}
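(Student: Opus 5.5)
We fix $\Aset^*$ and write $\polZ^\star\coloneqq\static(\Aset^*)$. The proof is a ranking argument over the iteration order of Algorithm~\ref{alg:static computation of strategy library}.

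\textbf{Rank.} For every $\v\in W$, let $\rank(\v)$ be the index $(i,j)$ of the iteration of the two nested loops at which $\BR(\v,\Aset^*)$ was set. Here $i$ indexes $\tilde\Gset$ in the outer loop and $j$ indexes $\tilde\Aset$ in the inner loop. Order these indices lexicographically. Two facts make this well defined:
\begin{itemize}
\item the last outer iteration, with $\tilde\Gset=\emptyset$, assigns some strategy to every $\v\in W$;
\item once $\BR(\v,\Aset^*)$ is set, it is never overwritten.
\end{itemize}
Write $\pol_{(i,j)}$ and $Z_{(i,j)}$ for the strategy and region produced at iteration $(i,j)$.

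\textbf{Key monotonicity lemma.} Along every play compliant with $\polZ^\star$ from $\vinit\in W$, the rank never increases. Take a vertex $\v$ of rank $r$, so $\v\in Z_r$. The strategy $\pol_r$ is winning in the obliging game with strong objective containing $\safe(W)$. Since winning regions of such prefix-independent obliging games are closed under the winning strategy (\PZ moves stay in $Z_r$, and \PO cannot leave $Z_r$), the next vertex $\v'$ is again in $Z_r$. Because $\BR(\v',\Aset^*)$ is set at the first iteration whose region contains $\v'$ with $\tilde\Aset\subseteq\Aset^*$, we get $\rank(\v')\le r$.

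There is one subtlety. $\pol_r$ is a finite-memory strategy and is invoked on the full history $\rho$, not on a history that started inside $Z_r$. I would handle this with Proposition~\ref{prop:prefix-independence of graceful strategies}, or by noting that the memory can be reset on entering a new rank. The orders are finite and the rank never increases, so it eventually stabilises at some $r$. From then on the play follows $\pol_r$ on a suffix starting in $Z_r$.

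\textbf{Conditions (a)--(c).}
\begin{itemize}
\item Condition~(c) holds because every $\pol_r$ satisfies $\safe(W)$ and every assigned vertex lies in $W$.
\item For condition~(a), the eventual suffix is consistent with $\pol_r$. Since $r$ has $\tilde\Aset\subseteq\Aset^*\subseteq\Aset$, the GR(1) objective $\gr{\tilde\Aset}{\tilde\Gset}$ is achieved. For the original $\gr{\Aset}{\Gset}$ I expect to argue via $W$ together with the outer-loop ordering: if every assumption in $\Aset$ holds, then the guarantees of the first iteration ($\tilde\Gset=\Gset$) should be reachable. This is delicate, and I may need to argue instead that plays meeting all of $\Aset$ stay at the top rank region.
\item The obliging part of~(a) follows from gracefulness of each $\pol_r$ from every vertex of $Z_r$. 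Every finite prefix ends in some $Z_r$, from which \PO can win $\gbuchi(\Aset)$ against $\pol_r$. Since the rank can only decrease, and each lower-rank strategy is again graceful, \PO's cooperating continuation exists.
\item Condition~(b) needs a \PO strategy with constant hitting times. I would obtain it from the finite memory of the $\pol_r$: \PO plays a finite-memory cooperating strategy, so in the product finite graph the eventual play is ultimately periodic, giving constant hitting times.
\end{itemize}

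\textbf{Optimality (d).} Suppose some strategy satisfying (a)--(c) fulfils more guarantees against $\Aset^*$ from some reachable $\v$, say $\Gset'$ with $\Gset'\sqsupset\tilde\Gset_r$. Then $\v$ lies in the winning region of $\gr{\Aset^*}{\Gset'}\cap\safe(W)$ with weak objective $\gbuchi(\Aset)$. That game is solved at an earlier outer iteration with $\tilde\Aset=\Aset^*$, so $\v$ would have received a strictly smaller rank, a contradiction.

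The main obstacle is the stitching: making the monotonicity lemma rigorous with finite-memory strategies whose memory was driven by a history produced by other strategies. The second obstacle is showing that the stabilised rank yields at least the rank of the starting vertex's guarantee count; rank decrease only helps, since a lower rank means more guarantees.
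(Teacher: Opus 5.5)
Most of your plan matches the paper's proof:
\begin{itemize}
\item rank vertices by the iteration at which $\BR(\v,\Aset^*)$ is first set;
\item use that each region is a trap for \PO;
\item stitch with a history reset (the paper does exactly this via Proposition~\ref{prop:prefix-independence of graceful strategies});
\item take constant hitting times from finite memory;
\item get optimality from the decreasing outer loop.
\end{itemize}
Your lexicographic $(i,j)$ rank is slightly cleaner, since each rank class then runs a single strategy $\pol_r$.

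The genuine problem is the sub-goal you set in condition~(a): that $\static(\Aset^*)$ wins the original $\gr{\Aset}{\Gset}$. This is false in general, so neither of your two suggested routes can be completed. Take Example~II with $G_1=\set{h_\eve h_\adam}$, $G_2=\set{t_\eve t_\adam}$, $W=\V$ and $\Aset^*=\set{H_\adam}$.
\begin{itemize}
\item Every iteration before $(\set{G_1},\set{H_\adam})$ whose $\tilde{\Aset}$ is contained in $\set{H_\adam}$ has empty region: \PO always answers $h_\adam$, or always $\bot$ if $\tilde{\Aset}=\emptyset$.
\item That iteration itself has region $\V$ (always pick $h_\eve$). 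So $\static(\set{H_\adam})$ is a single strategy $\sigma$ winning $\gr{\set{H_\adam}}{\set{G_1}}$.
\item Let \PO answer $t_\eve$ by $h_\adam$, and answer $h_\eve$ alternately by $h_\adam$ and $t_\adam$. Then $G_2$ is never visited.
\item $\sigma$ must pick $h_\eve$ infinitely often, since otherwise $H_\adam$ holds while $G_1$ is visited only finitely often. But then both assumptions hold and $\gr{\Aset}{\Gset}$ is violated.
\end{itemize}
The same example refutes your alternative, ``plays meeting all of $\Aset$ stay in the top rank region'', because no vertex has top rank for $\set{H_\adam}$.

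The theorem is only true if condition~(a) of Definition~\ref{def:optimal strategy for assumption subset} is read as gracefulness in the sense of Definition~\ref{def:graceful strategies} for the fragment being solved. That is: winning the obliging game with strong objective $\gr{\tilde{\Aset}}{\tilde{\Gset}}$ (plus $\safe(W)$) and weak objective $\gbuchi(\Aset)$. This is how the paper discharges~(a), ``by construction'' of the rank regions and their strategies. It is also why Section~\ref{sec:randomized mixing} warns that library strategies may concern only fragments of the specification.

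Your stabilisation argument already delivers this. The eventually active $\pol_r$ wins $\gr{\tilde{\Aset}_r}{\tilde{\Gset}_r}$, with $\tilde{\Aset}_r\subseteq\Aset^*$ and $|\tilde{\Gset}_r|$ at least the guarantee count of the initial rank. \PO can always cooperate by following the response to the currently active strategy.

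Your top-rank intuition is right exactly when $\Aset^*=\Aset$. There the iteration $(\Gset,\Aset)$ has region $W$, so every vertex is ranked in the first outer iteration, and $\static(\Aset)$ does win $\gr{\Aset}{\Gset}$. That is the only case used later, in Theorem~\ref{thm:overall soundness}. So drop the sub-goal rather than trying to prove it.
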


\begin{proof}[Proof sketch, details in Appendix~\ref{appendix:proofs of the static design section}]
	The proof uses three observations: (i) winning regions of obliging games are traps for \PO, (ii) winning strategies and cooperative \PO strategies (that fulfill all assumptions) are finite-memory with bounded hitting times, and (iii) winning regions are monotonic in the specification (fewer assumptions or more guarantees yield smaller winning regions).
	We rank each vertex $\v$ by the iteration index when $\BR(\v,\Aset^*)$ is first changed from $\bot$, and prove by induction on ranks that $\static(\Aset^*)$ is optimal: the base case uses direct construction from the algorithm, and the inductive step stitches strategies across rank boundaries using Proposition~\ref{prop:prefix-independence of graceful strategies}.
\end{proof}

We now present the upper complexity bound of Algorithm~\ref{alg:static
computation of strategy library}.
Recall that the approach based on the work of Bloem et
al.~\cite{DBLP:conf/cav/BloemCGHJ10}, presented in Section~\ref{sec:hardness of
graceful robustness}, had a doubly exponential time and memory complexity with
respect to $m$ (number of assumptions) and $n$ (number of guarantees)
(Corollary~\ref{cor:graceful complexity}), whereas we show that
Algorithm~\ref{alg:static computation of strategy library} has only a single
exponential time and memory complexity.
In other words, the static (offline design) part of our approach is
exponentially faster and uses exponentially smaller memory than the baseline.
However, this gain comes at a cost: while Bloem et
al.~\cite{DBLP:conf/cav/BloemCGHJ10} computes gracefully robust strategies,
our technique computes \textit{terminally} gracefully robust strategies
under Assumption~\ref{assump:finite-memory environment}, which is a practically
motivated relaxation of the problem.

\begin{restatable}[Upper complexity bound]{theorem}{thmStaticComplexity}\label{thm:static complexity}
	Let $m=|\Aset|$ and $n=|\Gset|$.
	Then Algorithm~\ref{alg:static computation of strategy library} computes the map
	$\BR$, and hence the library $\static$, in time
	\[
		\mathcal{O}\big(2^{m+n}\,|V|^{m+1}\cdot|E|\cdot (mn)^{m+2}\cdot (m+1)! + |\V|\cdot 2^{2m+n}\big),
	\]
	and needs memory of size
	$\mathcal{O}\big(2^{m+n}\cdot (m+1)!\big)$.
\end{restatable}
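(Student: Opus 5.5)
The plan is to count the work in Algorithm~\ref{alg:static computation of strategy library} loop by loop and charge each call of $\Synt$ the cost of solving a Streett game, in the same way as in the proof of Corollary~\ref{cor:graceful complexity}.

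\textbf{Number of game solutions.} The two nested loops (Lines~\ref{line:alg:static:for loop:G} and~\ref{line:alg:static:for loop:A}) run over $2^{\Gset}\times 2^{\Aset}$. Counting the extra call that computes $W$, this gives $2^{m+n}+1=\mathcal{O}(2^{m+n})$ calls of $\Synt$. Sorting the subsets by cardinality costs only $\mathcal{O}(2^{m+n}(m+n))$, which is absorbed.

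\textbf{Cost of one call.} A single call solves the obliging game $(\G,\gr{\tilde{\Aset}}{\tilde{\Gset}}\cap\safe(W),\gbuchi(\Aset))$. The safety conjunct costs nothing extra: I first restrict the graph to $W$ by a linear-time attractor/trap computation, since $W$ is a trap for \PO (observation (i) in the proof of Theorem~\ref{thm:static computation soundness}).
\begin{itemize}
\item The GR(1) part $\gr{\tilde{\Aset}}{\tilde{\Gset}}$ is the single Streett pair $\gbuchi(\tilde{\Aset})\Rightarrow\gbuchi(\tilde{\Gset})$. I make it a plain Streett pair with a counter-based degeneralization over the $\le m$ assumptions and $\le n$ guarantees, which multiplies the vertex set by $\mathcal{O}(mn)$.
\item By Chatterjee et al.~\cite[Lemma~2, Theorem~6]{chatterjee2010obliging}, adding the weak objective $\gbuchi(\Aset)$ turns this into a Streett game with $m+1$ pairs, with only constant blowup in vertices and edges.
\item The result is a Streett game with $N=\mathcal{O}(|\V|mn)$ vertices, $M=\mathcal{O}(|\E|mn)$ edges and $k=m+1$ pairs.
\end{itemize}
The Streett algorithm of~\cite{piterman2006faster} solves such a game in time $\mathcal{O}(N^{k}\cdot M\cdot k!)$ and produces strategies with memory $\mathcal{O}(k!)$, possibly times a small factor from the counters. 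Substituting gives per call
\[
\mathcal{O}\big(|\V|^{m+1}(mn)^{m+1}\cdot|\E|\,mn\cdot(m+1)!\big)=\mathcal{O}\big(|\V|^{m+1}|\E|(mn)^{m+2}(m+1)!\big).
\]
Multiplying by the $\mathcal{O}(2^{m+n})$ calls gives the first summand.

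\textbf{Bookkeeping.} Line~\ref{line:alg:static:populate map Gamma} runs, for each of the $2^{m+n}$ iterations, over at most $|\V|$ vertices and at most $2^m$ supersets $\Aset'$, each check taking $\mathcal{O}(1)$. This totals $\mathcal{O}(|\V|2^{2m+n})$, the second summand. Stitching via Eqn.~\eqref{eq:going from Gamma to static} is just a table lookup.

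\textbf{Memory.} $\BR$ references at most one stored strategy per iteration, so at most $2^{m+n}$ distinct finite-memory strategies, each with memory $\mathcal{O}((m+1)!)$. Hence $\static(\Aset')$ needs memory $\mathcal{O}(2^{m+n}(m+1)!)$: its memory is the product of a selector over which stored strategy is active and the memories of the stored strategies. The subtle point is that each component strategy must update its memory on every step, even while inactive, so that it is correct when switched to. The product construction handles this, and I would argue the union bound suffices by running all stored strategies' memories in parallel.

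\textbf{Main obstacle.} The hardest part is pinning down the exact per-call blowup, namely the $(mn)^{m+2}$ factor. This requires carefully justifying that the reduction of~\cite{chatterjee2010obliging} together with degeneralization gives $m+1$ pairs and an $\mathcal{O}(mn)$ vertex blowup, and not something larger. I would first try to reuse the reduction of Corollary~\ref{cor:graceful complexity} with $l=1$ and $Y,Z$ replaced by the polynomial counter sizes $m$ and $n$. This replacement is valid because a single GR(1) pair needs no DCB translation.
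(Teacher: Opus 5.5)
Your time analysis is the paper's argument. You count one $\Synt$ call per pair $(\tilde{\Gset},\tilde{\Aset})$ and handle $\safe(W)$ by restricting to the \PO-trap $W$, which amounts to the paper's pruning of unsafe \PZ edges. You then build a one-pair Streett game with $\mathcal{O}(|\V|mn)$ vertices and $\mathcal{O}(|\E|mn)$ edges, get $m+1$ pairs after the obliging-game reduction, apply the Streett bound of~\cite{piterman2006faster}, and add $|\V|\cdot 2^m$ updates of $\BR$ per iteration. All of that is fine and yields both summands.

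The problem is the memory paragraph. Your first sentence (at most $2^{m+n}$ distinct stored strategies, each of size $(m+1)!$) is exactly the paper's proof. That bounds the space needed to \emph{store} the library $\BR$, which is all the paper claims. You then try to get the same number for the memory of the stitched strategy $\static(\Aset')$ by running all stored strategies' memories in parallel. That step fails. A parallel composition has a memory-state space of size $\prod_j |Q_j|$, which can be as large as $((m+1)!)^{2^{m+n}}$ and is doubly exponential. A sum (``union'') bound holds only if a single component's memory is live at any time.

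Parallel updating is exactly what the literal stitching in Eqn.~\eqref{eq:going from Gamma to static} requires, since it feeds the full history $\rho$ to $\BR(\last(\rho),\Aset')$. So under your reading the claimed bound does not follow. The selector also needs no memory, because the active component is determined by the current vertex.

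There are two ways to repair this. You could switch to the reset-on-entry construction that the proof of Theorem~\ref{thm:static computation soundness} actually analyses: once play enters a lower-rank region (a trap), discard the old memory and start the new component from its initial state, which gives a disjoint union of memories. Or you could simply state the bound as the storage of $\BR$, as the paper does, and drop the product argument.

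As a minor point, you flag and then silently drop an $\mathcal{O}(mn)$ counter factor in the per-strategy memory. The paper drops it too, so this is not a discrepancy with its proof, though strictly it is not absorbed by $\mathcal{O}(2^{m+n}(m+1)!)$.
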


\begin{proof}[Proof sketch, details in Appendix~\ref{appendix:proofs of the static design section}]
	Each obliging game
	$\Synt(\G,\gr{\tilde{\Aset}}{\tilde{\Gset}}\cap\safe(W),\gbuchi(\Aset))$
	reduces to a Streett game with $(m+1)$ pairs, $\mathcal{O}(|V|\cdot mn)$
	vertices, and $\mathcal{O}(|E|\cdot mn)$ edges (the safety condition does not significantly increase complexity,
	handled by pruning unsafe edges~\cite{bloem2015shield}).
	Using known Streett-game complexity~\cite{piterman2006faster}, each $\Synt$ call costs $\mathcal{O}(|V|^{m+1}|E|(mn)^{m+2}(m+1)!)$ time and $(m+1)!$ memory.
	The nested loops iterate $2^{m+n}$ times; summing these costs and the $\BR$ updates yields the stated bounds.
\end{proof}

To summarize this section, Algorithm~\ref{alg:static computation of strategy
library} provides us $\static$, which maps every subset of assumptions to an
optimal strategy, and we established that the process of obtaining $\static$
needs exponentially smaller runtime and memory as compared to the baseline.
Now we proceed to the dynamic adaptation part, where the strategies in $\static$
are combined at runtime to obtain a terminally gracefully robust strategy for \PZ.

\section{Adaptive Design of Terminally Gracefully Robust Policies}
\label{sec:adaptive strategy}

The adaptive strategy of \PZ uses a time-varying probabilistic mixture of the
statically computed optimal strategies.
Before presenting this probabilistic approach, we briefly describe a
\textit{deterministic} alternative, discuss its pitfalls, and motivate the use
of randomization.

We describe the deterministic strategy using the repeated matching penny game
presented in Example~II in Section~\ref{sec:intro}.
Consider the strategy $\polZ^{(d)}$ of \PZ which operates in ``phases,'' where
the $i$-th phase lasts for $i$ time steps, and $\polZ^{(d)}$ picks $h_\eve $ in
even phases and $t_\eve $ in odd phases.
We explain why this is robust:
by assumption, \PO maintains a partition $\Asetinf$ and
$\Asetfin$ such that against every fixed strategy of \PZ, every assumption $A\in
\Asetinf$ is visited within $N$ steps for some $N\geq 0$.
Let $\Asetinf = \set{T_\adam}$, but importantly, both $N$ and $\Asetinf$,
$\Asetfin$ are unknown to \PZ.
The strategy $\polZ^{(d)}$ of \PZ creates windows of increasing lengths where
only one of her actions is executed, which guarantees that after the $N$-th
phase, there will be infinitely many time windows of lengths larger than $N$
during which \PZ will constantly choose $t_\eve $, guaranteeing that there are
infinitely many matches (i.e., the state $t_\eve t_\adam$ is visited infinitely often).
It can be easily verified that no matter what $\Asetinf$ and $\Asetfin$ are, \PZ
will fulfill the largest possible number of guarantees using this strategy.

Other than the infinite memory requirement, another drawback of the strategy
$\polZ^{(d)}$ is that in the limit, matches will become increasingly rarer,
because phases where \PZ picks the right actions get increasingly distant.
Even though this does not hurt from the standpoint of the LTL semantics of the
GR(1) specification, such behaviors could be undesirable in many practical
scenarios like Example~I.
In contrast, our adaptive strategies guarantee that in the long run,
almost surely, they converge to the static strategy corresponding to the
true set of assumptions fulfilled by \PO.

We now describe our adaptive construction of terminally gracefully robust
strategies.
The key components are \textit{monitors}, whose job is to provide a real-time
estimate of the relative likelihood of the fulfillment of a given set of
assumptions; the monitoring algorithm is presented in Section~\ref{sec:liveness monitors}.
Based on the estimates from the monitors,  we present a probabilistic strategy of \PZ
that adaptively selects a mixture of the statically computed optimal strategies in
accordance with the likelihood estimate obtained from the monitor.

\subsection{Monitoring Environment Assumptions}
\label{sec:liveness monitors}

In formal verification, monitors are software components that passively observe
a sequence of events, and after each new observation, outputs a verdict on
whether the given specification is still satisfied or violated.
Since the assumptions in GR(1) specifications are all \textit{liveness}
properties---always eventually some assumption state should be visited, it is
well-known that they cannot be monitored in the usual sense.
This is because, for  \textit{every} finite path, there will be some infinite
extensions on which all assumptions will hold true but some where all will be
false.
Since the monitor needs to make verdicts based on finite histories only, it will
always remain undetermined about whether the liveness property holds or not.

In the following, we present our novel liveness monitors for B\"uchi
specifications, which, at each time point, outputs a quantitative estimate of the satisfaction of the B\"uchi specification.
Under mild assumptions on the frequency of appearance of the B\"uchi vertices,
as time goes to infinity, the outputs of the monitor converge to $1$ if the
specification holds, and they converge to $0$ otherwise.
To the best of our knowledge, such liveness monitors have not been considered before,
and they could be of independent interest.
Therefore, we make the presentation of liveness monitors in a self-contained
manner.

We imagine that we are given some arbitrary B\"uchi specification $\buchi(S)$
for some arbitrary set $S$ of vertices; in the end, for a given GR(1)
specification $\gr{\Aset}{\Gset}$, we will build a local liveness monitor for
$\buchi(A)$ for each $A\in \Aset$, as well as a global liveness monitor for
$\buchi(\cup_{A\in \Aset}A)$.

\begin{algorithm}
	\caption{Asymptotic liveness monitor for $\buchi(A)$}
	\label{alg:liveness monitor}
	\algrenewcommand\algorithmicrequire{\textbf{Input parameters:}}
	\algrenewcommand\algorithmicensure{\textbf{Output at each step:}}
	\begin{algorithmic}[1]
		\Require initial decay rate $\alpha^0\in (0,1)$, attenuation rate $\lambda \in (0,1)$
		\Ensure current likelihood score $w$
		\Function{Init}{}\Comment{initialize the monitor}
		\State $w \gets 1$ \label{line:alg:liveness
		monitor:a}\Comment{initialize the likelihood estimate variable}
		\State $\alpha \gets \alpha^0$ \label{line:alg:liveness
		monitor:b}\Comment{initialize the decay rate variable}
		\EndFunction
	\end{algorithmic}
	\begin{algorithmic}[1]
		\Function{Next}{$\v$}\Comment{updated output of the monitor upon
		receiving the next vertex}
			\If{$\v \in A$} 
		\State $w \gets 1$ \label{line:alg:liveness monitor:c}\Comment{reset the
		likelihood estimate to the highest value becaues $A$ was seen}
		\State $\alpha \gets  \lambda\alpha$ \label{line:alg:liveness
		monitor:d}\Comment{reduce the decay rate (increased ``confidence'' about fulfilling $\buchi(A)$)}
			\Else
		\State $w \gets  (1-\alpha)w$ \label{line:alg:liveness monitor:e}
		\Comment{reduce the likelihood estimate because $A$ was not seen}
			\EndIf
			\State \Return $w$
		\EndFunction
	\end{algorithmic}
\end{algorithm}

\begin{definition}[Asymptotic liveness monitors]\label{def:liveness monitor}
	An asymptotic liveness monitor, or simply a liveness monitor, for
	$\buchi(A)$ is a function $\M\colon \pathsfin(\G)\to (0,1)$ such that for every infinite path $\rho=v^0v^1\ldots\in \pathsinf(\G)$, the following hold:
	\begin{enumerate}[(I)]
		\item if $\rho\in \buchi(A)$, the sequence
		$\M(v^0),\M(v^0v^1),\M(v^0v^1v^2)\ldots$ converges to $1$; \label{list
	item:assumption monitor:convergence to 1}
		\item if $\rho\notin \buchi(A)$, the sequence
		$\M(v^0),\M(v^0v^1),\M(v^0v^1v^2)\ldots$ converges to $0$. \label{list
	item:assumption monitor:convergence to 0}
	\end{enumerate}
\end{definition}

Here, convergence of an infinite sequence $z^0z^1\ldots$ of real numbers to
a point $z^*$ would mean: for every real number $\varepsilon>0$, there exists a
natural number $N$ such that for every $n\geq N$, $\|z^n-z^*\| < \varepsilon$.
We will denote this by writing ``$z^n\to z^*$ as $n\to \infty$.''

The liveness monitor is presented in Algorithm~\ref{alg:liveness monitor}, and
is explained below.
The monitor introduces the variables $w$ and $\alpha$, called the
\textit{likelihood score}---or sometimes simply called \textit{score}---for the given B\"uchi specification and the \textit{decay rate}, respectively.
Also, the monitor uses a constant $\lambda \in (0,1)$, called the
\textit{attenuation rate}.

Intuitively, the monitor works as follows.
The values of $w$ and $\alpha$ changes over time: the value of $w$ ranges
within $(0,1]$, with its initial value being $1$ (Line~\ref{line:alg:liveness
monitor:a}), whereas the value of $\alpha$ ranges within $(0,1)$, with its
initial value being set arbitrarily in $(0,1)$ (Line~\ref{line:alg:liveness monitor:b}). 
Every time a vertex from $A$ is seen, $w$ is reset to its maximum value $1$
(Line~\ref{line:alg:liveness monitor:c}), from which point it starts decaying
with the rate $(1-\alpha)$ (Line~\ref{line:alg:liveness monitor:e}) until the
next reset upon the next visit to $A$; this way, if $A$ is seen only finitely
often, then $w$ would converge to $0$ as time approaches infinity.
Clearly, the smaller $\alpha$ is, the slower is the decay of $w$. 
Now the decay rate $\alpha$ itself gets discounted by $\lambda$ every time $A$
is visited (Line~\ref{line:alg:liveness monitor:d}).
This captures the situation that as $A$ is visited more and more times, we are
getting more confident about the fulfillment of $\buchi(A)$, and the monitor is
becoming more ``patient'' in receiving the next $A$, which is reflected in the
slower decay of $w$.
Figure~\ref{fig:matching pennies strategy illustration} provides a visual
illustration of how the liveness monitors would behave in response to the
frequency of visits to $H_1$ and $T_1$ in the matching pennies example from
Section~\ref{sec:intro}; for the moment, ignore the bottom three rows.

\input{FIGURES/matching_pennies_policy_illustration}

\begin{restatable}[Soundness of Algorithm~\ref{alg:liveness
monitor}]{theorem}{thmMonitorSoundness}\label{thm:monitor soundness}
	Let $W \subseteq \pathsinf(\G)$ be the set of infinite paths such that for
	every $\rho\in W$, either $\rho\notin \buchi(A)$ or $A$ has a
	constant hitting time in $\rho$.
	(The exact hitting time need not be available.)
	Then the monitor $\M\colon W\to (0,1)$ implemented by Algorithm~\ref{alg:liveness monitor} is an asymptotic liveness monitor for the specification $\buchi(A)$.
\end{restatable}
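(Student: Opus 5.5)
The proof splits into the two cases of Definition~\ref{def:liveness monitor}. First I record the basic invariants of Algorithm~\ref{alg:liveness monitor}. After $k$ visits to $A$, the decay rate is $\alpha = \lambda^k\alpha^0 \in (0,1)$. The score $w$ always lies in $(0,1]$, because it is either reset to $1$ or multiplied by $1-\alpha\in(0,1)$. Let $w^t$ be the score after processing $v^0\ldots v^t$. Between two consecutive visits to $A$, $\alpha$ is constant. So if the last visit to $A$ up to time $t$ was at time $s\le t$ and $k$ visits have occurred, then
\[
w^t = (1-\lambda^{k}\alpha^0)^{\,t-s}.
\]
Everything below reads off this closed form.

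\textbf{Case (II), $\rho\notin\buchi(A)$.} There is a last visit time $s$ (or no visit at all, in which case $k=0$ and $s=-1$ effectively). After it, $\alpha$ is frozen at the positive constant $\bar\alpha=\lambda^{k}\alpha^0$, and $w^t=(1-\bar\alpha)^{t-s}\to 0$ geometrically as $t\to\infty$. This case needs no hitting-time hypothesis.

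\textbf{Case (I), $\rho\in\buchi(A)$.} Here $\rho\in W$ forces a constant hitting-time bound $N$: after the first visit to $A$, consecutive visits are at most $N$ steps apart, so $t-s\le N$ for all $t$ beyond the first visit. At the same time, $k=k(t)\to\infty$ because $A$ is visited infinitely often. Hence
\[
1\ \ge\ w^t\ \ge\ (1-\lambda^{k(t)}\alpha^0)^{N}\ \ge\ 1-N\lambda^{k(t)}\alpha^0,
\]
using Bernoulli's inequality. Since $\lambda^{k(t)}\to 0$, we get $w^t\to 1$. Given $\varepsilon$, we choose $K$ with $N\lambda^K\alpha^0<\varepsilon$ and take the time of the $K$-th visit as the threshold in the convergence definition.

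The main obstacle is conceptual rather than computational. The bound $t-s\le N$ is exactly where the hypothesis on $W$ is used: without it, gaps could grow fast enough that $(1-\lambda^k\alpha^0)^{\text{gap}}$ does not tend to $1$. I would flag this as the step the hypothesis is for, and note that the same estimate $w^t\ge 1-g\,\lambda^{k}\alpha^0$ extends to gaps $g$ growing subexponentially in $k$, which is the generalization mentioned in the appendix. I would also verify that the monitor's range is $(0,1)$ as claimed, or else note that $w$ takes the value $1$ at reset times, which is harmless for convergence.
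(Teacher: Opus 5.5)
Your proposal is correct and follows the paper's proof. For the B\"uchi case, the paper likewise bounds $w$ between the $k$-th and $(k+1)$-th visits from below by $(1-\lambda^k\alpha^0)^n$, using the constant hitting-time bound $n$. For the co-B\"uchi case, it likewise observes that after the last visit $w$ decays by a fixed factor $(1-\alpha)$ at every step. Your closed form and the Bernoulli estimate $1-N\lambda^{k(t)}\alpha^0$ just make the $\varepsilon$-threshold explicit. Your remark that $w$ takes the value $1$ at resets, so its range is really $(0,1]$, is accurate and harmless.
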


\begin{proof}[Proof sketch, see Appendix~\ref{appendix:proofs of the adaptive section}]
	For convergence to $1$ when $\rho\in\buchi(A)$: after $k$ visits to $A$, the decay rate reduces to $\lambda^k\alpha^0$, and with constant hitting time bounded by some $n$, between consecutive visits $w$ remains at least $(1-\lambda^k\alpha^0)^n {\to} 1$ as $k{\to}\infty$.
	For convergence to $0$ when $\rho{\notin}\buchi(A)$: after the last visit to $A$, the weight decays by a constant factor $(1-\alpha)$ each step, so $w\to 0$.
\end{proof}

While Theorem~\ref{thm:monitor soundness} proves soundness of liveness monitors
for constant hitting time of the live vertices, the claim can be significantly
strengthened by showing that the same monitors would work for when the hitting
time is bounded by any sub-exponential function. 
This result could be of independent interest in runtime verification, and
therefore, we have included the general statement and its proof in
Appendix~\ref{sec:appendix:monitor generalization}.

\subsection{Randomized Mixing of Policies via Assumption Monitoring}
\label{sec:randomized mixing}

Algorithm~\ref{alg:adaptive strategy} presents our construction of the adaptive
strategy for \PZ.
We describe the algorithm in the following.
In our description and subsequent analysis, we explicitly include the time $t$
to several quantities, and write $w_i^t$, $w_{\mathbf{B}}^t$,
$p_{\mathbf{B}}^t$, etc. 
Whereas in Algorithm~\ref{alg:adaptive strategy}, the superscript ``$t$'' is
omitted and the same quantities are treated as variables whose values get
overwritten with a new value at each time step.

For each individual assumption $A_i\in \set{A_1,\ldots,A_m}= \Aset$, an
asymptotic liveness monitor $\M_i$ is instantiated to keep track of the
likelihood score of $\buchi(A_i)$.
These likelihood scores for the individual assumptions are aggregated to
form likelihood scores for $\grabin(\Bset,\Aset\setminus \Bset)$ for every
$\Bset\in 2^{\Aset}\setminus \emptyset$, indicating the likelihood that
$\Bset$ is the \textit{only} set of assumptions that are being fulfilled.
Intuitively, the likelihood score for $\grabin(\Bset,\Aset\setminus \Bset)$
should be high iff every $A\in \Bset$ has a high likelihood score \textit{and}
every $A\notin\Bset$ has a low likelihood score.
Formally, let at any given time $t$ the liveness score computed by $\M_i$ be
$w_i^t$ for each $i$.
Then for every $\Bset\in 2^{\Aset}\setminus \emptyset$, define the
\textit{intermediate} likelihood score for every $t=0,1,\ldots$:
\begin{align}\label{eq:definition of w tilde for nonempty sets}
\tilde{w}_{\mathbf{B}}^t\coloneqq \min_{A\in
\mathbf{B}} w_A^t - \max\left(\max_{A\notin \mathbf{B}} w_A^t, \gamma^t\right), && \text{with } \gamma^t = \frac{1}{\ln(3+t)}.
\end{align}
Since $0< w_A^t \leq 1$ and $\gamma^t>0$ for all $t\geq 0$, we obtain $-1 < \tilde{w}_{\Bset}^t < 1$.
We will refer to $\gamma^t$ as the \textit{bias} in intermediate weights, and is
a regularization term that prevents degenerate cases at finite times; it decays
slowly enough ($\gamma^t \to 0$) to preserve the correct asymptotic behavior;
the technical justification of bias is difficult to provide at this point, and
it can be found in Remark~\ref{rem:dominated within
subgame} in Appendix~\ref{appendix:proofs of the adaptive section} after the
proof of soundness of our adaptive strategies (Theorem~\ref{thm:overall soundness}).

% \KS{
% Since $0< w_A^t \le 1$ at all $t$, every $\Bset$ with
% $\emptyset\neq\Bset\subsetneq\Aset$ satisfies $-1 < \tilde{w}_{\Bset}^t < 1$.
% For the full set $\Bset=\Aset$, the maximum is taken over the empty set
% (and equals $0$), so $\tilde{w}_{\Aset}^t=\min_{A\in\Aset} w_A^t\le 1$. So value 1 is attainable, right?  But then $w_{\Bset}^t$ below can become $\infty$.}

While we can now estimate the likelihood of satisfaction of nonempty sets of
assumptions, we need a separate monitor for tracking if none of the assumptions
will be fulfilled, i.e., the likelihood of the satisfaction of $\grabin(\emptyset,\Aset) = \cobuchi(\cup_{A\in \Aset} A)$.
For this, we instantiate a separate liveness monitor $M_\cup$ to keep track of
the \textit{dual} likelihood score $w_\cup^t$ of $\buchi(\cup_{A\in \Aset}A)$ at
time $t$.
Clearly, the higher $w_\cup^t$ is, the lower is the likelihood
$\tilde{w}_\emptyset^t$ that the set of fulfilled assumptions is empty.

%\KS{Formally, we define the \textit{intermediate} likelihood estimate of the empty
%set of assumptions as:

%\[
%\tilde{w}_{\emptyset}^t \coloneqq 1-2w_\cup^t,
%\]

%We scale $w_\cup^t$ by $2$, before subtracting from $1$, so that
%$\tilde{w}_{\emptyset}^t$ takes values in the interval $[-1,1)$ as the
%$\tilde{w}_{\Bset}^t$ for nonempty $\Bset$: since $0<w_\cup^t\leq 1$, we have
%$-1\leq \tilde{w}_{\emptyset}^t<1$. 

%Finally, for every $\Bset$ and $t$, we define the \textit{likelihood score} as:
%\begin{equation}\label{eq:def:w for sets of assumptions}
%	w_{\Bset}^t \coloneqq \frac{1+\tilde{w}_{\Bset}^t}{1-\tilde{w}_{\Bset}^t}.
%\end{equation}

%Consequently $w_{\emptyset}^t=\frac{1+\tilde{w}_{\emptyset}^t}{1-\tilde{w}_{\emptyset}^t}\in[0,\infty)$;
%in particular $w_{\emptyset}^t=0$ exactly when $w_\cup^t=1$ (a vertex of
%$\cup_{A\in\Aset}A$ was just visited), and $w_{\emptyset}^t\to\infty$ when
%$w_\cup^t\to 0$.  Note that $w_\cup^t >0$ so 
%$w_{\emptyset}^t < \infty$. Thus, 
% $w_B >0$ for all non empty $B$ and $w_{\emptyset} \geq 0$.
%This transformation maps $-1<\tilde{w}_{\Bset}^t < 1$ to $0<w_{\Bset}^t<\infty$ for $\Bset \neq \emptyset$, and  $-1 \leq \tilde{w}_{\emptyset}^t < 1$ to $0 \leq w_{\emptyset}^t<\infty$ 
%and strongly punishes negative values of $\tilde{w}_{\Bset}^t$ while amplifying
%the positive positive values.
%}

Formally, we define the \textit{intermediate} likelihood estimate of the empty
set of assumptions as:
\begin{equation}\label{eq:definition of w tilde for empty sets}
\tilde{w}_{\emptyset}^t \coloneqq \min\!\big(1-2w_\cup^t,\; 1-\gamma^t\big),
\end{equation}
where $\gamma^t$ is the bias from Eqn.~\eqref{eq:definition of w tilde for nonempty sets}.
We scale $w_\cup^t$ by $2$ before subtracting from $1$ so that, like the nonempty
sets, $\tilde{w}_\emptyset^t$ ranges over $[-1,1)$: since $0<w_\cup^t\leq 1$ we
have $1-2w_\cup^t\in[-1,1)$, and the cap at $1-\gamma^t$ keeps
$\tilde{w}_\emptyset^t$ bounded away from $1$, giving
\[
-1\ \leq\ \tilde{w}_\emptyset^t\ \leq\ 1-\gamma^t\ <\ 1 .
\]
Note that nonempty sets $\Bset\in
2^{\Aset}$ also  satisfies $\tilde{w}_\Bset^t\leq 1-\gamma^t$, equivalently
$1-\tilde{w}_\Bset^t\geq\gamma^t$.

Finally, for every $\Bset$ and $t$, we define the \textit{likelihood score} as:
\begin{equation}\label{eq:def:w for sets of assumptions}
	w_{\Bset}^t \coloneqq \frac{1+\tilde{w}_{\Bset}^t}{1-\tilde{w}_{\Bset}^t}.
\end{equation}
Because $1-\tilde{w}_\Bset^t\geq\gamma^t>0$ for every $\Bset$, each $w_\Bset^t$ is
finite, with $0<w_\Bset^t\leq \tfrac{2}{\gamma^t}$ for nonempty $\Bset$ and
$0\leq w_\emptyset^t\leq \tfrac{2}{\gamma^t}$. In particular $w_\emptyset^t=0$
exactly when $w_\cup^t=1$ (a vertex of $\cup_{A\in\Aset}A$ was just visited), and
$w_\emptyset^t\to\infty$ when $w_\cup^t\to 0$: as
$w_\cup^t\to0$, $\tilde{w}_\emptyset^t=1-\gamma^t\to1$ since
$\gamma^t\to0$, and hence $w_\emptyset^t\to\infty$. Summarizing,
$w_\Bset^t>0$ for every nonempty $\Bset$ and $w_\emptyset^t\geq 0$, so the
denominator in the normalization \eqref{eq:probability from weights} is always
positive.

The transformation~\eqref{eq:def:w for sets of assumptions} maps
$-1<\tilde{w}_\Bset^t\leq 1-\gamma^t$ to $0<w_\Bset^t\leq \tfrac2{\gamma^t}$ for
nonempty $\Bset$, and $-1\leq\tilde{w}_\emptyset^t\leq 1-\gamma^t$ to
$0\leq w_\emptyset^t\leq \tfrac2{\gamma^t}$; it strongly punishes negative values
of $\tilde{w}_\Bset^t$ while amplifying the positive ones. As $t\to\infty$ the
bound $\tfrac2{\gamma^t}\to\infty$.

%\KM{Probably a better explanation is needed on why this particular
%transofmration. Think. the honest answer is that the math worked! Also, add a
%remark on why sigmoid would not work (bounded probability range, does not
%converge to zero or one) and why clipping $\tilde{w}_{\Bset}^t$ to zero wouldn't
%work (the probability of one strategy could suddenly drop to zero, which would
%mess up Lemma~\ref{claim:lower bound on N-step consecutive execution of any one
%strategy}). However, I have not considered the simple transformation
%$\tilde{w}_{\Bset}^t\mapsto 1+ \tilde{w}_{\Bset}^t$. I can imagine, this would
%not punish the negative values as strongly, otherwise I can't think of any
%reason. Maybe an example is required here. } 

Summarizing, using $m+1$ separate liveness monitors $\M_1,\ldots,\M_m,\M_\cup$,
for every $\Bset\in 2^\Aset$, we obtain an estimate of the likelihood of that
$\Bset$ is the precise set of assumptions that is going to be fulfilled in the long run.

We now proceed to turn the likelihood estimates to probability distributions.
Intuitively, the adaptive strategy of \PZ is a probabilistic mixture of the
statically computed optimal strategies represented using $\static(\cdot)$, where,
for every $\Bset\in 2^{\Aset}$, the probability of using the strategy
$\static(\Bset)$ at each time step $t$ will be correlated to the likelihood
estimate $\tilde{w}_{\Bset}^t$ of $\Bset$.

We turn the likelihood estimates into a \textit{probability distribution} via normalization.
At each time $t$ and given any subset $\mathbf{B}\in 2^{\Aset}$, the current probability estimate that the only $\mathbf{B}$ will
be met in the long run, i.e., $\grabin(\mathbf{B},\Aset\setminus\mathbf{B})$
will be satisfied, is given by:
\begin{align}\label{eq:probability from weights}
p_{\mathbf{B}}^t \coloneqq \frac{w_{\mathbf{B}}^t}{\sum_{\mathbf{B}'\in 2^{\Aset}}w_{\mathbf{B}'}^t}
\end{align}
Since $w_{\Bset}^t > 0 $ for all $\Bset\neq \emptyset$ and all $t$, the denominator of \eqref{eq:probability from weights} is always
nonzero, and therefore $p_{\mathbf{B}}^t$ is well-defined.
Moreover, 
\[
\sum_{\Bset\in 2^\Aset} p_{\Bset}^t = \sum_{\Bset\in 2^\Aset} \frac{w_{\mathbf{B}}^t}{\sum_{\mathbf{B}'\in 2^{\Aset}}w_{\mathbf{B}'}^t} = 1,
\]
implying that $(p_\Bset^t)_{\Bset\in 2^\Aset}$ is a valid probability
distribution.
Figure~\ref{fig:matching pennies strategy illustration} shows an illustration of how the quantities $\tilde{w}_{\Bset}^t$, $w_{\Bset}^t$, and
$p_{\Bset}^t$ would vary over time for a sample run of the matching pennies
example (from Section~\ref{sec:intro}).

\begin{algorithm}
	\caption{Adaptive strategy for $\gr{\Aset}{\Gset}$}
	\label{alg:adaptive strategy}
	\begin{algorithmic}[1]
		\algrenewcommand\algorithmicrequire{\textbf{Input parameters:}}
		\algrenewcommand\algorithmicensure{\textbf{Output at each step:}}
		\Require The intermediate strategy library $\Gamma$ computed by
		Algorithm~\ref{alg:static computation of strategy library}, monitor
		parameters $\alpha^0$ and $\lambda$
		\Ensure Next vertex selected by the adaptive strategy to extend the
		current path
		\Function{Init}{$\vinit$} \Comment{initialize}
			\State $t\gets 0$ \Comment{variable recording the current time}
			\State $h\gets \vinit$ \Comment{variable recording the path seen so far}
			\For{$A_i=A_1,\ldots,A_m\in \Aset$}
			\State $\M_i\gets$ asymptotic liveness monitor for $\buchi(A_i)$
			with parameters $\alpha^0$, $\lambda$
		\State $\M_i.\Call{Init}{}$ \Comment{monitors for estimating the
		likelihood of the assumptions}
			\EndFor
			\State $\M_\cup \gets$ asymptotic liveness monitor for
			$\buchi(\cup_{A_i\in \Aset} A_i)$ with parameters $\alpha^0$, $\lambda$
		\State $\M_\cup.\Call{Init}{}$ \Comment{monitor for estimating the
		likelihood of the empty set of assumptions}
		\EndFunction
	\end{algorithmic}
	\begin{algorithmic}[1]
		\Function{Next}{$\v$} \Comment{update upon receiving the next vertex}
			\State $t\gets t+1$ \Comment{progress time}
			\State $\gamma^t \gets 1/\ln(3+t)$ \Comment{current bias}
			\State $h\gets h\v$ \Comment{update path }
			\For{$A_i=A_1,\ldots,A_m\in \Aset$}
				\State $w_{A_i}\gets \M_i.\Call{Next}{v}$
			\EndFor
			\State $w_{\cup}\gets \M_\cup.\Call{Next}{v}$
			\If{$\v\in \VO$} \Comment{the opponent picks the next vertex, so
			output nothing}
				\State \Return $\bot$
			\Else \Comment{compute the next vertex by mixing the static optimal strategies}
				\For{every $\mathbf{B}\in 2^\Aset\setminus \emptyset$}
			% \State $w_{\mathbf{B}} \gets \max\left\lbrace \min_{A\in \mathbf{B}}
			% w_A - \max_{A\notin \mathbf{B}} w_A, 0\right\rbrace$
			\State $\tilde{w}_{\mathbf{B}} \gets \min_{A\in \mathbf{B}}
			w_A - \max\set{\max_{A\notin \mathbf{B}} w_A, \gamma^t}$
			\Comment{see Eqn.~\eqref{eq:definition of w tilde for nonempty sets}}
			\EndFor
			\State $\tilde{w}_\emptyset \gets \min\!\big(1-2w_\cup^t,\;
			1-\gamma^t\big)$ \Comment{see Eqn.~\eqref{eq:definition of w tilde for empty sets}}
			\For{every $\mathbf{B}\in 2^\Aset$}
			\State $w_{\mathbf{B}} \gets
			(1+\tilde{w}_{\mathbf{B}})/(1-\tilde{w}_{\mathbf{B}})$ \Comment{see Eqn.~\eqref{eq:def:w for sets of assumptions}}
			\EndFor
				% \State $N\gets w_\emptyset + \sum_{\mathbf{B}\in 2^\Aset\setminus
				% \emptyset}w_{\mathbf{B}}$
				\For{every $\mathbf{B}\in 2^\Aset$}
					\State $p_{\mathbf{B}} \gets w_{
					\mathbf{B}}/\sum_{\mathbf{B}'\in 2^{\Aset}} w_{\mathbf{B}'}$
					\Comment{see Eqn.~\eqref{eq:probability from weights}}
					\label{line:alg:adaptive strategy:probability distribution}
					% \State $p_{\mathbf{B}} \gets \frac{\exp(w_{
					% \mathbf{B}})}{\sum_{\mathbf{B}'\in 2^{\Aset}} \exp(w_{\mathbf{B}'})}$
					% \label{line:alg:adaptive strategy:probability distribution}
				\EndFor
				% \State $p_\emptyset \gets w_\emptyset/N$
				\State Sample $\pi\sim \widetilde{\Pi}$ where
					$P(\widetilde{\Pi}=\BR(\v,\mathbf{B}))=p_{\mathbf{B}}$ for every
				$\mathbf{B}\in 2^{\Aset}$ \Comment{strategy mixing}
				% \parbox[t]{0.8\linewidth}{
				% 	$\pi\sim P(\widetilde{\Pi})$ where\\
				% 	% $P(\widetilde{\Pi}=\BR(\v,\emptyset))=p_\emptyset$ and\\
				% 	$P(\widetilde{\Pi}=\BR(\v,\mathbf{B}))=p_{\mathbf{B}}$ for every
				% $\mathbf{B}\in 2^{\Aset}$}
				\State \Return $\pi(h)$
			\EndIf
		\EndFunction
	\end{algorithmic}
\end{algorithm}

We will show that as $t\to \infty$, 
$w_{\mathbf{B}}^t\to \infty$ for exactly one $\mathbf{B}\in 2^{\Aset}$---the one
for which $\grabin(\mathbf{B},\Aset\setminus \mathbf{B})$ holds, while for
every $\mathbf{B}'\neq \mathbf{B}$, $w_{\mathbf{B}'}^t\to 0$, implying that
$p_{\mathbf{B}}^t \to 1$ while $p_{\mathbf{B}'}^t\to 0$.

Given the probability distribution computed using \eqref{eq:probability from
weights}, at each time $t$, our \textit{adaptive strategy} $\polZadaptive$ selects
a probabilistic mixture of the statically computed optimal strategies represented
by $\static$ (described in Section~\ref{sec:static design phase}).
In particular, for every $\mathbf{B}\in 2^{\Aset}$ and at each vertex, the
corresponding optimal strategy $\static(\mathbf{B})$, call it $\polZ^t$, is
selected with probability $p_{\mathbf{B}}^t$.
If the path seen so far is $h=\v^0,\ldots,\v^t$, then the next vertex is obtained as $\v^{t+1}=\polZ^t(h)$.
Although $h$ is the history produced by the adaptive strategy---which switches between different $\static(\Bset)$'s---this does not violate staying within $W$: every $\Gamma(\v,\Aset')$ is built on a graph where all unsafe edges leaving $W$ are pruned (cf.\ proof of Theorem~\ref{thm:static complexity}), so no strategy can recommend an unsafe move regardless of history.

In the rest of this section, we show that, under Assumption~\ref{assump:finite-memory environment}, the adaptive strategy produced by Algorithm~\ref{alg:adaptive strategy} is terminally gracefully robust, almost surely.
However, graceful robustness does not hold in general: with positive probability, the adaptive strategy may select optimal strategies that concern only fragments of the GR(1) specification.
For instance, in Examples~\ref{ex:gracefulness} and \ref{ex:graceful robustness}, we have $\static(\set{A_1}) = \static(\set{A_2}) = \pol_{\eve,c}$, but $\static(\set{A_1,A_2})$ can be either $\pol_{\eve,b}$ or $\pol_{\eve,c}$.
If $\static(\set{A_1,A_2}) = \pol_{\eve,b}$, then at the initial vertex $v^0=0$, the strategy $\pol_{\eve,b}$ could be selected with positive probability; we have already argued that this yields terminal graceful robustness but not graceful robustness.

\subsection{Theoretical Analysis: Convergence of the Adaptive Strategies}
\label{sec:proof of convergence}

For every arbitrary strategy $\polO$ of \PO, the adaptive strategy $\polZadaptive$ of \PZ
gives rise to a probability space over the sample space of infinite paths in $\G$.
This is formalized as follows.
Suppose $h = v^0\ldots v^t$ is a finite path.
The cylinder set generated by $h$ is the subset of infinite paths, compliant
with $\polO$, whose prefix is $h$, i.e., $\cyl(h)\coloneqq
\set{\rho\in\pathsinf(\G)\mid \exists \rho'\in \langO(\G,\polO,\last(h))\;.\;\rho = h\rho'}$.
Let $\widetilde{\Pi}(h)$ be the probability distribution over the set
$\set{\static(\Bset)}_{\Bset\in 2^\Aset}$ for the given history $h$, as computed
by $\polZadaptive$ in Algorithm~\ref{alg:adaptive strategy}.
Define:
\[
P(v^{t+1}\mid h=v^0\ldots v^t) \coloneqq 
		\sum_{\Bset\in 2^\Aset}\widetilde{\Pi}(h)(\static(\Bset))\cdot \mathds{1}(\static(\Bset)(h)=v^{t+1}).
\]
The operator $P$ defined above is lifted to cylinder sets by inductively
applying it on finite paths:
\[
\P(Cyl(h)) \coloneqq \prod_{s=0}^{t-1} P(v^{s+1}\mid v^0\ldots v^s).
\]
By the standard procedure of using Carath\'eodory's extension
theorem~\cite[Chapter~10]{baier2008principles}, the function $\P$ can be
extended to a unique probability measure---also denoted as $\P$---over the
sample space of infinite paths compliant with the strategy $\polO$ of \PO, i.e., the paths in $\langO(\G,\polO,v^0)$.

We extend the concepts of satisfaction of temporal properties to this
probabilistic setting in the natural way.
For example, the adaptive strategy $\polZadaptive$ of \PZ is graceful $\P$-almost
surely if it is a winning strategy for the GR(1) game $(\G,\gr{\Aset}{\Gset})$,
and for every finite path $\rho$ compliant with $\polZadaptive$, there exists a
strategy $\polO$ of \PO such that 
\begin{equation}\label{eq:nonblockingness}
\P[ \gbuchi(\Aset)| \rho \text{ is the history
}\land \polZadaptive \text{ and } \polO \text{ are used after }\rho] = 1.
\end{equation}

We now proceed to state the main result of this paper.

\begin{restatable}[Soundness of adaptive strategies of \PZ]{theorem}{thmOverallSoundness}\label{thm:overall soundness}
	Suppose Assumption~\ref{assump:finite-memory environment} holds.
	The randomized strategy of \PZ obtained from Algorithm~\ref{alg:adaptive strategy}
	is a terminally gracefully robust strategy, $\P$-almost surely.
\end{restatable}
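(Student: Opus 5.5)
The proof has two parts. First we show that the adaptive strategy is graceful and GR(1)-winning; this is condition~(a) of Definition~\ref{def:terminally gracefully robust strategies}. Then we show that, almost surely, the play eventually commits to $\static(\Asetinf)$; this gives condition~(b).

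\textbf{Condition (a).} Every library strategy $\Gamma(\v,\Bset)$ is a winning strategy of an obliging game restricted to the pruned graph inside $W$, so every play of $\polZadaptive$ stays in $W$. By Theorem~\ref{thm:static computation soundness}, each $\static(\Bset)$ is graceful from every vertex of $W$. Fix a finite history $\rho$. \PO can oblige by playing the cooperative finite-memory strategy associated with the library. The difficulty is that randomized switching between library strategies could, in principle, keep \PO from completing its assumption cycles.

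We handle this with a uniform lower bound. For every $\Bset$ and every $N$, the probability that $\static(\Bset)$ is sampled $N$ times in a row is bounded below by a positive quantity that does not depend on the history. This follows because every $w_{\Bset}^t$ with $\Bset\neq\emptyset$ is bounded below by $(1+\tilde w)/(1-\tilde w)$ with $\tilde w>-1$, and every $w^t_{\Bset'}\le 2/\gamma^t = 2\ln(3+t)$. Hence $p_{\Bset}^t \gtrsim c/\ln t$, where $c$ depends on how far the scores stay from the $-1$ boundary. Along stretches of length equal to the relevant hitting time, this bound is positive. We then apply a Borel--Cantelli-type argument, using the conditional second Borel--Cantelli lemma (Lévy's extension) over disjoint windows. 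Since $\sum_k (c/\ln t_k)^{N}=\infty$, almost surely there are infinitely many windows in which a single graceful strategy runs long enough for \PO to visit every assumption. This yields \eqref{eq:nonblockingness}.

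For GR(1)-winning, take any \PO strategy. If some assumption is violated, the formula holds trivially. Otherwise every $w_A^t\to 1$ along plays with bounded hitting time. Strictly speaking, the hitting-time bound from Assumption~\ref{assump:finite-memory environment} holds for non-adversarial \PO; for arbitrary \PO we would instead argue via the monotone structure of $\Gamma$ in Algorithm~\ref{alg:static computation of strategy library}. Every $\Gamma(\v,\Bset)$ achieves at least the guarantees that are achievable against $\Bset\subseteq\Aset$, and the strategy with $\Bset=\Aset$ is GR(1) winning. The same windows argument then shows that, almost surely, all guarantees are visited infinitely often.

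\textbf{Condition (b).} Now assume \PO is non-adversarial with partition $(\Asetinf,\Asetfin)$. We first break the circular dependency between the strategy and the monitors. Each $\static(\Bset)$ is graceful, so against it each $A\in\Asetinf$ has bounded hitting time $\hittingtime(\static(\Bset),A)$. Let $K$ be the maximum of these bounds over the finitely many $\Bset$. Under the mixed play, however, hitting times are bounded only during long monochromatic runs. Therefore we work with the sequence of monochromatic windows of length $\ge K$ supplied by the lower-bound lemma. In each such window, every $A\in\Asetinf$ is visited at least once.

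The argument then proceeds in three steps.
\begin{enumerate}
\item By Theorem~\ref{thm:monitor soundness} applied along these windows, $w_A^t\to 1$ for $A\in\Asetinf$ and $w_A^t\to 0$ for $A\in\Asetfin$. Convergence to $0$ is immediate, since after the last visit the score decays geometrically. Convergence to $1$ requires that the gaps between visits grow slower than $\lambda^{-k}$, and this is where the sub-exponential generalization in Appendix~\ref{sec:appendix:monitor generalization} is needed.
\item Once the scores converge, $\tilde w_{\Asetinf}^t\to 1$, so $w^t_{\Asetinf}\to\infty$. Every other $\Bset$ has $\tilde w^t_{\Bset}\to$ a value $\le 0$. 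For $\emptyset$ this uses $w_\cup\to1$ when $\Asetinf\neq\emptyset$; the case $\Asetinf=\emptyset$ is symmetric, using $1-\gamma^t\to1$. The bias $\gamma^t$ keeps wrong sets at $\tilde w\le -\gamma^t$ or at $\tilde w\to -1$, so $w_{\Bset}^t$ remains bounded. Hence $p^t_{\Asetinf}\to1$.
\item To upgrade ``$p\to1$'' to ``eventually only $\static(\Asetinf)$ is sampled'', we need $\sum_t(1-p^t_{\Asetinf})<\infty$. Otherwise we use the weaker limit statement that the limit subgame is almost surely determined by $\static(\Asetinf)$. On this limit subgame, Definition~\ref{def:optimal strategy for assumption subset}(d) shows that no graceful strategy fulfills more guarantees against $\Asetinf$. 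Under Assumption~\ref{assump:finite-memory environment}, the set of satisfied assumptions is fixed across graceful strategies, so no graceful strategy dominates. This is condition~(b).
\end{enumerate}

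\textbf{Main obstacle.} The hard step is the quantitative estimate: the lower bound on $N$ consecutive selections must be strong enough to survive the growth of $1/\gamma^t$. At the same time, the decay of $1-p^t_{\Asetinf}$ must be fast enough for the Borel--Cantelli steps in both directions. I would first try to bound $w_A^t$ for $A\in\Asetinf$ from below by $(1-\lambda^{k}\alpha^0)^{K}$ with $k\gtrsim t/K$. This makes $1-\tilde w_{\Asetinf}$ decay essentially geometrically in $t$, while the wrong weights stay $O(1)$. The resulting tail sum is then finite.
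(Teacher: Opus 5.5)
Your split into conditions (a) and (b) matches the paper's. However, the quantitative step you call the main obstacle aims at a statement that is false, and the estimate you propose for it is ruled out by the bias term. For every $\Bset$, including $\Bset=\Asetinf$, the definition gives $1-\tilde w^t_{\Bset}\ge\gamma^t$. Hence $w^t_{\Asetinf}\le 2/\gamma^t=2\ln(3+t)$, and $1-\tilde w^t_{\Asetinf}$ can never decay geometrically. Contrary to your Step~2, some wrong sets do not sit at $\tilde w\le-\gamma^t$ or tend to $-1$. A nonempty proper subset of $\Asetinf$, or a superset of $\Asetinf$ that meets $\Asetfin$, has $\tilde w^t_{\Bset}\to0$ and therefore $w^t_{\Bset}\to1$. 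So whenever $|\Asetinf|\ge2$ or $\Asetfin\neq\emptyset$, you get $1-p^t_{\Asetinf}\ge c/\ln(3+t)$ for some $c>0$ and all large $t$, and $\sum_t(1-p^t_{\Asetinf})=\infty$. By L\'evy's extension of Borel--Cantelli, wrong library strategies are then sampled infinitely often almost surely, whenever \PZ moves with positive frequency. ``Eventually only $\static(\Asetinf)$ is sampled'' is therefore unattainable, not merely hard. Your bound $(1-\lambda^k\alpha^0)^K$ with $k$ of order $t/K$ also assumes a uniform hitting time under the mixed play, which you yourself note does not exist. The paper never attempts this upgrade. It proves only $p^t_{\Asetinf}\to1$ almost surely and derives condition~(b) from that, which is your fallback. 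In condition~(a) it uses $p^t_{\Aset}\to1$ only to get arbitrarily long runs of $\static(\Aset)$ infinitely often.

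What is actually missing is a growth rate for visits to $A\in\Asetinf$ under mixing. L\'evy's lemma gives infinitely many good windows but no control on the gaps between them, so applying the sub-exponential monitor theorem ``along these windows'' is unsupported. Lemma~\ref{claim:convergence of w to 1 for infinitely visited assumptions} supplies this rate in four steps:
\begin{enumerate}
\item Each block of length $2N$ contains a visit to $A$ with conditional probability at least $\varepsilon'(\gamma^{t+2N})^N\approx\varepsilon'/(\ln t)^N$, whatever the history.
\item A union bound over which of the $\lfloor t/2N\rfloor$ blocks fail makes $\P[\mathit{visits}_t<(\ln t)^2]$ super-polynomially small.
\item The first Borel--Cantelli lemma then gives $\mathit{visits}_t\ge(\ln t)^2$ eventually, almost surely.
\item The crude bound $w^t_A\ge(1-\lambda^{\mathit{visits}_t}\alpha^0)^t$, together with $t\lambda^{(\ln t)^2}\to0$, yields $w^t_A\to1$ without any hitting-time bound on the mixed play.
\end{enumerate}
Separately, your history-independent lower bound ``for every $\Bset$'' is false. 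If $w^t_A$ is tiny while another score equals $1$, then $\tilde w^t_{\{A\}}$ is close to $-1$ and $p^t_{\{A\}}$ is essentially $0$. Lemma~\ref{claim:lower bound on N-step consecutive execution of any one strategy} bounds only the probability of a streak of \emph{some} history-dependent singleton: one with score at least $\tfrac12$, or an arbitrary one when all scores are below $\tfrac12$. That is all the window arguments need. The bound does hold for $\Bset=\Aset$, since $\tilde w^t_{\Aset}=\min_A w^t_A-\gamma^t>-\gamma^t$. This would let \PO commit in advance to a single cooperative response, to $\static(\Aset)$, in condition~(a).
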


The proof of Theorem~\ref{thm:overall
soundness} uses a series of intermediate lemmas that we describe one by one.

The first lemma is an auxiliary result that will be used in multiple technical
steps afterwards.
Intuitively, this lemma establishes a (positive) lower bound on the probability
with which any one optimal strategy will be sampled for a given number $N>0$ of
consecutive time steps in any given interval of length $2N$. %, conditioned on $\polZ^*$ being sampled at the current step.
% The parameter $2N$ is called the \textit{horizon} in the following text.
% If all vertices in the next $N$ steps are \PO vertices, then this probability equals $1$---the trivial case.
% In general, the more often we see a \PZ vertex, the less are the chances of
% continuing with the same strategy: an analogy is that the more number of
% steps we roll a dice, the less is the probability of seeing any one side at each roll.
%
The proof of the lemma needs to account for an important criticality.
Namely, for every strategy $\polZ^*$ of \PZ that is optimal for a given set of
assumptions $\Aset^*\in 2^{\Aset}$, the probability of $\polZ^*$ being sampled
may vary continuously over the window of time $2N$, in response to the frequency
of seeing different assumptions on the run.
To give an analogy, imagine we are throwing a dice whose bias is constantly
changing, and our goal is to find a lower bound on the probability with which
any one side of the dice will be seen for $N$ times in a row in any given sequence of $2N$ tosses. 
Our insight is that, even though the probability distribution is shifting, the
net shift in the worst case over the horizon can be bounded.

\begin{restatable}{lemma}{lemLowerBoundConsecutiveExecution}\label{claim:lower bound on N-step consecutive execution of any one strategy}
	Let $N \geq 1$ be an arbitrary integer, and $h=v^0\ldots \v^\tau$ be an
	arbitrary (random) path seen until a given time $\tau\geq 1$.
	In the adaptive strategy, the probability that any one of the statically
	computed optimal strategies will be scheduled for $N$ times in a row
	within the interval $[\tau+1;\tau+2N]$ is at least
	\begin{align}\label{eq:probability lower bound of seeing a streak of one strategy}
		(\gamma^{\tau+2N})^N\cdot\frac{(1-\alpha^0)^{N(N-1)/2}}{2^{(m+5)N}},
	\end{align}
	where $\gamma^t$ is the bias term in intermediate weights at time $t$,
	$\alpha^0\in (0,1)$ is the initial decay rate (given parameter for the
	monitor),	and $m$ is the total number of assumptions in the GR(1)
	specification, i.e., $m=|\Aset|$.
\end{restatable}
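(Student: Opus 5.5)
The plan is to show that one well-chosen set $\Bset^\star\in 2^{\Aset}$ has sampling probability $p_{\Bset^\star}^t$ bounded below at every step of a block of $N$ consecutive steps inside $[\tau+1;\tau+2N]$. Since strategies are sampled independently at each step given the history, multiplying these per-step bounds gives the probability of a streak of $\static(\Bset^\star)$. The three factors in \eqref{eq:probability lower bound of seeing a streak of one strategy} should come out as follows: $(\gamma^{\tau+2N})^N$ from normalisation; $(1-\alpha^0)^{N(N-1)/2}=\prod_{k=0}^{N-1}(1-\alpha^0)^k$ from monitor decay over $k$ steps; and $2^{-(m+5)N}$ from crude constants, at most $2^{-(m+5)}$ per step.

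\emph{Step 1 (uniform denominator bound).} For every $t$ and $\Bset$ we have $w_\Bset^t\le 2/\gamma^t$, so
\[
\sum_{\Bset'}w_{\Bset'}^t\le 2^{m+1}/\gamma^t\le 2^{m+1}/\gamma^{\tau+2N}
\]
for $t\le\tau+2N$, because $\gamma^t$ is decreasing. Hence $p_\Bset^t\ge \gamma^{\tau+2N}\,w_\Bset^t/2^{m+1}$.

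\emph{Step 2 (numerator in terms of monitors).} For nonempty $\Bset$, using $\max(\max_{A\notin\Bset}w_A^t,\gamma^t)\le 1$ and $1-\tilde w_\Bset^t\le 2$, we get
\[
w_\Bset^t\ge \tfrac12\,(1+\tilde w_\Bset^t)\ge \tfrac12\min_{A\in\Bset}w_A^t .
\]
For $\Bset=\emptyset$ we have $1+\tilde w_\emptyset^t=\min(2-2w_\cup^t,\,2-\gamma^t)$, so $w_\emptyset^t\ge \tfrac12(1-w_\cup^t)$ once $\gamma^t\le 1$; note that $\gamma^t=1/\ln(3+t)<1$ for all $t\ge 0$, since $\ln 3>1$.

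\emph{Step 3 (decay control).} By Algorithm~\ref{alg:liveness monitor}, each monitor value either resets to $1$ or is multiplied by $(1-\alpha)\ge(1-\alpha^0)$, since $\alpha\le\alpha^0$ always. So $k$ steps after a start time $s$ we have
\[
w_A^{s+k}\ge (1-\alpha^0)^k\,w_A^s .
\]
If the nonempty $\Bset^\star$ is chosen with $\min_{A\in\Bset^\star}w_A^s\ge 1/2$, or better with all $A\in\Bset^\star$ having just been reset, then by Steps~1--2 we get
\[
p_{\Bset^\star}^{s+k}\ge \gamma^{\tau+2N}(1-\alpha^0)^k/2^{m+3},
\]
and the product over $k=0,\dots,N-1$ matches the claimed form with room to spare.

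\emph{Step 4 (choosing $s$ and $\Bset^\star$; main obstacle).} This is where the window length $2N$ is used, and it is the step I expect to be hardest, because the choice of $\Bset^\star$ must be robust to whatever Adam does inside the window. I would try the following case split on the first $N$ steps $[\tau+1;\tau+N]$.
\begin{itemize}
\item Some vertex of $\bigcup_{A\in\Aset}A$ is visited at a time $s\le\tau+N$. Take the \emph{last} such visit before $\tau+N+1$, and let $\Bset^\star$ be the set of assumptions visited at time $s$. Then every monitor in $\Bset^\star$ equals $1$ at $s$, and Step~3 applies on $[s;s+N-1]\subseteq[\tau+1;\tau+2N]$. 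Later visits to assumptions outside $\Bset^\star$ only enlarge $\max_{A\notin\Bset^\star}w_A$, and this is already absorbed by the bound of Step~2.
\item No assumption is visited in $[\tau+1;\tau+N]$. Then $w_\cup$ has decayed by at least a factor $(1-\alpha)$ per step. I would take $\Bset^\star=\emptyset$ and need a lower bound on $1-w_\cup^t$ over the next $N$ steps, or else restart the argument at the first later assumption visit in $[\tau+N+1;\tau+2N]$ and fall back to the first case.
\end{itemize}
The delicate point in the second case is that $1-w_\cup$ may be tiny when $w_\cup$ was recently reset. My fallback here is to absorb that small gap into the $(1-\alpha^0)^{N(N-1)/2}$ and extra $2^{-2N}$ slack. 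Concretely, I would bound $1-(1-\alpha^0)w$ from below using $\alpha\ge\lambda^{(\cdot)}\alpha^0$ only where unavoidable, and otherwise switch to a nonempty $\Bset^\star$ whose monitors are at least $1/2$. Which of these works is what I would sort out first.

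Finally, since the sampled strategy only matters at Eve's vertices, Adam steps only help: they need no sampling, so their factor can simply be taken as $1$. Also, the event that a particular $\static(\Bset^\star)$ is scheduled $N$ times in a row is contained in the event stated in the lemma, so the lower bound for $\Bset^\star$ suffices.
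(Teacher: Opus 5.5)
Your Steps 1--3 are correct: Steps 1 and 2 together already give $p_{\Bset}^t\ge\gamma^{\tau+2N}(1+\tilde w_{\Bset}^t)/2^{m+2}$, which is essentially the paper's per-step estimate. The gap is in the second bullet of Step 4, and none of your proposed fallbacks closes it.

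\emph{Why tracking $\emptyset$ fails.} Whenever $w_\cup^t\ge\gamma^t/2$ one has exactly $w_\emptyset^t=(1-w_\cup^t)/w_\cup^t$. The quantity $1-w_\cup^t$ admits no lower bound depending only on $N$, $m$, $\alpha^0$. The reason is that the decay rate of $\M_\cup$ is $\lambda^k\alpha^0$, where $k$ is the unbounded number of past visits to $\bigcup_{A\in\Aset}A$. After a visit shortly before $\tau$, $w_\cup$ can therefore stay arbitrarily close to $1$ for the whole window. No fixed slack such as $(1-\alpha^0)^{N(N-1)/2}2^{-2N}$ can absorb this, and invoking $\alpha\ge\lambda^{(\cdot)}\alpha^0$ just reintroduces the unbounded exponent.

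\emph{Why the nonempty fallback fails.} A nonempty $\Bset^\star$ whose monitors are all at least $1/2$ need not exist. The current decay rate $\alpha_A$ of the monitor for $A$ is always at least the rate $\alpha_\cup$ of $\M_\cup$. So for $m\ge 2$ you can have every individual score tiny while $w_\cup\approx 1$. Concretely: let $A_2$ be visited very many times, then let a long silence follow, then let a single visit to $A_1$ occur $k$ steps before $\tau$ with $k\lambda\alpha^0\gg 1$, and let there be no visits in the window. Then every $w_A^t$ is tiny and $w_\cup^t\approx 1$, so all your Step~2 lower bounds, $\tfrac12\min_{A\in\Bset}w_A^t$ and $\tfrac12(1-w_\cup^t)$, are tiny. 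Restarting at a later visit does not help either: none need occur, and one after $\tau+N+1$ leaves fewer than $N$ steps. Yet in this configuration $\tilde w_{\{A_1\}}^t\approx-\gamma^t$, so the singleton strategies do have probability of the order the lemma requires; your estimates simply cannot see it.

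\emph{The missing ingredient.} Your Step 2 discards a second estimate by bounding the subtracted term by $1$. For nonempty $\Bset$, $1+\tilde w_{\Bset}^t\ge 1-\max(\max_{A\notin\Bset}w_A^t,\gamma^t)$, which is large precisely when the other monitors are small. The paper combines this with a case split on score levels rather than on visits:
\begin{itemize}
\item If some $w_{A'}$ reaches $1/2$ at a first step $t^*\in[\tau+1;\tau+N]$, track $\{A'\}$ on $[t^*;t^*+N-1]$ exactly as in your Step 3. This subsumes your first bullet and also covers recent visits before $\tau$.
\item Otherwise every score stays below $1/2$ on $[\tau+1;\tau+N]$. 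Then any fixed singleton $\{B\}$ satisfies $1+\tilde w_{\{B\}}^t\ge 1-\max(1/2,\gamma^1)=1-1/\ln 4$. This gives a per-step factor $\gamma^{\tau+2N}(1-1/\ln 4)/2^{m+2}$, and its $N$-th power exceeds the claimed bound because $8(1-1/\ln 4)>1$.
\end{itemize}
The empty set never has to be tracked.

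A smaller point: in your first bullet, take the \emph{first} visit rather than the last. Whether $s$ is the last visit before $\tau+N+1$ depends on the selections made after $s$, which are exactly the ones whose conditional probabilities you multiply. The first such time is a stopping time, so the per-step conditional bounds remain valid.
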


\begin{proof}[Proof sketch, details in Appendix~\ref{appendix:proofs of the adaptive section}]
	For a suitably chosen assumption $B\in\Aset$, we track the singleton $\{B\}$ and lower-bound $p_{\{B\}}^t$ at each step.
	Since $w_{\Bset}^t = (1+\tilde{w}_{\Bset}^t)/(1-\tilde{w}_{\Bset}^t)$, the selection probability $p_{\{B\}}^t = w_{\{B\}}^t / \sum_{\Bset} w_{\Bset}^t$ can be bounded by controlling the ratio $w_{\Bset}^t / w_{\{B\}}^t$ for all $\Bset$: since numerators are at most $2$ and denominators are at least $\gamma^{\tau+t}$ (the bias term), we get $p_{\{B\}}^t \geq \gamma^{\tau+t} L_t / 2^{m+3}$, where $L_t$ is a lower bound on $1+\tilde{w}_{\{B\}}^t$.
	Two cases determine $L_t$: if some assumption reaches score $\geq 1/2$ during the first $N$ steps, $L_t$ decays as $(1-\alpha^0)^{t-t^*}$; otherwise, all intermediate weights stay bounded away from $-1$, giving $L_t \geq 1 - \gamma^1 > 0$.
	Multiplying over $N$ consecutive steps and using $\sum_{j=0}^{N-1} j = N(N-1)/2$ yields the stated bound.
\end{proof}

%}			
Now towards proving the gracefulness of adaptive strategies, we show that the
adaptive strategy allows \PO to fulfill all the assumptions, almost surely,
i.e., \eqref{eq:nonblockingness} holds.

% -------------- OLD STATEMENT -----------------
% \begin{lemma}\label{claim:random scheduling preserves environmentally friendliness}
% 	The adaptive strategy is graceful with respect to the entire set $\Aset$ of
% 	assumptions, $\P$-almost surely.
% \end{lemma}
% -------------------------------------------------

\begin{restatable}{lemma}{lemRandomSchedulingPreservesEnvFriendliness}\label{claim:random scheduling preserves environmentally friendliness}
	The adaptive strategy fulfills Eqn.~\eqref{eq:nonblockingness}.
\end{restatable}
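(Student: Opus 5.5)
Given a finite history $\rho$ compliant with $\polZadaptive$, with $\last(\rho)\in W$, I will construct a strategy $\polO$ of \PO under which $\gbuchi(\Aset)$ holds almost surely.

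\textbf{Step 1: a common cooperative response.} By Theorem~\ref{thm:static computation soundness}, every $\static(\Bset)$ is graceful from every vertex of $W$ with respect to the full set $\Aset$ (condition~(a)), never leaves $W$ (condition~(c)), and admits a \PO strategy visiting all assumptions with constant hitting time (condition~(b)). All library strategies are finite-memory, with at most $2^m$ of them, each of memory size bounded by Theorem~\ref{thm:static complexity}. Since every $\static(\Bset)$ is winning in the obliging game, for every vertex $v\in W$, every $\Bset$, and every memory state of $\static(\Bset)$ reachable at $v$, \PO has a finite sequence of moves that, if \PZ keeps playing $\static(\Bset)$, reaches every $A_i$ within some bounded number $K$ of steps. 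Finiteness of the vertex, memory, and $\Bset$ sets makes $K$ uniform. I will define $\polO$ to look at the current vertex together with the memory states of all library strategies (which \PO can simulate from the history). It will run a round-robin over the assumptions $A_1,\ldots,A_m$, each time committing to the cooperative move for a guessed $\Bset$, for example the one most recently sampled, or cycling through all $\Bset$.

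\textbf{Step 2: positive-probability progress in every window.} Here I invoke Lemma~\ref{claim:lower bound on N-step consecutive execution of any one strategy} with $N=K$. In any window $[\tau+1;\tau+2K]$, some single library strategy is sampled $K$ times consecutively with probability at least $q_\tau=(\gamma^{\tau+2K})^K(1-\alpha^0)^{K(K-1)/2}/2^{(m+5)K}$.

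\textbf{Step 3: the main obstacle.} The lemma does not say which strategy gets the streak, so \PO must be cooperative for whichever one is realized. I expect the cleanest fix is to strengthen $\polO$: at each window start, \PO cycles through guesses $\Bset$ across successive windows. The proof of the lemma then yields a lower bound (of the same form, up to a factor $2^m$ in the exponent) for a specific strategy being streaked. An alternative is to note that the proof of the lemma actually lower-bounds $p_{\{B\}}$ for a chosen singleton, and \PO can cooperate with $\static(\{B\})$. In either case, with probability at least $q'_\tau$, the pending assumption $A_i$ is visited during the window. Successive windows are handled by conditioning on the past, since the bound is uniform over histories.

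\textbf{Step 4: Borel--Cantelli.} It remains to show that each $A_i$ is visited infinitely often almost surely. Note that $q'_\tau$ decays only like $(\ln \tau)^{-K}$, so $\sum_j q'_{\tau_j}=\infty$ over the disjoint windows $\tau_j=2Kj$. Because the lower bounds hold conditionally on arbitrary histories, the conditional (L\'evy) Borel--Cantelli lemma gives that infinitely many windows succeed almost surely. With the round-robin over the $A_i$, every assumption is then visited infinitely often almost surely, which is exactly~\eqref{eq:nonblockingness}. Staying inside $W$ throughout is guaranteed by condition~(c) and the edge pruning noted after Algorithm~\ref{alg:adaptive strategy}, so the cooperative moves remain available at all times.
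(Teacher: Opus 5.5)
Your skeleton matches the paper's: a uniform cooperation bound $K$ from finiteness, disjoint windows of length $2K$, Lemma~\ref{claim:lower bound on N-step consecutive execution of any one strategy}, divergence of $\sum_j(\ln\tau_j)^{-K}$, and L\'evy's Borel--Cantelli. The obstacle you isolate in Step~3 is genuine, and it is exactly where the paper's own proof is loose. The paper uses pigeonhole to get a $\polZ^*$ that is streaked in infinitely many blocks, then lets \PO play $\mathit{response}(\polZ^*)$. But this $\polZ^*$ depends on the whole sample path, which itself depends on \PO's choice, so no strategy of \PO has actually been defined.

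Your fixes do not close this gap as written. Fix~(a) rests on a false claim: if $\Bset$ is committed to before the window, there is no history-uniform lower bound on a streak of $\static(\Bset)$. Two examples:
\begin{itemize}
\item $w^t_\emptyset=0$, hence $p^t_\emptyset=0$, whenever $w^t_\cup=1$.
\item If $\max_{A\neq B}w^t_A=1$, then $1+\tilde w^t_{\set{B}}=w^t_B$, while $w^t_{\set{A}}\ge 1$ for an $A$ with $w^t_A=1$. So $p^t_{\set{B}}\le w^t_B$, which decays exponentially in the time since $B$ was last visited.
\end{itemize}
So the uniform conditional bound you invoke in Step~4 is not available for a pre-scheduled guess. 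Cooperating with ``the one most recently sampled'' is not possible either: a strategy of \PO sees only vertices, not \PZ's internal sampling.

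Fix~(b) is the right idea, but the singleton tracked in the proof of Lemma~\ref{claim:lower bound on N-step consecutive execution of any one strategy} is not fixed in advance. It is $\set{A'}$ for the first assumption $A'$ whose score reaches $\tfrac12$ inside the window (Case~(i)). It is an arbitrary $\set{B^*}$ only while no score has reached $\tfrac12$ (Case~(ii)). This is usable because the scores $w^t_A$ are deterministic functions of the observed vertices, so \PO can follow the case split online:
\begin{itemize}
\item cooperate, towards the pending $A_i$, with $\static(\set{B^*})$;
\item switch to $\static(\set{A'})$ at the step $t^*$ when $A'$ first reaches score $\tfrac12$.
\end{itemize}
A complete $\static(\set{B^*})$-streak under cooperation forces a visit, so the switch happens by step $K$. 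The per-step bounds $p^t_{\set{B}}\ge\gamma^{\tau+t}L_t/2^{m+3}$ from that proof hold for the tracked singleton regardless of \PO's moves. Hence each window succeeds with conditional probability at least a quantity of the same form, uniformly in the history; at worst the exponent of $\gamma^{\tau+2K}$ doubles, and the series still diverges. Your Step~4 then goes through and yields a genuine \PO strategy, which the paper's argument does not.

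One more point: Step~1 derives $K$ from the obliging-game property of $\static(\Bset)$, which only covers histories compliant with $\static(\Bset)$. After a mixed history, the memory of $\static(\Bset)$ may be in a configuration never reached in compliant play. The paper makes the same tacit assumption (its $N$ is computed from fresh plays $\pathZO(\G,\polZ^*,\polO^*,v)$). You should state this explicitly as a property of the library, e.g., that each $\static(\Bset)$ admits a cooperative continuation from every history ending in $W$.
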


\begin{proof}[Proof sketch, details in Appendix~\ref{appendix:proofs of the adaptive section}]
	Each statically computed strategy $\static(\Aset^*)$ is graceful (from Theorem~\ref{thm:static computation soundness}), so against each strategy $\polZ^* = \static(\Aset^*)$ of \PZ, \PO has a cooperative response that fulfills all assumptions with constant hitting time.
	Let $N$ be the maximum hitting time across all static strategies.
	Divide time into blocks of $2N$ steps; by Lemma~\ref{claim:lower bound on N-step consecutive execution of any one strategy}, each block has conditional probability at least $(\gamma^{\tau+2N})^N \cdot \varepsilon$ of containing $N$ consecutive steps of some static strategy.
	Since $\sum_i 1/(\ln(3+2iN+2N))^N = \infty$ (by integral test), L\'evy's extension of the Borel-Cantelli lemma (Lemma~\ref{lem:borel-cantelli}) implies infinitely many such blocks occur, $\P$-almost surely.
	By pigeonhole, at least one static strategy $\polZ^*$ is active for $N$ consecutive steps in infinitely many blocks; \PO can use the cooperative response to $\polZ^*$ and fulfill all assumptions in $\Aset$ infinitely often.
\end{proof}

In the following two lemmas, for each $A\in \Aset$, we extend the $0$--$1$
convergence result of $w_A^t$ as stated in Theorem~\ref{thm:monitor soundness}
to the probabilistic setting.

\begin{restatable}{lemma}{lemConvergenceToZero}\label{claim:convergence of w to 0 for finitely visited assumptions}
	For every $A\in \Aset$, if $A$ is visited only finitely many times, i.e., if
	$\cobuchi(A)$ holds, $\P$-almost surely, then $w_{A}^t\toas 0$ as $t\to \infty$.
\end{restatable}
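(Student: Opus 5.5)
The plan is a pathwise argument: fix a sample path $\rho=v^0v^1\ldots$ on which $A$ is visited only finitely often, and show that $w_A^t\to 0$ along $\rho$. The event $\cobuchi(A)$ has probability one by hypothesis, so pathwise convergence on this event gives $w_A^t\toas 0$. The reasoning essentially repeats item~\eqref{list item:assumption monitor:convergence to 0} of Theorem~\ref{thm:monitor soundness}. What changes is only that the path is now random, and we must make sure nothing in the monitor's dynamics depends on the randomness of the adaptive strategy.

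\textbf{Step 1 (determinism of the monitor given the path).} The monitor $\M_A$ of Algorithm~\ref{alg:liveness monitor} is a deterministic function of the observed prefix $v^0\ldots v^t$. The randomness of $\polZadaptive$ enters only through which path is generated. Hence $w_A^t(\rho)$ is a measurable function of the prefix of $\rho$, and the event $\set{w_A^t\to 0}$ is measurable, being a countable combination of cylinder events.

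\textbf{Step 2 (pathwise decay).} Fix $\rho\in\cobuchi(A)$ and let $t_0$ be the last time with $v^{t_0}\in A$, or $t_0=0$ if $A$ is never visited. Let $k$ be the number of visits to $A$ up to $t_0$. After time $t_0$ the decay rate is frozen at $\alpha=\lambda^{k}\alpha^0\in(0,1)$, because Line~\ref{line:alg:liveness monitor:d} is never executed again. For every $t>t_0$, Line~\ref{line:alg:liveness monitor:e} then gives
\[
w_A^t=(1-\lambda^{k}\alpha^0)^{t-t_0}\,w_A^{t_0}\le (1-\lambda^{k}\alpha^0)^{t-t_0}.
\]
Since $0<1-\lambda^k\alpha^0<1$, the right-hand side tends to $0$ as $t\to\infty$. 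No hitting-time assumption is needed here, and Assumption~\ref{assump:finite-memory environment} plays no role.

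\textbf{Step 3 (conclusion).} We have shown $\cobuchi(A)\subseteq\set{\rho: w_A^t(\rho)\to 0}$. Therefore $\P[w_A^t\to0]\ge\P[\cobuchi(A)]=1$.

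The only subtle point is that $k$ and $t_0$ are random and unbounded across paths. This means the convergence is not uniform, so we only obtain almost sure convergence, not convergence in probability at a uniform rate. That suffices for the statement as given.
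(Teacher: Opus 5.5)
Your proposal is correct and follows the paper's proof. Both argue pathwise that $w_A^t\to 0$ on every $\rho\in\cobuchi(A)$ and then conclude from $\P[\cobuchi(A)]=1$; the paper cites part~(II) of Theorem~\ref{thm:monitor soundness}, whereas you re-derive that decay explicitly via the frozen rate $\lambda^k\alpha^0$ and add an accurate measurability remark (the exponent $t-t_0$ may be off by one depending on the monitor's indexing convention, which is harmless).
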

	
\begin{proof}
	Consider any infinite run $\rho\in \cobuchi(A)$.
	From Theorem~\ref{thm:monitor soundness} it follows that on $\rho$,
	$w_A^t\to 0$ as $t\to \infty$.
	Now since $\P[\cobuchi(A)]=1$ by assumption, and since $w_A^t\to 0$ as $t\to
	\infty$ on every $\rho\in \cobuchi(A)$, it follows that $\P[\lim_{t\to \infty} w_A^t= 0] = 1$.
\end{proof}

\begin{restatable}{lemma}{lemConvergenceToOne}\label{claim:convergence of w to 1 for infinitely visited
assumptions}
	Suppose Assumption~\ref{assump:finite-memory environment} holds.
	For every $A\in \Aset$, if $A$ is visited infinitely many times, i.e., if
	$\buchi(A)$ holds, $\P$-almost surely, then $w_{A}^t\toas 1$ as $t\to \infty$.
\end{restatable}

\begin{proof}[Proof sketch, details in Appendix~\ref{appendix:proofs of the adaptive section}]
	From the monitor construction, $w_A^t \geq (1 - \lambda^{\mathit{visits}_t} \alpha^0)^t$, where $\mathit{visits}_t$ counts visits to $A$ up to time $t$.
	We show $\mathit{visits}_t$ grows fast enough: using Lemma~\ref{claim:lower bound on N-step consecutive execution of any one strategy}, we prove that $\sum_t \P[\mathit{visits}_t < (\ln t)^2]$ converges, and then by the first Borel-Cantelli lemma (Lemma~\ref{lem:borel-cantelli}), $\mathit{visits}_t \geq (\ln t)^2$ for all large $t$, $\P$-almost surely.
	Writing $\beta = -\ln(\lambda) > 0$, we have $t \lambda^{\mathit{visits}_t} \leq t e^{-\beta (\ln t)^2} = e^{-\ln t(\beta \ln t - 1)} \to 0$ as $t \to \infty$, implying $w_A^t \to 1$.
\end{proof}

We now put together the technical results to obtain the proof of
Theorem~\ref{thm:overall soundness}.

\begin{proof}[Proof sketch of Theorem~\ref{thm:overall soundness}, details in Appendix~\ref{appendix:proofs of the adaptive section}]
	We verify the two conditions of Definition~\ref{def:terminally gracefully robust strategies}: robustness in the limit subgame, and GR(1) winning with gracefulness.\\
	\textbf{Robustness in the limit subgame:} By Lemmas~\ref{claim:convergence of w to 0 for finitely visited assumptions} and~\ref{claim:convergence of w to 1 for infinitely visited assumptions}, as $t\to\infty$, $w_{\Asetinf}^t \to \infty$ while $w_{\Aset^*}^t$ remains bounded for $\Aset^* \neq \Asetinf$, so $p_{\Asetinf}^t \to 1$.
	Thus the adaptive strategy converges to $\static(\Asetinf)$, which is optimal by Theorem~\ref{thm:static computation soundness}, so the adaptive strategy is undominated in every limit subgame.\\
	\textbf{GR(1) winning and gracefulness:} From Lemma~\ref{claim:random scheduling preserves environmentally friendliness}, \PO can fulfill all assumptions $\P$-almost surely.
	If some assumption is violated, the GR(1) specification is satisfied automatically; otherwise $\Asetinf = \Aset$, and since $p_{\Aset}^t \to 1$ and all static strategies keep the play within the winning region $W$, the adaptive strategy fulfills all guarantees always eventually.
\end{proof}

\section{Experiments}
\label{sec:experiments}

We implemented our proposed synthesis algorithm as a prototype tool, \textit{Grace}. 
Theorem~\ref{thm:static complexity} establishes that our static synthesis pipeline enjoys a better asymptotic time complexity than the na\"ive combination of the approach of Bloem et al.~\cite{DBLP:conf/cav/BloemCGHJ10} with the obliging-game reduction (Corollary~\ref{cor:graceful complexity}); we now ask whether this advantage also holds in practice.
Since the technique of Bloem et al. alone does not support gracefulness, their original tool cannot be compared against directly: we instead reimplement their technique combined with the obliging-game reduction on the same front end and back end as \textit{Grace}, so that any observed gap is attributable to the encoding differences under study.
We call this the \emph{Bloem et al.\ baseline} and ask:
\begin{description}
	\item[RQ1 (Comparison with Bloem et al.).] Can our pipeline be run to completion on GR(1) specifications, and how does it compare against the Bloem et al.\ baseline? 
	\item[RQ2 (Convergence).] How quickly does the adaptive strategy of Section~\ref{sec:adaptive strategy} identify the true set $\Asetinf$ of fulfilled assumptions in practice?
\end{description}

% \subsection{Experimental Setup and Evaluation}
\smallskip\noindent\textbf{Experimental setup.}
\textit{Grace} is implemented in C++ as an extension of the FairSyn~\cite{Majumdar2023AFT} framework, which uses Owl~\cite{DBLP:conf/atva/KretinskyMS18}; both \textit{Grace} and the Bloem et al.\ baseline translate the LTL specification into a deterministic Rabin automaton via Owl before symbolic synthesis, and reported runtimes exclude this translation step.
\textit{Grace}'s richer intermediate encoding, needed for the reduction of Theorem~\ref{thm:static complexity}, produces larger deterministic Rabin automata; this is the main source of overhead on small instances, before \textit{Grace}'s better asymptotic scaling takes over.

Our suite spans four parameterized families at two sizes each (\textsc{Buffer}, \textsc{Lift}, \textsc{Scheduler}, \textsc{Network}), single instances of \textsc{Dining Philosophers} and \textsc{Bus Arbiter}, and two classical examples (Matching Pennies, and our running example); full per-instance descriptions are given in Appendix~\ref{appendix:experiments}.
Across these, the underlying game graphs range from $12$ to $54$ vertices and $29$ to $288$ edges, with $2$ to $3$ assumptions and guarantees each.

For RQ2, we simulate the adaptive strategy of Algorithm~\ref{alg:adaptive strategy} against a fixed environment strategy that behaves arbitrarily until some point, after which it consistently fulfills $\Asetinf$; we record the number of steps $t$ until $p_{\Asetinf}^t$ first exceeds $0.99$ and remains above this threshold for the rest of the run; we report this value averaged over $100$ independent simulated runs per benchmark, using monitor parameters $\alpha^0=0.5$, 
$\lambda=0.9$.

All experiments ran on a Linux server with two Intel Xeon E5-2660 v2 CPUs (2.20\,GHz, 20 physical cores) and 62\,GB RAM; LTL-to-Rabin translation had no timeout, while synthesis was capped at a one-hour timeout and a 5\,GB memory limit.

\begin{table}[ht]
\centering
\caption{Runtime comparison between \textit{Grace} and the Bloem et al.\ baseline, and steps until the adaptive strategy's probability of selecting $\static(\Asetinf)$ first exceeds $0.99$ and remains there (averaged over 100 simulated runs). For parameterized families, $(m,n)$ denotes the number of assumptions and guarantees, respectively. Dining Philosophers and Bus Arbiter have $m=n$, so we list this value directly. The ratio column reports the Bloem et al.\ baseline's time divided by \textit{Grace}'s time; bold values indicate \textit{Grace} is faster. $\mathit{TO}$ denotes the one-hour timeout.}
\label{tab:experiments}
\begin{tabular}{lrrrr}
\toprule
\textbf{Benchmark} & \textbf{\textit{Grace} (s)} & \textbf{Bloem et al.\ (s)} & \textbf{Ratio} & \textbf{Steps} \\
\midrule
Dining Philosophers ($n=3$) & 634.7  & 992.3  & \textbf{1.56} & 218 \\
Arbiter ($n=2$)              & 3.90   & 1.82   & 0.47          & 72  \\
Buffer $(2,2)$                & 6.51   & 5.06   & 0.78          & 83  \\
Buffer $(3,3)$                & 245.2  & 280.98 & \textbf{1.15} & 206 \\
Lift $(2,2)$                   & 6.28   & 2.41   & 0.38          & 97  \\
Lift $(3,3)$                   & 274.66 & 390.32 & \textbf{1.42} & 198 \\
Scheduler $(2,2)$             & 7.02   & 13.37  & \textbf{1.90} & 124 \\
Scheduler $(3,3)$             & \textbf{2786.50} & $\mathit{TO}$ & \textbf{$>$1.29} & 274 \\
Network $(2,2)$              & 23.8   & 30.22  & \textbf{1.27} & 94  \\
Network $(2,3)$              & 52.13  & 98.01  & \textbf{1.88} & 161 \\
Matching Pennies             & 6.55   & 1.36   & 0.21          & 193 \\
Running example (Figure~\ref{fig:GR(1) game:illustrating best-effort and limit-best-effort}) & 41.78 & 61.34 & \textbf{1.47} & 116 \\
\bottomrule
\end{tabular}
\end{table}

\smallskip\noindent\textbf{Experimental outcome of RQ1:}
\textit{Grace} completes every benchmark within the time and memory limits, including \textsc{Scheduler}$(3,3)$, where the Bloem et al.\ baseline times out entirely: the pipeline of Section~\ref{sec:static design phase} is tractable in practice, not only in theory.
\textit{Grace} is faster on 8 of 12 benchmarks overall.  
On \textsc{Scheduler}, \textit{Grace} leads by $1.90\times$ at $(2,2)$, and at $(3,3)$ the baseline exceeds the timeout while \textit{Grace} finishes in $2786.50$s, the asymptotic gap of Corollary~\ref{cor:graceful complexity} manifesting as a literal wall.
\textsc{Buffer} and \textsc{Lift} show the same pattern where both tools still complete: from size $2$ to $3$, \textit{Grace}'s runtime grows $37.7\times$ and $43.7\times$ respectively, versus $55.5\times$ and $162.0\times$ for the baseline; \textsc{Network} shows the same trend more mildly ($1.27\times\to1.88\times$).
In every family, \textit{Grace} starts comparable to or slower than the baseline but overtakes it as size grows, as its larger constant-factor overhead (from the richer encoding) is outweighed by better asymptotic scaling.

\smallskip\noindent\textbf{Experimental outcome of RQ2: convergence.}
The adaptive strategy converges within a few hundred steps across all benchmarks.
Convergence time grows with the number of assumptions to distinguish between: e.g., \textsc{Lift}$(2,2)\to(3,3)$ roughly doubles ($97\to198$ steps), consistent with the adaptive strategy needing to rule out larger  candidate subsets $\Bset\in2^{\Aset}$ before its probability mass concentrates on $\Asetinf$.

In practice, floating-point precision can round monitor outputs very close to $0$ or $1$ to those exact values; we prevent this by clipping outputs to $[\epsilon,1-\epsilon]$ for a small constant $\epsilon>0$, so $p_{\Bset}^t$ never exactly reaches $0$ or $1$ and incorrect strategies remain (rarely) samplable. Our experiments confirm this rarely matters in practice: the correct strategy is deployed almost always.

Summarizing, these results support the claim we set out to check: the theoretical advantages of Sections~\ref{sec:static design phase} and~\ref{sec:adaptive strategy} are not merely asymptotic curiosities, but manifest concretely. 
We emphasize that the theoretical framework, not this evaluation, is the main contribution of this paper; the experiments here are intended only to establish that it is not vacuous. 

\section{Conclusion}
\label{sec:conclusion}

We introduced \emph{gracefully robust} strategies for GR(1) games, addressing active assumption violation and guarantee abandonment under partial violations.
Since computing such strategies is doubly exponential, we proposed \emph{terminal} graceful robustness and a two-phase synthesis approach: a static phase precomputes an exponential strategy library, while an adaptive phase uses novel liveness monitors to mix them at runtime, provably converging to the optimal response.

Several directions remain for future work.
First, a more rigorous and optimized implementation of \textit{Grace} could make the approach practical for larger specifications.
Second, heuristics to accelerate the convergence of the adaptive strategy---such as more aggressive decay schedules or informed priors based on domain knowledge---could reduce the number of steps needed to identify $\Asetinf$.
Above all, our method demonstrates a seamless integration of reactive synthesis with reinforcement learning-like ideas: the static phase provides a library of policies, while the adaptive phase performs online learning to identify which policy best matches the actual environment.
This paradigm could have broader applications in settings where specifications or environment models are uncertain, enabling online learning and adaptation of controllers in response to observed behavior.

%%
%% The acknowledgments section is defined using the "acks" environment
%% (and NOT an unnumbered section). This ensures the proper
%% identification of the section in the article metadata, and the
%% consistent spelling of the heading.
\begin{acks}
  This work is partly funded by the CEFIPRA project SMILeS (Project number
  7302-1) and the grant RYC2024-049116-I funded by
  MICIU/AEI/10.13039/501100011033 and the ESF+.
\end{acks}

%%
%% The next two lines define the bibliography style to be used, and
%% the bibliography file.
\bibliographystyle{ACM-Reference-Format}
\bibliography{references}

%%%
%%% If your work has an appendix, this is the place to put it.
%\appendix
%
%\section{Research Methods}
%
%\subsection{Part One}
%
%Lorem ipsum dolor sit amet, consectetur adipiscing elit. Morbi
%malesuada, quam in pulvinar varius, metus nunc fermentum urna, id
%sollicitudin purus odio sit amet enim. Aliquam ullamcorper eu ipsum
%vel mollis. Curabitur quis dictum nisl. Phasellus vel semper risus, et
%lacinia dolor. Integer ultricies commodo sem nec semper.
%
%\subsection{Part Two}
%
%Etiam commodo feugiat nisl pulvinar pellentesque. Etiam auctor sodales
%ligula, non varius nibh pulvinar semper. Suspendisse nec lectus non
%ipsum convallis congue hendrerit vitae sapien. Donec at laoreet
%eros. Vivamus non purus placerat, scelerisque diam eu, cursus
%ante. Etiam aliquam tortor auctor efficitur mattis.
%
%\section{Online Resources}
%
%Nam id fermentum dui. Suspendisse sagittis tortor a nulla mollis, in
%pulvinar ex pretium. Sed interdum orci quis metus euismod, et sagittis
%enim maximus. Vestibulum gravida massa ut felis suscipit
%congue. Quisque mattis elit a risus ultrices commodo venenatis eget
%dui. Etiam sagittis eleifend elementum.
%
%Nam interdum magna at lectus dignissim, ac dignissim lorem
%rhoncus. Maecenas eu arcu ac neque placerat aliquam. Nunc pulvinar
%massa et mattis lacinia.

\appendix
\newpage

\centerline{\Large{\bf{Appendix}}}

The appendix collects deferred proofs from the main technical sections,
a generalization of the liveness-monitor construction, background results
in probability, and the benchmark descriptions used in the experimental evaluation.
To assist navigation:

\smallskip
\noindent\textbf{Appendix~\ref{appendix:proofs of the problem statement section}.}
Proofs of the complexity and robustness claims for gracefully robust
strategies: the complexity of robust-strategy synthesis
(Theorem~\ref{thm:complexity of robust strategies}), the graceful-complexity
bound (Corollary~\ref{cor:graceful complexity}), and that global graceful
robustness implies terminal graceful robustness
(Proposition~\ref{prop:global implies terminal}).

\smallskip
\noindent\textbf{Appendix~\ref{appendix:proofs of the static design section}.}
Soundness (Theorem~\ref{thm:static computation soundness}) and complexity
(Theorem~\ref{thm:static complexity}) of the static design phase that
computes an optimal strategy for every assumption subset

\smallskip
\noindent\textbf{Appendix~\ref{appendix:proofs of the adaptive section}.}
Soundness of the liveness monitor (Theorem~\ref{thm:monitor soundness}),
probabilistic guarantees on the adaptive scheduler
(Lemmas~\ref{claim:lower bound on N-step consecutive execution of any one strategy}
and \ref{claim:random scheduling preserves environmentally friendliness}),
almost-sure convergence of the monitor weights
(Lemma~\ref{claim:convergence of w to 1 for infinitely visited assumptions}),
and overall soundness of the adaptive strategy
(Theorem~\ref{thm:overall soundness})

\smallskip
\noindent\textbf{Appendix~\ref{sec:appendix:monitor generalization}.}
Generalization of the liveness-monitor soundness result to recurring events
with sub-exponential, rather than constant, hitting times

\smallskip
\noindent\textbf{Appendix~\ref{appendix:results in probability}.}
Statement of the Borel--Cantelli lemmas, including L\'evy's
filtration-based extension, used throughout
Appendix~\ref{appendix:proofs of the adaptive section}

\noindent\textbf{Appendix~\ref{appendix:experiments}.}
Description of the benchmark suite used in the experimental evaluation

\clearpage

\section{Proofs of Claims in Section~\ref{sec:gracefully robust strategies}}
\label{appendix:proofs of the problem statement section}

\thmComplexityRobustStrategies*

\begin{proof}
	In the work of Bloem et al.~\cite{DBLP:conf/cav/BloemCGHJ10}, they consider
	robustness specifications specified using ``multiple counting requirements,''
	which are sets of pairs from $[1;m-1]\times [1;n]$.
	Each requirement $s=\set{(a_1,b_1),\ldots,(a_l,b_l)}$ has the restriction that
	for every $i\neq j$, either $a_i<a_j$ or $b_i>b_j$ but not both, and $s$
	represents: if at least $a_1$ assumptions are fulfilled then at least $b_1$
	guarantees must be fulfilled \emph{AND} $\ldots$ \emph{AND} if at least $a_l$
	assumptions are fulfilled then at least $b_l$ guarantees are fulfilled.
	We write $l = |s|$ for the size of a single requirement and $S$ for the set of
	\emph{all} requirements; these are distinct quantities.
	To design non-dominated strategies we iterate over $S$ in the given
	total order, solving the synthesis problem for each $s$ until one succeeds.
	Writing $m'=m-1$, a requirement of size $l$ is obtained by choosing $l$ distinct
	elements from $[1;m']$ and $l$ from $[1;n]$ (one valid arrangement each), so the
	\emph{number} of requirements is
	\[
		|S| = \sum_{l=0}^{\min(m',n)}\binom{m'}{l}\binom{n}{l}
		= \binom{m'+n}{n} = \mathcal{O}(2^{m'+n}) = \mathcal{O}(2^{m+n}),
	\]
	while, by the ordering above, the \emph{size} of any single requirement is
	$l \le \min(m-1,n) = \mathcal{O}(k)$.

	For each $s\in S$ the synthesis question is a generalized reactivity game

	\begin{equation}\label{eq:reduction of multiple counting to generalized reactivity}
	\bigwedge_{p\in[1;l]}\big(\bigwedge_{i\in[1;Y]}\gbuchi(A_{p,i}) \implies
	\bigwedge_{j\in[1;Z]}\gbuchi(G_{p,j})\big)
	\end{equation}
	
	 Here each threshold antecedent
	``at least $a_p$ of the $m$ assumptions'' is the disjunction of conjunctions of
	B\"uchi objectives (a DCB)
	$\bigvee_{\Aset'\subseteq\Aset,\,|\Aset'|=a_p}\bigwedge_{A\in\Aset'}\buchi(A)$,
	of size $a_p\binom{m}{a_p}=\mathcal{O}(m\,2^m)$
(each of the $\binom{m}{a_p}$ disjuncts has $a_p$ B\"uchi objectives, ~\cite[Section~3]{DBLP:conf/cav/BloemCGHJ10}). By
	\cite[Corollary~1]{DBLP:conf/cav/BloemCGHJ10} its equivalent
	generalized-B\"uchi form has $Y=\mathcal{O}\big(2^{m\,2^m}\big)=2^{\mathcal{O}(m\,2^m)}$
	conjuncts, and symmetrically the consequents give $Z=2^{\mathcal{O}(n\,2^n)}$.

Each requirement is a generalized reactivity game of rank $l\le k$ with $Y$
	and $Z$ generalized-B\"uchi conjuncts on the antecedent and consequent sides,
	respectively. By \cite[Theorem~3]{DBLP:conf/cav/BloemCGHJ10}, and bounding the
	rank $l\le k$, such a game is solved in
	$\mathcal{O}\big(|\V|^{k}\,|\E|\,(Y Z)^{k(k+1)}\,k!\big)$ time, and by
	\cite[Theorem~5]{DBLP:conf/cav/BloemCGHJ10} the resulting strategy has size
	$\big((Y+1)(Z+1)\big)^{k}\,k!\,|\V|$. There are $|S|=\mathcal{O}(2^{m+n})$
	requirements to try (the factor $h$ in \cite[Theorem~5]{DBLP:conf/cav/BloemCGHJ10}) so
	the total time is $\mathcal{O}\big(2^{m+n}\cdot|\V|^{k}\,|\E|\,(YZ)^{k(k+1)}\,k!\big)$.
	Since $Y=2^{\mathcal{O}(m\,2^m)}$ and $Z=2^{\mathcal{O}(n\,2^n)}$, this is doubly
	exponential in $m$ and $n$.

\end{proof}

\corGracefulComplexity*

\begin{proof}
	As in Theorem~\ref{thm:complexity of robust strategies}, there are
	$|S|=\mathcal{O}(2^{m+n})$ requirements, each of size $l\le k$. Converting each
	threshold condition in each multiple counting requirement to its equivalent generalized-B\"uchi form via
	\cite[Corollary~1]{DBLP:conf/cav/BloemCGHJ10} (as in the proof of
	Theorem~\ref{thm:complexity of robust strategies}) yields, per pair,
	$Y=2^{\mathcal{O}(m\,2^m)}$ antecedent and $Z=2^{\mathcal{O}(n\,2^n)}$ consequent
	B\"uchi conjuncts. We bound the cost of one requirement and multiply by $|S|$.
	
	Fix $s\in S$. Its generalized reactivity game reduces
	\cite[proof of Theorem~3]{DBLP:conf/cav/BloemCGHJ10} to a Streett game with
	$\mathcal{O}(|\V|\,Y^l Z^l)$ vertices, $\mathcal{O}(|\E|\,Y^l Z^l)$ edges, and
	$l$ Streett pairs (the per-side blow-up is $Y^l Z^l$, in the \emph{converted}
	B\"uchi counts $Y,Z$, not in $m,n$). For gracefulness we solve, on this game,
	the obliging game with weak objective $\gbuchi(\Aset)$; by Chatterjee et
	al.~\cite[Lemma~2, Theorem~6]{chatterjee2010obliging} this $l$-Streett/
	$m$-generalized-B\"uchi obliging game becomes a regular Streett game with
	$\mathcal{O}(|\V|\,Y^l Z^l)$ vertices, $\mathcal{O}(|\E|\,Y^l Z^l)$ edges, and
	$(l+m)$ Streett pairs.
	A Streett game with $n'$ vertices, $m'$ edges and $k'$ pairs is solved in
	$\mathcal{O}(m'\,(n')^{k'}\,k'!)$ time and $k'!$ memory~\cite{piterman2006faster}, so this game
	costs
	\[
		\mathcal{O}\big(|\E|\,Y^l Z^l\cdot(|\V|\,Y^l Z^l)^{l+m}\cdot(l+m)!\big)
		=\mathcal{O}\big(|\E|\,|\V|^{l+m}\,Y^{l(l+m+1)}\,Z^{l(l+m+1)}\,(l+m)!\big),
	\]
	and it uses $(l+m)!$ memory with respect to this new regular Streett game's
	state space, i.e., with respect to the original game $\G$, the required
	memory will be $\mathcal{O}(Y^lZ^l(l+m)!)$.
	By $l\le k$ this is at most
	$\mathcal{O}\big(|\E|\,|\V|^{k+m}\,Y^{k(k+m+1)}\,Z^{k(k+m+1)}\,(k+m)!\big)$
	time and at most $\mathcal{O}(Y^kZ^k(k+m)!)$ memory.
	Multiplying the computation time by the $|S|=\mathcal{O}(2^{m+n})$
	requirements gives the stated time bound; for memory, we do not need to
	multiply with $|S|$ as only one of the strategies among all of the $|S|$ cases
	will survive.
	Notice that both computation time and memory are doubly exponential in $m$ and $n$ because $Y,Z$
	are.
\end{proof}

\propGlobalImpliesTerminal*

\begin{proof}
	Suppose, towards a contradiction, that $\polZ$ is gracefully robust from
	$\v^0$ but not terminally gracefully robust.
	Then there is a strategy $\polO'$ of \PO such that, writing $\G[U]$ for
	the limit subgame of $\polZ$ and $\polO'$, some graceful strategy
	$\polZ''$ of \PZ on $\G[U]$ dominates $\polZ$ within $\G[U]$ (uniformly
	over every vertex of $U$, as per Remark~\ref{rem:dominated within subgame}).

	Recall $U$ is exactly the set of vertices reachable from
	$\recurring(\pathZO(\polZ,\polO',\v^0))$: consequently, $U$ is closed
	under every edge of $\G$, since any edge out of a vertex reachable from
	$\recurring(\pathZO(\polZ,\polO',\v^0))$ lands on another vertex that is
	also reachable from it, hence also in $U$. So once a play enters $U$, it
	can never leave. This lets us define the composite strategy $\polZ'$ of
	\PZ that plays $\polZ$ until the history first enters $U$ (say at vertex
	$\v^t$), and plays $\polZ''$ thereafter.

	\emph{$\polZ'$ is graceful.} Fix any strategy $\polO$ of \PO. If
	$\pathZO(\polZ',\polO,\v^0)$ never enters $U$, it coincides entirely with
	$\pathZO(\polZ,\polO,\v^0)$, which is graceful by hypothesis. Otherwise,
	once the history enters $U$ at some vertex $\v^t$,
	Proposition~\ref{prop:prefix-independence of graceful strategies} (with
	$\polZ''$ playing the role of the winning obliging-game strategy from
	$\v^t$, since $\polZ''$ is graceful uniformly over $U$) gives that
	$\polZ'$ is graceful.

	\emph{$\polZ'$ dominates $\polZ$.} Gracefulness, and the comparisons
	$\bettereq$ and $\better$, depend only on the set of vertices visited
	infinitely often by the resulting path, not on any finite prefix. So for
	any $\polO$ under which $\pathZO(\polZ',\polO,\v^0)$ enters $U$ at
	$\v^t$, the tail from $\v^t$ onward agrees with the outcome of $\polZ''$
	against $\polO$ restricted to $\G[U]$, while the tail of
	$\pathZO(\polZ,\polO,\v^0)$ agrees with $\polZ$ against the same
	restricted $\polO$; the shared finite prefix up to $\v^t$ does not affect
	which vertices recur infinitely. Since $\polZ''\bettereq\polZ$ within
	$\G[U]$ for every subgame strategy of \PO, this gives
	$(\polZ',\polO)\bettereq(\polZ,\polO)$ for every $\polO$ (trivially so if
	$U$ is never entered).

	For the strict part of domination, since $\polZ''\better\polZ$ within
	$\G[U]$ for some subgame strategy $\polO''$ of \PO, define $\tilde\polO$
	to be the strategy of \PO that agrees with $\polO'$ until $U$ is entered,
	and with $\polO''$ thereafter, mirroring how $\polZ'$ itself switches
	from $\polZ$ to $\polZ''$ upon entering $U$. By the same tail-invariance
	argument as above, $(\polZ',\tilde\polO)\better(\polZ,\tilde\polO)$.

	So $\polZ'$ is a graceful strategy dominating $\polZ$ from $\v^0$,
	contradicting that $\polZ$ is gracefully robust.
\end{proof}

\section{Proofs of Claims in Section~\ref{sec:static design phase}}
\label{appendix:proofs of the static design section}

\thmStaticComputationSoundness*

\begin{proof}
	\textbf{Observation \#1:}
	First, observe that for every solution $(Z,\polZ)$ of
	$\Synt(\cdot,\cdot,\cdot)$, and for every strategy $\polO$ of \PO, if the initial
	vertex $\vinit$ is in $Z$, then the resulting infinite path always remains
	inside $Z$; here, the set $Z$ will be referred to as a \textit{trap} for \PO.
	This follows from the fact that obliging games reduce to solving $\omega$-regular games on extended graphs, and moreover, for every
	$\omega$-regular game, the winning strategy of \PZ is able to restrict the
	game within its winning region~\cite{piterman2006faster}.

	\smallskip
	\noindent\textbf{Observation \#2:}
	Second, the construction by Chatterjee et
	al.~\cite[Lemma~2]{chatterjee2010obliging} ensures that \PZ has a
	finite-memory strategy $\polZ^*$ to win the obliging game, as well as \PO
	has a finite-memory strategy $\polO^*$ to fulfill the assumptions.
	From this, it follows that for every initial vertex $\vinit$ in the winning
	domain for the obliging game, the resulting path
	$\pathZO(\G,\polZ^*,\polO^*,\vinit)$ contains every $A\in \Aset$ within
	fixed intervals: to see this, take the product $\G'$ of $\G$ and the memory
	automata for $\polZ^*$ and $\polO^*$, and the path
	$\pathZO(\G,\polZ^*,\polO^*,\vinit)$ is eventually periodic in $\G'$.

	\smallskip
	\noindent\textbf{Observation \#3:} Finally, observe the following
	monotonicity of the winning regions of the obliging games:
	For every $\Aset'\subseteq \Aset''$ and every $\Gset'\supseteq \Gset''$, it
	must hold that the winning region of the obliging game
	$(\G,\gr{\Aset'}{\Gset'}\cap \safe(W),\gbuchi(\Aset))$ is a subset of the
	winning region of the obliging game $(\G,\gr{\Aset''}{\Gset''}\cap \safe(W),\gbuchi(\Aset))$. 
	For instance, in Example~\ref{ex:gracefulness} and \ref{ex:graceful robustness}, consider the sets $\Aset'=\set{A_1}\subset
	\set{A_1,A_2}=\Aset''$, $\Gset' = \Gset''= \set{G_1}$, and observe that
	$W=\set{0,\ldots,9}$, and the winning region for
	$(\G,\gr{\Aset'}{\Gset'}\cap \safe(W),\gbuchi(\Aset))$ is $\set{0,\ldots,3}$, which is a subset of the winning region $\set{0,\ldots,
	9}$ for $(\G,\gr{\Aset''}{\Gset''}\cap \safe(W),\gbuchi(\Aset))$.
	Intuitively, the former is harder for \PZ as compared to the latter, because
	$\gr{\Aset'}{\Gset'}\subseteq \gr{\Aset''}{\Gset''}$.
	
	\medskip 
	The proof of the theorem follows a ranking argument as below.
	Define the function $\rank\colon W\to \mathbb{N}$, such that $\rank(v)$ for
	every $\v\in W$ is the iteration index of the outer loop (over guarantees)
	during which $\BR(\v,\Aset^*)\neq \bot$ for the first time.
	For example, if this happened to $\v$ in the first iteration with
	$\tilde{\Gset} = \Gset$, then $\rank(\v)=1$, and if this happened in the
	last iteration with $\tilde{\Gset} = \emptyset$, then $\rank(\v)=2^m$.
	Define $R^{(i)} = \set{\v\in W\mid \rank(\v)=i}$.
	
	Let $\rank_{\min} \coloneqq \min_{\v\in W} \rank(v)$ be the smallest rank of
	any vertex in the original GR(1) winning region $W$, and let $\v$ be a vertex with $\rank(\v)=\rank_{\min}$.
	From Observation~\#1 above, \PZ has a strategy---call it
	$\polZ^{(1)}$---using which the set $R^{(1)}$ is never left, and
	Condition~\eqref{item:def:optimal strategy:a}  of Definition~\ref{def:optimal strategy for assumption subset} is fulfilled by construction
	of $R^{(1)}$ and $\polZ^{(1)}$.
	Condition~\eqref{item:def:optimal strategy:b} is also fulfilled, because
	$\polZ^{(1)}$ fulfills $\safe(W)$ by design.
	On the other hand, Condition~\eqref{item:def:optimal strategy:a2} follows from
	Observation~\#2, and Condition~\eqref{item:def:optimal strategy:c} follows
	since the most number of guarantees are fulfilled (as the minimum rank is attained).
	This is the base case of our inductive argument on ranks.
	For simplicity, let us assume that $\rank_{\min}=1$, i.e., $R^{(1)}\neq
	\emptyset$; all arguments generalize when this is not the case.

	Now for the induction step, suppose \PZ has an optimal strategy---call it
	$\polZ^{(i)}$---against $\Aset^*$ in the region $R^{(i)}$ for some $1\leq i< 2^m$, such that $R^{(i)}$ is a trap for \PO.
	From this, we construct an optimal strategy $\polZ^{(i+1)}$ of \PZ against
	$\Aset^*$ from the region $R^{(i+1)}$.
	Firstly, by construction and from Observation~\#3, it follows that
	$R^{(i)}\subseteq R^{(i+1)}$.
	Consider some $\v\in R^{(i+1)}$, and let $\polZ$ be the winning
	strategy from $\v$ in the obliging game obtained by the $\Synt(\cdot,\cdot,\cdot)$ subroutine.
	The desired strategy $\polZ^{(i+1)}$ is constructed as follows:
	as long as the game stays in $R^{(i+1)}\setminus R^{(i)}$, \PZ uses $\polZ$,
	and as soon as the game enters $R^{(i)}$, \PZ resets the history and
	switches to $\polZ^{(i)}$; here, we invoked
	Proposition~\ref{prop:prefix-independence of graceful strategies} to maintain
	that finite prefixes are irrelevant after $R^{(i)}$ is entered.
	By construction, Conditions~\eqref{item:def:optimal strategy:a} and
	\eqref{item:def:optimal strategy:b} of Definition~\ref{def:optimal strategy for
	assumption subset} hold, regardless of whether the path stays forever in
	$R^{(i+1)}\setminus R^{(i)}$ or it eventually enters $R^{(i)}$.
	Condition~\eqref{item:def:optimal strategy:c} also holds, because of the
	decreasing order over the guarantees in the outer for loop, and how ranks
	are defined.
	Finally, from Observation~\#1, under $\polZ^{(i+1)}$, $R^{(i+1)}$ is a trap for \PO, completing
	the induction step.
\end{proof}

\thmStaticComplexity*

\begin{proof}
	First, let us analyze the complexity of each instance of $\Synt$, in
	Line~\ref{line:alg:static:solve game}.
	First, consider the game $(\G,\gr{\tilde{\Aset}}{\tilde{\Gset}})$, which can
	be turned into a one-pair Streett game with $\mathcal{O}(|V|\cdot m\cdot n)$
	vertices and $\mathcal{O}(|E|\cdot m\cdot n)$ edges (see, e.g.,
	\cite[Section~5]{DBLP:conf/cav/BloemCGHJ10}). 
	Now let us add the safety condition $\safe(W)$, and consider the game
	$(\G,\gr{\tilde{\Aset}}{\tilde{\Gset}}\cap\safe(W))$.
	To solve this game, we can rely on the known compositional feature of safety
	objectives.
	In particular, we first consider the pure safety game $(\G,\safe(W))$, and
	note that the winning region for \PZ is the entire safe set $W$: this
	follows from the fact that $W$ is a trap for \PO. 
	It follows that every strategy of \PZ is safe iff it picks an edge within
	$W$ from vertices in $W$; see for example works on
	shielding~\cite{bloem2015shield}.
	It also follows that~\cite{bloem2015shield}, in order to solve
	$(\G,\gr{\tilde{\Aset}}{\tilde{\Gset}}\cap\safe(W))$, we can simply prune
	all the unsafe edges $(u,v)$ with $u\in W\cap \VZ$ and $v\notin W$, and then
	consider the game $(\G',\gr{\tilde{\Aset}}{\tilde{\Gset}})$ in the pruned
	graph $\G'$; notice that $\G'$ has $\mathcal{O}(|V|)$ vertices and
	$\mathcal{O}(|E|)$ edges.
	Therefore, like the simpler game $(\G',\gr{\tilde{\Aset}}{\tilde{\Gset}})$,
	the game $(\G,\gr{\tilde{\Aset}}{\tilde{\Gset}}\cap\safe(W))$ also reduces
	to a one-pair Streett game with $\mathcal{O}(|V|\cdot m\cdot n)$
	vertices and $\mathcal{O}(|E|\cdot m\cdot n)$ edges.

	Since $\gbuchi(\Aset)$ can be represented as a $m$-pair Streett automaton
	with just $1$ state (transition-labeled), it follows from
	\cite[Chapter~4]{chatterjee2010obliging} that
	$\Synt(\G',\gr{\tilde{\Aset}}{\tilde{\Gset}},\gbuchi(\Aset))$ can be
	reduced to a Streett game with $m+1$ Streett pairs, $\mathcal{O}(|V|\cdot
	m\cdot n)$ vertices, and $\mathcal{O}(|E|\cdot m\cdot n)$ edges.
	Since a Streett game with $x$ vertices, $y$ edges, and $k'$ pairs is
	solved in $\mathcal{O}(y\,(x)^{k'}\,k'!)$ time and $k'!$
	memory~\cite{piterman2006faster}, we conclude that
	$\Synt(\G',\gr{\tilde{\Aset}}{\tilde{\Gset}},\gbuchi(\Aset))$ can be solved
	in $T_{\Synt} \coloneqq \mathcal{O}(|V|^{m+1}\cdot|E|\cdot (mn)^{m+2}\cdot (m+1)!)$
	and the strategy will need $(m+1)!$ memory.

	The two nested loops iterate over $\tilde{\Gset}\in 2^{\Gset}$ and
	$\tilde{\Aset}\in 2^{\Aset}$, performing $2^{n}\cdot 2^{m}=2^{m+n}$ iterations.
	Each iteration makes one call
	$\Synt(\G',\gr{\tilde{\Aset}}{\tilde{\Gset}},\gbuchi(\Aset))$, costing at most
	$T_{\Synt}$, followed by the update of $\BR$ in Line~\ref{line:alg:static:populate map Gamma}, which
	touches at most $|\V|\cdot 2^{m}$ entries (one per vertex $\v\in Z\subseteq\V$
	and per superset $\Aset'\supseteq\tilde{\Aset}$). Summing over the $2^{m+n}$
	iterations gives the time bound. 
	% The final
	% stitching loop (Line~\ref{line:alg:static:stitching strategies}) runs $2^{m}$
	% times and defines each $\static(\Aset')$ by a lookup in $\BR$, costing
	% $\mathcal{O}(|\V|\cdot 2^{m})$ in total.
	%  Since the two main loops
	% already perform $2^{m+n}$ updates of $\BR$, each touching up to $|\V|\cdot 2^m$
	% entries, this $\mathcal{O}(|\V|\cdot 2^m)$ is dominated and does not affect the
	% bound.

	For space, the algorithm retains $\leq 2^{m+n}$ distinct sub-strategies
	computed by $\Synt$ and stored in $\BR$, each of size at most $(m+1)!$,
	giving us the memory complexity bound.
\end{proof}

\section{Proofs of Claims in Section~\ref{sec:adaptive strategy}}
\label{appendix:proofs of the adaptive section}

\thmMonitorSoundness*

\begin{proof}
	Suppose $\rho = v^0v^1\ldots\in \pathsinf(\G)$ is a given path, and let
	$w^t$ and $\alpha^t$ denote the values of these variables at time $t$.
	Initially, $w^0=1$ and $\alpha^0$ is set to some arbitrary value in $(0,1)$.
	At each time step $t\geq 0$ and each $A\in \Aset$, the liveness estimate $w^t$ and the decay rate $\alpha^t$ are updated as:
	\begin{enumerate}[(a)]
		\item when $v^t\in A$, then the liveness estimate is reset to $1$, i.e.,
		$w_A^{t+1}=1$, and moreover, the decay rate gets discounted by the
		attenuation rate, i.e., $\alpha_A^{t+1} = \lambda\alpha_A^t$;\label{list
	item:assumption monitor:current state in A}
		\item otherwise, when $v^t\notin A$, the liveness estimate is discounted by $1$ minus
		the current decay rate, i.e., $w_A^{t+1} = (1-\alpha_A^t) w_A^t$, and
		the decay rate remains unchanged, i.e., $\alpha_A^{t+1}=\alpha_A^t$.\label{list item:assumption monitor:current state not in A}
	\end{enumerate}

	We explain why \eqref{list item:assumption monitor:current
	state in A} and \eqref{list item:assumption monitor:current state not in A}
	from above enforce asymptotic correctness, i.e., the parts
	\eqref{list item:assumption monitor:convergence to 1} and \eqref{list
	item:assumption monitor:convergence to 0} of Definition~\ref{def:liveness monitor}.
	
	For \eqref{list item:assumption monitor:convergence to 1}, observe that
	after the $k$-th visit to $A$, the discount factor gets reduced to
	$\lambda^k\alpha^0$. 
	In addition, since $A$ has a constant hitting time, bounded by an unknown
	constant $n$, between the $k$-th and $(k+1)$-th visits to $A$, the liveness
	estimate $w^t$ remains bounded from below by $(1-\lambda^k\alpha^0)^{n}$.
	Since $0<\lambda<1$, it follows that as $k\to \infty$,
	$(1-\lambda^k\alpha^0)^{n}\to 1$.
	This establishes Part~\eqref{list item:assumption monitor:convergence to 1}.

	On the other hand, Part~\eqref{list item:assumption monitor:convergence to
	0} follows by simply observing that after the last visit to $A$, the weight
	decays constantly, and therefore $w_A^t\to 0$ as $t\to \infty$.
\end{proof}

\lemLowerBoundConsecutiveExecution*

\begin{proof}
	We wish to lower-bound the probability that \emph{some} statically
	computed strategy is selected at $N$ consecutive steps. 	Concretely, for a suitably chosen assumption $B\in\Aset$ we follow the single
	strategy $\static(\set{B})$ associated with the singleton $\set{B}$, and bound
	the probability that it is the one selected at each of the $N$ steps.

	We fix some notation for the rest of the proof. All superscripts count steps
	relative to the present time $\tau$, so step $t$ refers to global time
	$\tau+t$; thus $w_A^t$, $\tilde{w}_{\Bset}^t$, and $w_{\Bset}^t$ denote the
	values of these quantities at time $\tau+t$. We write
	\[
		S^t \;=\; \sum_{\Bset\in 2^{\Aset}} w_{\Bset}^t
	\]
	for the normalizing sum at step $t$, so that the probability of selecting the
	strategy of a set $\Bset$ at step $t$ is $p_{\Bset}^t = w_{\Bset}^t/S^t$; in
	particular, for the tracked singleton, $p_{\set{B}}^t = w_{\set{B}}^t/S^t$.
	We record two facts about a singleton $\set{B}$. Since it is a singleton set, the minimum score  in $\set{B}$ is 	the single score $w_B^t$, and the term subtracted in $\tilde{w}_{\set{B}}^t$ is at
	most $1$, so
	\begin{equation}\label{eq:p2:anchorD}
		1+\tilde{w}_{\set{B}}^t
		= 1+w_B^t-\max\big(\max_{A\neq B}w_A^t,\ \gamma^{\tau+t}\big)\ \geq\ w_B^t,
	\end{equation}
	and, dropping $w_B^t\geq 0$,
	\begin{equation}\label{eq:p2:anchorF}
		1+\tilde{w}_{\set{B}}^t\ \geq\ 1-\max\big(\max_{A\neq B}w_A^t,\ \gamma^{\tau+t}\big).
	\end{equation}

	\smallskip
	\noindent\emph{A ratio estimate.}
	Suppose, for the tracked singleton $\set{B}$ over $N$ consecutive steps, we
	have $1+\tilde{w}_{\set{B}}^t\geq L_t$ with $L_t\in(0,2]$. 
	The condition $L_t\le 2$ holds automatically. Indeed,
	$\tilde{w}_{\set{B}}^t = w_B^t - \max(\max_{A\neq B}w_A^t,\ \gamma^{\tau+t})$,
	where the subtracted term is at least $\gamma^{\tau+t}>0$ and $w_B^t\le 1$, so
	$\tilde{w}_{\set{B}}^t < 1$; and $\tilde{w}_{\set{B}}^t > -1$ because $w_B^t>0$
	and the subtracted term is at most $1$. Hence $1+\tilde{w}_{\set{B}}^t$ lies
	strictly between $0$ and $2$.

	We now derive a lower bound for $p_{\set{B}}^t$, and using the simplification
	\[ p_{\set{B}}^t = \frac{w_{\set{B}}^{t}}{S^{t}}
	=\frac{w_{\set{B}}^{t}}{ \sum_{\mathbf{B}\in 2^{\Aset}} w_{\set{B}}^{t}}
	= \frac{1}{1+ \sum_{\mathbf{B}\neq\set{B}} \frac{
	w_{\mathbf{B}}^{t}}{w_{\set{B}}^{t}}},
	\]
	this amount to finding an upper bound for the ratio $\frac{
	w_{\mathbf{B}}^{t}}{w_{\set{B}}^{t}}$ for every $\Bset$.
	Recall $w_{\Bset}^t=(1+\tilde{w}_{\Bset}^t)/(1-\tilde{w}_{\Bset}^t)$,
	so for any  set $\Bset$ this ratio is:
	\[
		\frac{w_{\Bset}^t}{w_{\set{B}}^t}
		= \frac{1+\tilde{w}_{\Bset}^t}{1-\tilde{w}_{\Bset}^t}\cdot
		  \frac{1-\tilde{w}_{\set{B}}^t}{1+\tilde{w}_{\set{B}}^t}
		= \frac{(1+\tilde{w}_{\Bset}^t)\,(1-\tilde{w}_{\set{B}}^t)}
		       {(1-\tilde{w}_{\Bset}^t)\,(1+\tilde{w}_{\set{B}}^t)} .
	\]
	We bound the four factors separately: In the numerator, $1+\tilde{w}_{\Bset}^t\le 2$ (as
	$\tilde{w}_{\Bset}^t\le 1$) and $1-\tilde{w}_{\set{B}}^t\le 2$ (as
	$\tilde{w}_{\set{B}}^t\ge -1$). In the denominator,
	$1-\tilde{w}_{\Bset}^t\ge\gamma^{\tau+t}$ holds for every $\Bset$ (follows
	from Eqn.~\eqref{eq:definition of w tilde for nonempty sets} and \eqref{eq:definition of w tilde for empty sets}), and
	$1+\tilde{w}_{\set{B}}^t\ge L_t$ by assumption. Combining,
	\[
		\frac{w_{\Bset}^t}{w_{\set{B}}^t}
		\ \le\ \frac{2\cdot 2}{\gamma^{\tau+t}\cdot L_t}
		\ =\ \frac{4}{\gamma^{\tau+t}\,L_t}.
	\]
	There are at most $2^m$ sets $\Bset\neq\set{B}$ (actually, $2^m-1$ such
	sets, but we need an upper bound), so the sum in the denominator
	of $p_{\set{B}}^t$ is bounded by
	\[
		\Sigma\ \coloneqq\ \sum_{\Bset\neq\set{B}}\frac{w_{\Bset}^t}{w_{\set{B}}^t}
		\ \le\ 2^m\cdot\frac{4}{\gamma^{\tau+t}\,L_t}
		\ =\ \frac{2^{m+2}}{\gamma^{\tau+t}\,L_t}.
	\]
	Finally, let $U\coloneqq 2^{m+2}/(\gamma^{\tau+t}L_t)$, so that $\Sigma\le
	U$ from above. Since $\gamma^{\tau+t}<1$ and $L_t\le 2$, we have
	$\gamma^{\tau+t}L_t\le 2<2^{m+2}$, so $U>1$; combined with $\Sigma\le U$,
	this gives $1+\Sigma\le 1+U<2U$, and
	\begin{equation}\label{eq:p2:perstep}
		p_{\set{B}}^t\ =\ \frac{1}{1+\Sigma}\ >\ \frac{1}{2U}
		\ =\ \frac{\gamma^{\tau+t}\,L_t}{2^{m+3}} .
	\end{equation}
	\medskip
	We split according to whether any assumption reaches score $\theta\coloneqq\tfrac12$
	during the first $N$ steps of the window.

	\smallskip
	\noindent\textbf{Case (i): some assumption has score at least $\theta$ at some
	step in $[\tau+1,\tau+N]$.}
	This merges the situation where the top score is already at least $\theta$ at
	the start and the situation where an assumption is visited during the window
	(a visit resets a score to $1$). Let $t^*\in[1,N]$ be the first such
	step and $A'$ an assumption with $w_{A'}^{t^*}\geq\theta$. We track $\set{A'}$
	over the $N$ steps $t^*,\dots,t^*+N-1$, which lie in $[\tau+1,\tau+2N]$ as
	$t^*\leq N$. By the monitor update
	rule, at each step the score either resets to $1$ or is multiplied by
	$(1-\alpha)$ for the current decay rate $\alpha\le\alpha^0$; in both cases
	$w_{A'}^{t}\ge(1-\alpha^0)\,w_{A'}^{t-1}$ (if it resets, $w_{A'}^t=1\ge(1-\alpha^0)w_{A'}^{t-1}$;
	if it decays, $(1-\alpha)\ge(1-\alpha^0)$). So the score loses at most a factor
	$(1-\alpha^0)$ per step, and expanding from time $t^*$, where $w_{A'}^{t^*}\ge\theta$,
	\[
		w_{A'}^t\ \ge\ (1-\alpha^0)^{\,t-t^*}\,w_{A'}^{t^*}\ \ge\ \theta\,(1-\alpha^0)^{\,t-t^*}
		\qquad\text{for } t=t^*,\dots,t^*+N-1.
	\]
	Since $\set{A'}$ is a singleton, Eqn.~\eqref{eq:p2:anchorD} gives
	$1+\tilde{w}_{\set{A'}}^t\ge w_{A'}^t$, and combining with the score bound above,
	\[
		1+\tilde{w}_{\set{A'}}^t\ \ge\ w_{A'}^t\ \ge\ \theta\,(1-\alpha^0)^{\,t-t^*} \eqqcolon L_t.
	\]
	The ratio estimate can be applied :   $\set{A'}$ is the
	tracked singleton, and $L_t$ satisfies $0<L_t\le\theta=\tfrac12\le 2$, so
	Eqn.~\eqref{eq:p2:perstep} applies at each of the $N$ steps $t=t^*,\dots,t^*+N-1$
	and yields
	\[
		p_{\set{A'}}^t\ \ge\ \frac{\gamma^{\tau+t}\,L_t}{2^{m+3}}
		\ =\ \frac{\gamma^{\tau+t}\,\theta\,(1-\alpha^0)^{\,t-t^*}}{2^{m+3}}.
	\]
	
 Substituting into Eqn.~\eqref{eq:p2:perstep} and
	multiplying over these $N$ steps,
	\[
		P_N\ \geq\ \prod_{j=0}^{N-1}\frac{\gamma^{\tau+t^*+j}\,\theta\,(1-\alpha^0)^j}{2^{m+3}}
		\ \geq\ \frac{\theta^N\,(\gamma^{\tau+2N})^N\,(1-\alpha^0)^{N(N-1)/2}}{2^{(m+3)N}}
		\ =\ \frac{(\gamma^{\tau+2N})^N\,(1-\alpha^0)^{N(N-1)/2}}{2^{(m+4)N}},
	\]
	using $\theta=\tfrac12$, $\prod_{j}\gamma^{\tau+t^*+j}\geq(\gamma^{\tau+2N})^N$,
	and $\sum_{j=0}^{N-1}j=N(N-1)/2$. Since $2^{(m+4)N}\leq 2^{(m+5)N}$, this is at
	least the bound in Eqn.~\eqref{eq:probability lower bound of seeing a streak of one strategy}.

	\smallskip
	\noindent\textbf{Case (ii): every assumption has score below $\theta$ at every
	step in $[\tau+1,\tau+N]$.}
	Track $\set{B^*}$ for any fixed assumption $B^*$. Then
	$\max_{A\neq B^*}w_A^t<\theta$ for all $1\leq t\leq N$ ((here step $t$ means global time $\tau+t$), so by
	Eqn.~\eqref{eq:p2:anchorF},
	\[
		1+\tilde{w}_{\set{B^*}}^t\ \geq\ 1-\max\big(\max_{A\neq B^*}w_A^t,\ \gamma^{\tau+t}\big)
		\ \geq\ 1-\max(\theta,\gamma^{\tau+t})\eqqcolon L_t.
	\]
Then $L_t\geq 1-\gamma^1>0$ is bounded below by a constant, uniformly in $t$:
	since $\gamma^k=1/\ln(3+k)$ is decreasing in $k$, we have
	$\gamma^{\tau+t}\le\gamma^{\tau+1}\le\gamma^1$ for all $1\le t\le N$, and as
	$\theta=\tfrac12\le\gamma^1=\tfrac{1}{\ln 4}$, we get $\max(\theta,\gamma^{\tau+t})\le\gamma^1$; hence
	\[
		L_t = 1-\max(\theta,\gamma^{\tau+t})\ \geq\ 1-\gamma^1
		= 1-\tfrac{1}{\ln 4}\ \approx\ 0.28\ >\ 0 .
	\]	
 Substituting into Eqn.~\eqref{eq:p2:perstep} and multiplying over the
	first $N$ steps,
	\[
		P_N\ \geq\ \prod_{t=1}^{N}\frac{\gamma^{\tau+t}\,L_t}{2^{m+3}}
		\ \geq\ \frac{(\gamma^{\tau+2N})^N\,(1-\gamma^1)^N}{2^{(m+3)N}},
	\]
	which exceeds the bound in Eqn.~\eqref{eq:probability lower bound of seeing a streak of one strategy}
	because $(1-\gamma^1)^N\,2^{2N}=\big(4(1-\tfrac{1}{\ln
	4})\big)^N\geq 1\geq(1-\alpha^0)^{N(N-1)/2}$ (using $(m+5)-(m+3)=2$).
	
\noindent\textit{Remark.}
The threshold $\theta=\tfrac12$ is arbitrary: any fixed $\theta\in(0,1/\ln(4)]$
works, with the same two cases and the per-step estimate \eqref{eq:p2:perstep};
only the resulting constant changes (Case (i) contributes $\theta^N$ and Case
(ii) contributes $(1-\max(\theta,\gamma^1))^N$). 
\end{proof}

\lemRandomSchedulingPreservesEnvFriendliness*

\begin{proof}
	From Theorem~\ref{thm:static computation soundness}, we know that the
	statically computed strategy  corresponding to every subset $\Aset^*\in
	2^{\Aset}$ of assumptions is optimal, and therefore graceful (follows from
	Definition~\ref{def:optimal strategy for assumption subset}).
	In detail, for every fixed $\Aset^*\in 2^{\Aset}$ with $\polZ^* =
	\static(\Aset^*)$, and for every initial vertex $\v$, there exists a strategy
	$\polO^*$ of \PO, call it $\mathit{response}(\polZ^*)$, such that \textit{all} the assumptions are fulfilled in
	future, i.e., $\gbuchi(\Aset)$ is fulfilled in $\rho^*\coloneqq\pathZO(\G,\polZ^*,\polO^*,\v)$.
	Moreover, both $\polZ^*$ and $\polO^*$ are the finite-memory strategies produced
	by the same obliging-game-solving construction used in the proof of
	Theorem~\ref{thm:static computation soundness}, so, as observed there,
	$\rho^*$ is eventually periodic; consequently every $A\in\Aset$, being
	visited infinitely often on $\rho^*$, has a constant hitting time $n^*$ on
	$\rho^*$.
	Define
	\begin{multline*}
	N\coloneqq \max\Big[\bigcup_{\Aset^*\in 2^\Aset}\bigcup_{A\in \Aset}\bigcup_{\v\in \V}\big\lbrace \text{ the hitting time } n^* \text{ of } A \text{ in } \rho^* \mid\\ \polZ^* = \static(\Aset^*), \polO^*=\mathit{response}(\polZ^*),\rho^*=\pathZO(\G,\polZ^*,\polO^*,\v)\big\rbrace\Big].
	\end{multline*}
	Intuitively speaking, for any arbitrary $\Aset^*\in 2^{\Aset}$, if we show
	that on any given path produced by the adaptive strategy, there will be
	infinitely many windows of length $N$ where \PZ acts according to $\polZ^*=\static(\Aset^*)$,
	then it is sufficient for \PO to stick to the strategy
	$\mathit{response}(\polZ^*)$ all the time, and it will be guaranteed that
	all assumptions are met, i.e., $\gbuchi(\Aset)$ is fulfilled.
	This is because, even though $\mathit{response}(\polZ^*)$ may not fulfill
	$\gbuchi(\Aset)$ at other times when \PZ is \textit{not} using $\polZ^*$, it
	is guaranteed that \PO will  be ``lucky'' infinitely often with $\polZ^*$
	being executed $N$ steps in a row, and by our definition of $N$, each of
	these infinite windows will allow sufficient time for all the assumptions in
	$\Aset$ to be visited at least once.

	Divide time as an infinite sequence of blocks of $2N$ steps as follows:
	$(0,\ldots,2N-1), (2N,\ldots,4N-1),\ldots$.
	Let $E_i$ be the event that within the $i$-th block $(2iN,\ldots,
	2(i+1)N-1)$, for $i=0,1,\ldots$, any one of the static optimal strategies of \PZ
	is continuously active for $N$ steps.

	Define the filtration $(\mathcal{F}_i)_{t\geq 0}$ where $\mathcal{F}_i$ is
	the smallest $\sigma$-algebra generated by all cylinder sets defined by
	paths ending at the block $i$, i.e., paths of length $2iN$.
	From Lemma~\ref{claim:lower bound on N-step consecutive execution of any one
	strategy}, we have $\P[E_i \mid \mathcal{F}_{i-1}] \geq
	(\gamma^{2iN+2N})^N\cdot\varepsilon$ where $\varepsilon\coloneqq
	\frac{(1-\alpha^0)^{N(N-1)/2}}{2^{(m+5)N}}>0$ is a constant.
	Observe that:
	\begin{align*}
		\sum_{i=1}^\infty \P[E_i \mid \mathcal{F}_{i-1}] \geq
		\sum_{i=1}^\infty (\gamma^{2iN+2N})^N\cdot\varepsilon
		&= \varepsilon\sum_{i=1}^\infty \frac{1}{\ln(3+2iN+2N)^N},
	\end{align*}
	which we show to diverge to $\infty$.
	To see this, we use the integral test for convergence, which states that a
	given series $\sum_{i=0}^\infty f(i)$, with continuous, positive, and
	decreasing $f$, converges iff the integral
	$\int_{x=0}^{\infty}f(x)dx$ is finite~\cite[Theorem~9]{thomas2018calculus}.
	Let $u = \ln(3+2xN+2N)$, so that $e^u = 3+2xN+2N$, so that $e^udu = 2Ndx$.
	With this change of variable, we obtain:
	\[
	\int_{x=0}^{\infty} \frac{1}{\ln(3+2xN+2N)^N} dx = \frac{1}{2N}\int_{u=\ln(3+2N)}^{\infty}\frac{e^u}{u^N}du. 
	\]
	Since $e^u$ grows faster than any polynomial, and in particular $u^N$, it
	follows that the above integral evaluates to $\infty$.	
	It follows that $\sum_{i=1}^\infty \P[E_i \mid \mathcal{F}_{i-1}] =
	 \infty$, so that we can apply the L\'evy's
	extension of Borel-Cantelli lemma (Lemma~\ref{lem:borel-cantelli} in
	Appendix~\ref{appendix:results in probability}) to conclude that, $\P$-almost surely,
	$E_i$ occurs for infinitely many $i$, i.e., in infinitely many blocks some
	static strategy is continuously active for $N$ steps. Since there are only
	$2^m$ static strategies in total, by the pigeonhole principle, at least one
	of them, say $\polZ^*$, must be the one continuously active in infinitely
	many of these blocks.
	By the definition of graceful strategies, we know that \PO has a strategy
	$\polO^*$ to fulfill all assumptions in $\Aset$ against $\polZ^*$.
	Then against the adaptive strategy, \PO can continuously use $\polO^*$ and it
	is guaranteed that every time the window of size $N$ appears when $\polZ^*$
	is scheduled continuously, all the assumptions will be visited at least once.
	This proves that there exists some strategy of \PO using which all the
	assumptions can be $\P$-almost surely fulfilled infinitely often.
\end{proof}

\lemConvergenceToOne*

\begin{proof}
	For every time $t>0$, define $\mathit{visits}_t$ as the random variable
	indicating the number of visits to $A$ up to time $t$, i.e., the number of
	steps $s\le t$ with $\v^s\in A$.
	It follows from the monitor construction that $w_A^t\geq
	(1-\lambda^{\mathit{visits}_t}\alpha^0)^t$.
	Notice that $w_A^t$ is a random variable, as it contains the random variable
$\mathit{visits}_t$.
	We will show that $(1-\lambda^{\mathit{visits}_t}\alpha^0)^t\toas 1$ as
	$t\to \infty$, thereby proving the claim.
	For this, we use $\ln(1-\lambda^{\mathit{visits}_t}\alpha^0)^t =
	t\ln(1-\lambda^{\mathit{visits}_t}\alpha^0) =
	-t\lambda^{\mathit{visits}_t}\alpha^0 +
	\mathcal{O}(t(\lambda^{\mathit{visits}_t}\alpha^0)^2)$.
	Therefore, we need to show that $t\lambda^{\mathit{visits}_t}\alpha^0\toas
	0$ as $t\to \infty$; note that this alone suffices, since
	$t(\lambda^{\mathit{visits}_t}\alpha^0)^2 \le \alpha^0\,(t\lambda^{\mathit{visits}_t}\alpha^0)$
	(as $\lambda^{\mathit{visits}_t}\in(0,1]$), so
	$t(\lambda^{\mathit{visits}_t}\alpha^0)^2\toas 0$ follows automatically
	once $t\lambda^{\mathit{visits}_t}\alpha^0\toas 0$ is established.
	Intuitively, this amounts to showing that, $\P$-almost surely,
	$\mathit{visits}_t$ grows sufficiently fast so that the decay
	$\lambda^{\mathit{visits}_t}$ dominates the increase in $t$.
	We start by deriving a probabilistic lower bound on $\mathit{visits}_t$ for
	any given time $t$.
	We will use the same trick as in the proof of Lemma~\ref{claim:random
	scheduling preserves environmentally friendliness}: from
	Assumption~\ref{assump:finite-memory environment}, we infer that, assuming
	$A$ is visited infinitely often, there is a bound $N$ on the number of steps
	between two consecutive visits to $A$ against any one of the statically produced optimal strategies of \PZ.
	Furthermore, from Lemma~\ref{claim:lower bound on N-step consecutive
	execution of any one strategy}, we know that at the current time $t$ and given
	any history, the probability of any one of the static optimal strategies being
	scheduled for $N$ times in a row, within any time interval of length $2N$,
	is bounded from below by $\varepsilon_t \coloneqq
	(\gamma^{t+2N})^N\cdot\varepsilon'$ where $\varepsilon'\coloneqq
	\frac{(1-\alpha^0)^{N(N-1)/2}}{2^{(m+5)N}}>0$ is a constant.
	By Assumption~\ref{assump:finite-memory environment}, a single static
	strategy run for $N$ steps visits $A$ at least once, so an $N$-consecution
	within a block forces a visit to $A$; this is why the consecutive
	probability lower-bounds the visit probability, and in other words,
	$\varepsilon_t$ serves as the probability of seeing an $A$
	in the next block of size $2N$.
	Since $\gamma^0,\gamma^1,\ldots$ is a decreasing sequence, so is the
	sequence $\varepsilon_0,\varepsilon_1,\ldots$.
	Therefore, $\varepsilon_t$ serves as a valid lower bound on all
	$\varepsilon_i$ for $i\leq t$.
	Coming back to $\mathit{visits}_t$, by time $t$, there have been
	$n\coloneqq\lfloor t/2N\rfloor$ blocks of length $2N$, and in each of them,
	the probability of seeing at least one $A$ is at least $\varepsilon_t$,
	\emph{conditional} on the history before the block, since the blocks are
	not independent: by Lemma~\ref{claim:lower bound on N-step consecutive
	execution of any one strategy}, each block fails with conditional
	probability at most $1-\varepsilon_t$ given the past, so a fixed set of $j$
	blocks all fail with probability at most $(1-\varepsilon_t)^j$ by
	conditioning sequentially. This justifies the union bound below.
	Therefore, for every $m \leq n-2$,
	\[
		\P[\mathit{visits}_t < m+1] \leq \binom{n}{m}\cdot (1-\varepsilon_t)^{n-m}
		\leq \binom{n}{m}\cdot e^{-\varepsilon_t(n-m)},
	\]
	where the first inequality is obtained by taking the union bound over the
	probabilities of not seeing an $A$ (i.e., not seeing a contiguous block of
	length $N$ where only one static strategy is used) in any $n-m$ blocks
	(if at most $m$ blocks succeed then at least $n-m$ fail, and a union bound
	over the $\binom{n}{n-m}=\binom{n}{m}$ choices of which blocks fail gives
	the bound), and the second one uses the standard inequality $(1-x)^{r}\leq e^{-xr}$
	for all $r\geq 0$ and $0<x<1$ (which follows from $1-x\le e^{-x}$).
	For $m+1= (\ln(t))^p \leq n$ for a given $p$ with $n-m\geq t/4N$, and using
	$\binom{n}{m}\leq (en/m)^{m} \leq e^{(\ln(t))^{p+1}}$, where the first
	inequality is the standard bound $\binom{n}{m}\le(en/m)^m$ and the second one holds for \textit{sufficiently large} $t$, we obtain:
	\[
		\P[\mathit{visits}_t < (\ln(t))^p] \leq e^{(\ln(t))^{p+1} - \varepsilon'(n-m)/(\ln(3+t+2N))^N} \leq
		e^{(\ln(t))^{p+1} - \varepsilon't/(4N(\ln(3+t+2N))^N)}.
	\]
	
		Writing $c \coloneqq \varepsilon'/4N$, since as $t\to \infty$, $ct/(\ln(3+t+2N))^N$ dominates $(\ln(t))^{p+1}$, and
	furthermore $ct/(\ln(3+t+2N))^N\to \infty$, it follows that
	$\P[\mathit{visits}_t < (\ln(t))^p]$ decays super-polynomially with $t$, giving us:
	\[
		\sum_{t=1}^\infty\P[\mathit{visits}_t < (\ln(t))^p] < \infty.
	\]
	By the first Borel-Cantelli lemma (see Lemma~\ref{lem:borel-cantelli} in
	Appendix~\ref{appendix:results in probability}), we conclude that for
	sufficiently large $t$, $\P$-almost surely it will always be the case that
	$\mathit{visits}_t > (\ln(t))^p$.
	
	Taking $p=2$, we will have $\mathit{visits}_t\geq (\ln(t))^2$ for all large
	$t$, $\P$-almost surely.
	Let $\beta>0$ be the constant such that $\ln(\lambda) = -\beta$.
	Then,
	\[
		t\lambda^{\mathit{visits}_t}\alpha^0 \leq t\lambda^{(\ln(t))^2}\alpha^0
		= te^{-\beta(\ln(t))^2}\alpha^0
		= e^{\ln(t) - \beta(\ln(t))^2}\alpha^0
		= e^{-\ln(t)(\beta\ln(t)-1)}\alpha^0.
	\]
	Since both $\ln(t)\to \infty$ and $(\beta\ln(t)-1)\to \infty$ as $t\to
	\infty$, while $\alpha^0$ is a constant, we have $t\lambda^{\mathit{visits}_t}\alpha^0\to 0$, $\P$-almost surely.
	Similar argument would show that, as $t\to \infty$,
	$t(\lambda^{\mathit{visits}_t}\alpha^0)^2\to 0$, $\P$-almost surely,
	completing the proof.
\end{proof}

\thmOverallSoundness*

\begin{proof}[Proof of Theorem~\ref{thm:overall soundness}]
	We start by showing Condition~\eqref{item:def:terminally gracefully
	robust:robustness} of Definition~\ref{def:terminally gracefully robust strategies}, i.e., the adaptive strategy is undominated in every
	limit subgame, $\P$-almost surely.
	Since Assumption~\ref{assump:finite-memory environment} holds, we know that
	the assumptions $\Aset$ are partitioned into the sets $\Asetinf$ and
	$\Asetfin$, such that every strategy of \PO will fulfill
	$\grabin(\Asetinf,\Asetfin)$.
	From Lemma~\ref{claim:convergence of w to 0 for finitely visited
	assumptions}, we know that for every $A\in\Asetfin$, $w_A^t\to 0$ as $t\to
	\infty$, and from Lemma~\ref{claim:convergence of w to 1 for infinitely
	visited assumptions}, we know that for every $A\in \Asetinf$, $w_A^t\to 1$
	as $t\to \infty$, $\P$-almost surely.
	Applying this to Algorithm~\ref{alg:adaptive strategy}, as $t\to \infty$, we
	observe that $\P$-almost surely the following hold:
	\begin{itemize}
		\item $ w_{\Asetinf}^t\to \infty$:
		First observe that $\tilde{w}_{\Asetinf}^t\to 1$ as $t\to \infty$,
		because in Eqn.~\eqref{eq:definition of w tilde for nonempty sets},
		$\gamma^t\to 0$, while $\min_{A\in \Asetinf}\set{w_A^t}\to 1$ and
		$\max_{A\notin\Asetinf} \set{w_A^t}\to 0$.
		Secondly, $\tilde{w}_{\Asetinf}^t\to 1 \implies w_{\Asetinf}^t\to
		\infty$ from the definition of $w_{\Aset^*}^t$ for any $\Aset^*\in 2^\Aset$ in Eqn.~\eqref{eq:def:w for sets of assumptions}.

		\item For every $\Aset^*\in 2^{\Aset}$ with $\Aset^*\neq \Asetinf$,
	$w_{\Aset^*}^t$ remains bounded as $t\to \infty$ (in most cases $w_{\Aset^*}^t\to 0$,
	with two exceptions below where $w_{\Aset^*}^t\to 1$ instead; boundedness in
	every case is all that is needed for the argument that follows).

	\textbf{Case (i): $\Aset^*$ is neither a subset nor a superset of $\Asetinf$.}
	First observe that $\tilde{w}_{\Aset^*}^t\to -1$ as $t\to \infty$, because
	in Eqn.~\eqref{eq:definition of w tilde for nonempty sets}, $\gamma^t\to 0$,
	while $\min_{A\in \Aset^*}\set{w_A^t}\to 0$, and $\max_{A\notin\Aset^*}
	\set{w_A^t}\to 1$ (since there exists $A\in \Asetinf\neq \Aset^*$ with
	$w_A^t\to 1$).
	Secondly, $\tilde{w}_{\Aset^*}^t\to -1 \implies w_{\Aset^*}^t\to
	0$ from the definition of $w_{\Aset^*}^t$ for any $\Aset^*\in 2^\Aset$ in Eqn.~\eqref{eq:def:w for sets of assumptions}.

	\textbf{Case (ii): $\Aset^*=B$ is a nonempty proper subset of $\Asetinf$.}
	Every member of $B$ is infinitely often visited, so $\min_{a \in B}w_a^t \to 1$.
	Also, $\max_{a \notin B}w_a^t \to 1$ (since $\Asetinf\setminus B$ is
	nonempty and every one of its members is infinitely often visited), so
	\[
		\tilde{w}_{B}^t
		= \min_{a\in B}w_a^t-\max\!\Big(\max_{a\notin B}w_a^t,\gamma^t\Big)
		\longrightarrow 1-1 = 0,
		\qquad\text{so}\qquad
		w_{B}^t=\frac{1+\tilde{w}_{B}^t}{1-\tilde{w}_{B}^t}\to 1.
	\]

	\textbf{Case (iii): $\Aset^*=B$ is a superset of $\Asetinf$ containing a
	finitely visited assumption.}
	Now $B$ contains some $a'\in\Asetfin$, so $\min_{a\in B }w_a^t\to 0$;
	and everything outside $B$ lies in $\Asetfin$, so
	$\max_{a\notin B }w_a^t\to 0$. Hence
	\[
		\tilde{w}_{B}^t \longrightarrow 0-0 = 0,
		\qquad\text{so}\qquad
		w_{B}^t\to 1.
	\]

	In every case (i)--(iii), $\tilde{w}_{\Aset^*}^t\le 1-\gamma^t<1$, so
	$w_{\Aset^*}^t$ remains bounded, as claimed.
	\end{itemize}
	Combining these two, we conclude that, as $t\to \infty$, $p_{\Asetinf}^t =
	\frac{w_{\Asetinf}^t}{w_{\Asetinf}^t+\sum_{\Aset^*\neq\Asetinf}w_{\Aset^*}^t}\to
	1$ while $p_{\Aset^*}= \frac{w_{\Aset^*}^t}{w_{\Asetinf}^t+\sum_{\Aset^*\neq\Asetinf}w_{\Aset^*}^t}\to 0$ for every $\Aset^*\neq \Asetinf$.
	In other words, in the long run and $\P$-almost surely, the adaptive strategy
	will converge to the statically computed strategy $\static(\Asetinf)$.
	Since $\static(\Asetinf)$ is optimal for $\Asetinf$ and when no other
	assumptions are fulfilled (Theorem~\ref{thm:static computation soundness}), it follows that within the limit subgame that is
	reached, no other strategy would be able to fulfill a larger number of
	guarantees, i.e., it will not be dominated.

	\smallskip
	Now we proceed to show Condition~\eqref{item:def:terminally gracefully robust:gracefulness} of Definition~\ref{def:terminally gracefully robust strategies} holds.
	From Lemma~\ref{claim:random scheduling preserves environmentally
	friendliness}, we know that the adaptive strategy allows \PO to fulfill all the assumptions, $\P$-almost surely.
	Now, there are two possibilities: either \PO actually fulfills all the
	assumptions, i.e., $\Asetinf=\Aset$, or he violates some, i.e.,
	$\Asetinf\subset \Aset$.
	In the latter case, the GR(1) specification is satisfied automatically.
	In the former case, we show that all guarantees will be satisfied, almost
	surely and always eventually.

	First, from our analysis above, observe that $p_{\Aset}^t\to 1$ as $t\to
	\infty$, i.e., $\polZ^*\coloneqq\static(\Aset)$ will be deployed at each
	step with arbitrarily high probability as $t\to \infty$.
	Using the same analysis as in Lemma~\ref{claim:random scheduling preserves
	environmentally friendliness}, it follows that for every arbitrary constant
	$N$, $\polZ^*$ will be deployed continuously for $N$ steps always
	eventually and $\P$-almost surely.

	Second, since all the static optimal strategies guarantee that the winning
	region 	$W$ of the obliging game $(\G,\gr{\Aset}{\Gset},\Aset)$ is never
	left (Condition~\eqref{item:def:optimal strategy:b} of
	Definition~\ref{def:optimal strategy for assumption subset}), therefore the
	adaptive strategy never leaves $W$.

	Third, within $W$, when $\Asetinf=\Aset$, the strategy $\polZ^*$ must
	fulfill all guarantees $\Gset$, because by definition of $W$, we know that
	all guarantees can be fulfilled, and therefore if $\polZ^*$ does not
	fulfill $\Gset$, then it is dominated.

	Finally, from Assumption~\ref{assump:finite-memory environment}, we can
	derive a bound on the length of the path that contains a visit to all $A\in
	\Asetinf$ by \PO, when \PZ continuously uses $\polZ^*$, which is:
	$N=\max_{A\in \Aset} \hittingtime(\polZ^*,A)$.
	Since $\polZ^*$ itself is a finite-memory strategy, it follows that there
	exists $N'>N$ such that all guarantees are also visited within every path
	segment of length $N'$.
	As we established above, $\polZ^*$ will be scheduled for arbitrary number of
	steps always eventually and $\P$-almost surely, it follows that there will
	be infinitely many segments of length larger than $N'$ during which all
	assumptions and all guarantees will be seen at least once.
	This implies, $\P$-almost surely, the adaptive policy is winning for the
	original GR(1) specification $\gr{\Aset}{\Gset}$, and is graceful with
	respect to the entire set of assumptions $\Aset$, establishing Condition~\eqref{item:def:terminally gracefully robust:gracefulness} of Definition~\ref{def:terminally gracefully robust strategies}.
\end{proof}

\begin{remark}\label{rem:bias needed for finiteness}
	The bias $\gamma^t$ ensures that $w_{\Bset}^t$ remains \emph{finite} at
	every finite time $t$, not just in the limit. Without it, consider
	$\Bset=\Aset$ (the full set): the term $\max_{A\notin \Aset} w_A^t$ is a
	maximum over the empty set, so it would not subtract anything, leaving
	$\tilde{w}_{\Aset}^t = \min_{A\in\Aset}w_A^t$, which can equal exactly $1$
	whenever every assumption has been visited simultaneously -- for
	instance, at $t=0$, since every monitor initializes with $w^0=1$. This
	would make $w_{\Aset}^0=(1+1)/(1-1)$ undefined immediately at
	initialization. The floor $\max(\cdot,\gamma^t)$ guarantees
	$1-\tilde{w}_{\Bset}^t\geq \gamma^t>0$ for every $\Bset$ (including
	$\Bset=\Aset$), keeping $w_{\Bset}^t$ bounded by $2/\gamma^t$ at every
	finite $t$, while $\gamma^t\to 0$ still allows $w_{\Bset}^t\to\infty$ in
	the limit for whichever $\Bset$ turns out to be $\Asetinf$ -- exactly the
	divergence used in the proof of Theorem~\ref{thm:overall soundness}.
\end{remark}

\section{Generalization of Asymptotic Liveness Monitors to Recurring Events with Sub-Exponential Hitting Times}
\label{sec:appendix:monitor generalization}

We will say a symbol $X$ has \textit{sub-exponential}
hitting time in the word $\rho$ if the hitting time of $X$ in $\rho$ is
bounded by every exponential function, i.e., $g(i) = 2^{o(i)}$, where $g\colon
\mathbb{N}\to \mathbb{N}$ is a function such that for every $i\in \mathbb{N}$
with $v^i \in X$, there exists $j \leq i+g(i)$ with $v^j\in X$..
Clearly, if $X$ has constant or sub-exponential hitting time in $\rho$, then
$\rho\in \buchi(X)$, but the other direction is not true: e.g., consider the
situation when the $k$-th visit to $X$ happens at $v^{2^k}$, i.e., $X$ is
visited infinitely often but the frequency gets exponentially small over time.

\begin{theorem}[Generalization of Theorem~\ref{thm:monitor soundness} to
sub-exponential hitting times]
	Let $W \subseteq \pathsinf(\G)$ be the set of infinite paths such that for
	every $\rho\in W$, either $\rho\notin \buchi(A)$ or $A$ has a
	sub-exponential hitting time in $\rho$.
	(The exact hitting time need not be available.)
	Then the monitor $\M\colon W\to (0,1)$ implemented by Algorithm~\ref{alg:liveness monitor} is an asymptotic liveness monitor for the specification $\buchi(A)$.
\end{theorem}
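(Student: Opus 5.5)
The plan is to reuse the two-part structure of the proof of Theorem~\ref{thm:monitor soundness}. Part~\eqref{list item:assumption monitor:convergence to 0} of Definition~\ref{def:liveness monitor} does not involve hitting times, so its proof carries over verbatim. If $\rho\notin\buchi(A)$, let $T$ be the time of the last visit to $A$ (or $T=0$ if there is none). After $T$ the decay rate is frozen at some $\alpha_T\in(0,1)$, and $w^t=(1-\alpha_T)^{t-T}w^T\to 0$. All the new work is therefore in Part~\eqref{list item:assumption monitor:convergence to 1}.

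For Part~\eqref{list item:assumption monitor:convergence to 1}, let $i_1<i_2<\cdots$ be the times of the visits to $A$ on $\rho$. Right after the $k$-th visit we have $w=1$ and $\alpha=\lambda^k\alpha^0$. Until the $(k+1)$-th visit the score only decays, by at most a factor $(1-\lambda^k\alpha^0)$ per step. The gap $d_k\coloneqq i_{k+1}-i_k$ is at most $g(i_k)$. Hence, for every $t\in[i_k,i_{k+1}]$,
\[
w^t\ \ge\ (1-\lambda^k\alpha^0)^{d_k}\ \ge\ \exp\!\big(-2\,d_k\,\lambda^k\alpha^0\big)
\]
for all large $k$, using $\ln(1-x)\ge -2x$ for small $x$. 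So it suffices to show that $d_k\lambda^k\to 0$, i.e.\ that $\ln d_k - k\beta\to-\infty$ with $\beta\coloneqq-\ln\lambda>0$. Once this holds, every $t$ lies in some window $[i_k,i_{k+1}]$ with $k\to\infty$ as $t\to\infty$, and $w^t\to 1$ follows.

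The main obstacle is converting the sub-exponential hypothesis into the bound $\ln d_k=o(k)$. That bound is exactly what is needed: $\ln d_k\le\ln g(i_k)=o(i_k)$ gives what we want only if $i_k=O(k)$ or better, and this fails in general. I would first try the reading of the hitting-time bound in which $g$ is indexed by the visit number, i.e.\ $d_k\le g(k)=2^{o(k)}$. Under that reading the argument closes immediately: $\ln d_k=o(k)$, so $\ln d_k-\beta k\le -\beta k/2$ for large $k$, giving $d_k\lambda^k\to0$.

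If instead the hypothesis is meant literally in time, I would check whether the recursion $i_{k+1}\le i_k+g(i_k)$ keeps $i_k$ small enough. I expect it does not. For example, $g(i)=2^{\sqrt i}$ lets $i_k$ grow like an iterated exponential, so $\ln g(i_k)\gg k$, and the lower bound above, and possibly the claim itself, breaks. In that case I would strengthen the hypothesis to gaps of size $2^{o(k)}$ in the visit index $k$, so that the visit-indexed argument applies. Since the constant-hitting-time case used in the main text satisfies this trivially, Theorem~\ref{thm:monitor soundness} would be recovered as a special case.
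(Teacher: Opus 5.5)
Your proof is correct and is essentially the paper's proof. The paper bounds the score between the $k$-th and $(k+1)$-th visits by $(1-\lambda^k\alpha^0)^{2^{o(k)}}$ and uses that $\lambda^k$ beats $2^{o(k)}$. It therefore silently works with exactly your visit-indexed hypothesis, even though its definition writes $g(i)=2^{o(i)}$ with $i$ the \emph{time} of the visit. Under that reading, both of your parts are complete.

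Your suspicion about the literal time-indexed reading is also correct, and you can drop the ``possibly''. Between visits the score decays deterministically at the frozen rate, so $w^{i_{k+1}}=(1-\lambda^k\alpha^0)^{d_k-1}$ exactly. Hence $\lambda^k d_k\to\infty$ forces $\liminf_t w^t=0$: the statement itself fails, not just your lower bound.

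There is a counterexample even simpler than your $2^{\sqrt i}$ one, and it is the paper's own. Visits at times $2^k$ have time-indexed gaps $g(i)=i=2^{o(i)}$, yet $\lambda^k d_k=(2\lambda)^k\to\infty$ for every $\lambda>\tfrac12$. The paper calls this path non-sub-exponential, which is true only in the visit-indexed reading. That confirms your restated hypothesis (gaps $d_k\le 2^{o(k)}$ in the visit index $k$) is the intended one.

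When you restate it, also require $i<j\le i+g(i)$. As written, $j=i$ witnesses the condition, so it holds for every path.
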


\begin{proof}
	Suppose $\rho = v^0v^1\ldots\in \pathsinf(\G)$ is a given path, and let
	$w^t$ and $\alpha^t$ denote the values of these variables at time $t$.
	Initially, $w^0=1$ and $\alpha^0$ is set to some arbitrary value in $(0,1)$.
	At each time step $t\geq 0$ and for every $A\in \Aset$, the liveness estimate $w^t$ and the decay rate $\alpha^t$ are updated as
follows:
	\begin{enumerate}[(a)]
		\item when $v^t\in A$, then the liveness estimate is reset to $1$, i.e.,
		$w_A^{t+1}=1$, and moreover, the decay rate gets discounted by the
		attenuation rate, i.e., $\alpha_A^{t+1} = \lambda\alpha_A^t$;\label{list
	item:assumption monitor:current state in A:appendix}
		\item otherwise, when $v^t\notin A$, the liveness estimate is discounted by $1$ minus
		the current decay rate, i.e., $w_A^{t+1} = (1-\alpha_A^t) w_A^t$, and
		the decay rate remains unchanged, i.e., $\alpha_A^{t+1}=\alpha_A^t$.\label{list item:assumption monitor:current state not in A:appendix}
	\end{enumerate}

	We explain why \eqref{list item:assumption monitor:current
	state in A:appendix} and \eqref{list item:assumption monitor:current state not in A:appendix}
	from above enforce asymptotic correctness, i.e., the parts
	\eqref{list item:assumption monitor:convergence to 1} and \eqref{list
	item:assumption monitor:convergence to 0} of Definition~\ref{def:liveness monitor}.
	
	For \eqref{list item:assumption monitor:convergence to 1}, observe that
	after the $k$-th visit to $A$, the discount factor gets reduced to
	$\lambda^k\alpha^0$. 
	In addition, since $A$ has a sub-exponential hitting time, bounded by
	$2^{o(k)}$, between the $k$-th and $(k+1)$-th visits to $A$, the liveness
	estimate $w^t$ remains bounded from below by $(1-\lambda^k\alpha^0)^{2^{o(k)}}$.
	We show that as $A$ is visited infinitely many times, i.e., as $k\to
	\infty$, we achieve $(1-\lambda^k\alpha^0)^{g(k)}\to 1$ :
	Firstly, for $-1\leq x<1$, using the Maclaurin series, we know that
	$\log(1-x) = -x+O(x^2)$~\cite[Chapter~11.10]{stewart2021calculus}.
	Since $-1\leq \lambda^k\alpha^0<1$, we obtain:
	\begin{align*}
		\log[(1-\lambda^k\alpha^0)^{2^{o(k)}}] 
		&= 2^{o(k)}[-\lambda^k\alpha^0 + O(\lambda^{2k}(\alpha^0)^2)].
	\end{align*}
	Since $\lambda^k$ dominates the sub-exponential term $2^{o(k)}$ as $k$
	grows, and since $0 < \lambda <1$, it follows that, as $k\to \infty$, we
	obtain $2^{o(k)}\lambda^k\alpha^0\to 0$ which implies
	$\log[(1-\lambda^k\alpha^0)^{2^{o(k)}}]\to 0$.
	From this the Part~\eqref{list item:assumption monitor:convergence to 1} is established.

	On the other hand, Part~\eqref{list item:assumption monitor:convergence to
	0} follows by simply observing that after the last visit to $A$, the weight
	decays constantly, and therefore $w_A^t\to 0$ as $t\to \infty$.
\end{proof}

% \section{Illustration of the liveness monitors and the time-varying
% distributions computed by adaptive strategies}\label{appendix:timing diagram}

% \input{FIGURES/matching_pennies_policy_illustration}

\section{Known results in probability theory}\label{appendix:results in probability}
\renewcommand{\Pr}{\mathbf{Pr}}
\newcommand{\Exp}{\mathbf{E}}

\begin{lemma}[The Borel-Cantelli lemmas~{\cite[Chapter~10]{shiryaev1996graduate}}]\label{lem:borel-cantelli}
	Let $E_1,E_2,\ldots$ be an infinite sequence of events in some probability
	space.
  \begin{description}
    \item[First lemma:] If $\sum_{n=1}^\infty \Pr[E_n] < \infty$, then the
    probability that infinitely many events occur is $0$, i.e.,
    $\Pr[\limsup_{n\to \infty} E_n] = 0$.
    \item[Second lemma:] If the events are all independent and
    $\sum_{n=1}^\infty\Pr[E_n]=\infty$, then $\Pr$-almost surely infinitely many
    of them occur, i.e., $\Pr[\limsup_{n\to \infty} E_n]=1$.
    \item[L\'evy's extension:] Suppose there exists a filtration
    $(\mathcal{F}_i)_{i\geq 1}$ such that $E_i\in \mathcal{F}_i$ for each $i$.
    If $\sum_{n=1}^\infty\Pr[E_n|\mathcal{F}_{n-1}]=\infty$, then $\Pr$-almost
    surely infinitely many of them occur, i.e., $\Pr[\limsup_{n\to \infty} E_n]=1$.
  \end{description}
\end{lemma}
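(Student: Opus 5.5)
Since this lemma collects standard facts, I would prove the three parts separately and in increasing order of difficulty. For the \emph{first lemma}, I would write $\limsup_{n} E_n=\bigcap_{N\ge 1}\bigcup_{n\ge N}E_n$. Then for every $N$,
\[
\Pr[\limsup_n E_n]\le \Pr\Big[\bigcup_{n\ge N}E_n\Big]\le \sum_{n\ge N}\Pr[E_n],
\]
and the right-hand side is the tail of a convergent series, so it tends to $0$ as $N\to\infty$.

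For the \emph{second lemma}, I would pass to complements. It suffices to show $\Pr[\bigcap_{n=N}^{M}\compl{E_n}]\to 0$ as $M\to\infty$ for each fixed $N$, and then take a countable union over $N$. By independence this probability equals $\prod_{n=N}^{M}(1-\Pr[E_n])$. Using $1-x\le e^{-x}$, it is at most $\exp\big(-\sum_{n=N}^{M}\Pr[E_n]\big)$, which tends to $0$ because the series diverges.

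For \emph{L\'evy's extension}, independence is no longer available, so I would use a martingale argument. Set $p_i\coloneqq\Pr[E_i\mid\mathcal{F}_{i-1}]$ and define
\[
M_n\coloneqq\sum_{i=1}^{n}\big(\mathds{1}_{E_i}-p_i\big).
\]
Since $E_i\in\mathcal{F}_i$, $(M_n)$ is an $(\mathcal{F}_n)$-martingale, and its increments are bounded by $1$ in absolute value. I would then invoke the standard dichotomy for martingales with bounded increments: almost surely, either $M_n$ converges to a finite limit, or $\limsup_n M_n=+\infty$ and $\liminf_n M_n=-\infty$.

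To prove this dichotomy, fix $c>0$ and introduce the stopping time $T_c\coloneqq\inf\{n: M_n>c\}$. Bounded increments give $M_{n\wedge T_c}\le c+1$, so the stopped process is a supermartingale bounded above. By the martingale convergence theorem it converges a.s. Taking the union over $c\in\mathbb{N}$ shows that $M_n$ converges a.s.\ on the event $\{\sup_n M_n<\infty\}$. A symmetric argument handles $\{\inf_n M_n>-\infty\}$.

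Now consider the event that only finitely many $E_i$ occur. On this event $\sum_i\mathds{1}_{E_i}<\infty$, while $\sum_i p_i=\infty$ by hypothesis, so $M_n\to-\infty$. This outcome is excluded by the dichotomy except on a null set, hence $\Pr[\limsup_n E_n]=1$. If the hypothesis is read as holding only on an event, the same argument shows that $\{\sum_i p_i=\infty\}\subseteq\limsup_n E_n$ up to a null set.

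I expect the main obstacle to be the bounded-increment dichotomy, specifically the stopping-time truncation that turns a one-sided bound into a.s.\ convergence. If convenient, I would simply cite this step from~\cite[Chapter~10]{shiryaev1996graduate}, as the lemma's attribution suggests.
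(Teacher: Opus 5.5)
Your proof is correct. The paper itself gives no argument for this lemma: it states the three facts and cites Shiryaev, so your self-contained proof is a different route only in that sense.

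Your first two parts are the textbook arguments. The first uses the tail of a convergent series. The second uses complements, independence, and $1-x\le e^{-x}$. For L\'evy's extension you take the standard route through the compensated martingale $M_n=\sum_{i\le n}(\mathds{1}_{E_i}-p_i)$ and the dichotomy for martingales with bounded increments.

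What your version adds over the bare citation is clarity about the hypothesis. The condition $\sum_n \Pr[E_n\mid\mathcal{F}_{n-1}]=\infty$ is a statement about random variables. Your argument shows that almost-sure divergence is what is needed, or, in event-wise form, that $\{\sum_i p_i=\infty\}\subseteq\limsup_n E_n$ up to a null set. That is exactly the form available where the paper applies the lemma: there the divergence follows from a deterministic lower bound on every conditional probability, so it holds everywhere.

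Two small repairs are needed.

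\begin{itemize}
\item A supermartingale bounded above need not converge; take $X_n=-n$. The convergence step should instead use that the stopped process is a \emph{martingale} bounded above by $c+1$. Then $c+1-M_{n\wedge T_c}$ is a nonnegative martingale, which converges almost surely to a finite limit.
\item The statement indexes the filtration from $i\ge 1$, but the sum uses $\mathcal{F}_0$. You should take $\mathcal{F}_0$ to be the trivial $\sigma$-algebra, so that $p_1=\Pr[E_1]$ and $M_n$ is well defined.
\end{itemize}
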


\section{Experiment Benchmarks Description}\label{appendix:experiments}

\paragraph{Dining Philosophers} The dining philosophers problem is a classic concurrency problem that illustrates synchronization issues and resource sharing among multiple processes. In this problem, $n$ philosophers sit around a dining table, and each philosopher alternates between thinking and eating. To eat, a philosopher needs to pick up two forks placed between them and their neighbors. We characterize the problem with the following components:
\begin{itemize}
	\item \textbf{Environment Assumptions:} Each philosopher can either be thinking or eating. The environment ensures that a philosopher can only eat if both adjacent forks are available.
	\item \textbf{System Guarantees:} The system guarantees that each philosopher will eventually be able to eat, ensuring that no philosopher starves. Additionally, the system ensures that no two adjacent philosophers eat simultaneously, preventing conflicts over shared forks.
\end{itemize}
In the instance used in the paper, there are 54 states, 114 edges, 3 assumptions, 3 guarantees.

\paragraph{Lift Controller}
The lift controller benchmark models an $n$-floor elevator that serves requests from multiple floors. At each step, the environment may issue requests from any subset of floors, while the system controls the elevator, which may move up one floor, down one floor, or remain at its current position. Requests remain pending until the elevator reaches the corresponding floor, at which point they are cleared. We characterize the problem as a GR(1) specification with the following components:
\begin{itemize}
    \item \textbf{Environment Assumptions:} Each floor generates requests infinitely often, ensuring that every floor continues to issue service requests throughout the execution.
    \item \textbf{System Guarantees:} The system guarantees that every floor is visited infinitely often while a request from that floor is pending, ensuring that every recurring request is eventually served.
\end{itemize}
In the instances used in the paper, for n=2, 16 states, 34 edges, 2 assumptions, 2 guarantees; for n=3, 48 states, 66 edges, 3 assumptions, 3 guarantees.

\paragraph{Network Routing Arena}
The network routing benchmark models a packet-processing controller operating over a communication network. The arena is organized into several interconnected regions representing different stages of packet handling, including exploration, authentication, routing decisions, memory updates, recovery, and idle behavior. At each step, the environment controls packet-related events, such as the arrival of new packets and their authentication status, while the system selects routing decisions, updates its internal memory, and decides whether packets should be forwarded or dropped. The arena is shared across multiple benchmark instances, with different GR(1) specifications obtained by varying the environment assumptions and system guarantees. We characterize the problem as a GR(1) specification with the following components:
\begin{itemize}
\item \textbf{Environment Assumptions:} The environment specifies fairness assumptions over packet arrivals and packet attributes (e.g., authenticated or guest packets), ensuring that relevant events occur infinitely often according to the benchmark instance.
\item \textbf{System Guarantees:} The system specifies liveness objectives over routing decisions and memory updates, requiring that forwarding, dropping, or memory-related actions occur infinitely often or in response to recurring environment events, depending on the benchmark instance.
\end{itemize}
For all the instances used in the experimentation section, arena has 36 vertices, 90 edges, and the number of assumptions and guarantees vary from 2 to 3.

\paragraph{Generalized Buffer}
The generalized buffer benchmark models a collection of $n$ independent producer-consumer channels, each consisting of a single-slot buffer. The environment acts as the producer and may deposit data into any subset of empty buffers, while the system acts as the consumer and may consume data from any subset of occupied buffers. A buffer can only be consumed if it currently contains data, ensuring that invalid consumption actions are disallowed. We characterize the problem as a GR(1) game with the following components:
\begin{itemize}
    \item \textbf{Environment Assumptions:} Each producer generates data infinitely often, ensuring that every buffer is filled infinitely often.
    \item \textbf{System Guarantees:} The system guarantees that each buffer is emptied infinitely often by consuming the data it contains, ensuring that no buffer remains permanently occupied.
\end{itemize}
In the instances used in the paper, for n=2, 12 states, 29 edges, 2 assumptions, 2 guarantees; for n=3, 16 states, 54 edges, 3 assumptions, 3 guarantees.

\paragraph{Process Scheduler}
The process scheduler benchmark models a system responsible for allocating CPU time to one of $n$ processes. At each step, the environment specifies the set of processes requesting execution, while the system selects at most one process to schedule. We characterize the problem as a GR(1) specification with the following components:
\begin{itemize}
    \item \textbf{Environment Assumptions:} Each process requests execution infinitely often, ensuring that every process continues to compete for processor time throughout the execution.
    \item \textbf{System Guarantees:} The system guarantees that every process is scheduled infinitely often while it has a pending request, ensuring that no requesting process is starved.
\end{itemize}
In the instances used in the paper, for n=2, 16 states, 60 edges, 2 assumptions, 2 guarantees; for n=3, 40 states, 288 edges, 3 assumptions, 3 guarantees.

\paragraph{Bus Arbiter}
The bus arbiter benchmark models a shared communication bus accessed by $n$ competing processes. At each step, the environment specifies the set of processes requesting access to the bus, while the system selects at most one process to receive the grant. Mutual exclusion is enforced by ensuring that no two processes are granted access simultaneously. We characterize the problem as a GR(1) specification with the following components:
\begin{itemize}
    \item \textbf{Environment Assumptions:} Each process requests access to the shared bus infinitely often.
    \item \textbf{System Guarantees:} The system guarantees that every process is granted access to the bus infinitely often, ensuring that no process is permanently denied service.
\end{itemize}
In the instance used in the paper, for n=2 16 vertices, 61 edges, 2 assumptions, 2 guarantees.

% \AS{not completed yet}

\end{document}
\endinput

%%
%% End of file `sample-acmsmall-submission.tex'.